
\documentclass[12pt,a4wide]{article}

\usepackage[authoryear]{natbib}
\usepackage{graphicx,color,enumerate,amssymb}
\usepackage{amsmath}
\usepackage{amssymb}
\usepackage{enumerate}
\usepackage{subcaption}
\usepackage{amsthm}
\usepackage{xcolor}
\usepackage{mathabx}
\usepackage{array,multirow}

\newcommand*\rot{\rotatebox{90}}

\setlength\oddsidemargin{0.2in}%
\setlength\evensidemargin{0.2in}%
\setlength\textwidth{15.5cm}%
\setlength\topmargin{0pt}%
\setlength\textheight{22.2cm}
\setlength{\hoffset}{-0.2cm}
\setlength{\voffset}{-1cm}

\makeatletter
\def\R{{\rm I\kern-.2em R}}
\makeatother

\newtheorem{thm}{Theorem}[section]
\newtheorem{lemma}[thm]{Lemma}
\newtheorem{remark}[thm]{Remark}

\newtheorem{assump}{Assumption}

\newtheorem{assumpT}{Assumption}

\newcommand{\eps}{\varepsilon}

\newcommand{\bb}[1]{\boldsymbol{#1}}

\newcommand{\wtlambda}{\widetilde{\lambda}}


\newcommand{\E}{\mathbb{E}\,}
\newcommand{\tth}{\boldsymbol{\theta}}
\newcommand{\tTheta}{\boldsymbol{\Theta}}
\newcommand{\tX}{\bb{X}}
\newcommand{\tZ}{\bb{Z}}
\newcommand{\tx}{\bb{x}}

\newcommand{\tlam}{\bb \lambda}

\newcommand{\tgamma}{\bb \gamma}

\newcommand{\Pto}{\stackrel{\mathsf{P}}{\rightarrow}}
\newcommand{\asto}{\stackrel{\mathsf{a.s.}}{\rightarrow}}
\newcommand{\Dto}{\stackrel{\mathsf{D}}{\rightarrow}}


\usepackage{setspace} \onehalfspacing

\begin{document}
\begin{center}
{ \LARGE\sc Specifications tests for count time series models with covariates \\
} 
\vspace*{1cm} 

{ \large\sc \v{S}\'arka Hudecov\'a$^{\rm{a}}$, \large\sc Marie Hu\v{s}kov\'a$^{\rm{b}}$, \large\sc Simos G. Meintanis}$^{\rm{c,d}}$, \\ \vspace*{0.5cm}
\end{center}

{\small 
\noindent$^{\rm{a}}$ Department of Probability and Mathematical Statistics, Charles University, Prague, Czech Republic, email: hudecova@karlin.mff.cuni.cz\\[2ex]
$^{\rm{b}}$   Department of Probability and Mathematical Statistics, Charles University, Prague, Czech Republic, email: huskova@karlin.mff.cuni.cz \\[2ex]
$^{\rm{c}}$  Department of Economics, National and Kapodistrian University
of Athens, Athens, Greece, email: simosmei@econ.uoa.gr \\
$^{\rm{d}}$ Pure and Applied Analytics, North--West University, Potchefstroom, South Africa \\

}

%
%
%
%
%
%
%
%

\vskip5mm

\noindent {\small {\bf Abstract.} We propose a goodness--of--fit test for a class of  count time series models with covariates which includes  the Poisson autoregressive model with covariates (PARX) as a special case. The test criteria are derived from a specific characterization for the conditional probability generating function and the test statistic is formulated as a $L_2$ weighting norm of the corresponding sample counterpart. The asymptotic properties of the proposed test statistic are provided under the null hypothesis as well as under specific alternatives. A bootstrap version of the test is explored in a Monte--Carlo study and illustrated on a real data set on road safety.} \vspace*{0.3cm}

\noindent{\small {\it Keywords.} Count time series with covariates; Poisson autoregression; PARX; INGARCH-X; Goodness--of--fit test; Probability generating function;  Bootstrap test} 

\newpage

\section{Introduction}
Let $\{Y_t\}$ be a time series of counts, i.e. a time series of variables taking values in $\mathbb{N}_0$, and let  $\{\mathbf{X}_t\}$ be a $m$-dimensional vector time series of exogenous covariates. Denote as ${\cal{I}}_t$ the information set available up to time $t$ and   let
$\{\bb Z_t\}$ be a time series of $d$-dimensional random vectors such that $\bb Z_t$  may contain past values of both endogenous as well as exogenous covariates.
Our aim is to develop procedures for parametric specifications of the conditional distribution of $Y_t$ given $\mathcal{I}_{t-1}$, i.e. we are interested in testing the null hypothesis
\begin{equation}\label{null}
	{\cal{H}}_0: Y_t|{\cal {I}}_{t-1} \sim {\cal {F}}_{\lambda_t}, \quad \lambda_t = h(\bb Z_t;\tth),
\end{equation} 
where ${\cal{F}}_{\lambda}$ denotes a fixed  count distribution with a mean $\lambda$, 
 $h$ is a specified function and  $\tth$ is an unspecified vector of parameters.
Notice that the manner in which $\lambda_t$ depends on the covariate vector $\bb Z_t$ is  fixed under ${\cal{H}}_0$ apart from a finite-dimensional parameter $\tth$ which must be estimated from the data.

We will restrict ourselves to a class of distributions $\{\mathcal{F}_{\lambda}\}$ such that  
$\lambda>0$ is the mean of ${\cal {F}}_{\lambda}$ and  the family $\{\mathcal{F}_{\lambda}\} $ satisfies 
\begin{equation}\label{eq:SOC}
\lambda_1\leq \lambda_2 \Rightarrow {\cal {F}}_{\lambda_1}(y)\geq {\cal {F}}_{\lambda_2}(y) \quad \text{ for all } y\in\R.
\end{equation}
The condition \eqref{eq:SOC}  holds, for instance, for Poisson distribution with mean $\lambda$, for Negative Binomial distribution with mean $\lambda$ and fixed $\phi$, or for a zero--inflated Poisson distribution. 
\cite{francq21} provide conditions for existence of a stationary and ergodic solution to \eqref{null} with a conditional distribution satisfying \eqref{eq:SOC}. 

 If  ${\cal{F}}_\lambda$ denotes the Poisson distribution with mean $\lambda$ and if no exogenous variables are considered
then  \eqref{null} corresponds to  the INGARCH model  of \citep{FRT09}. 
An extension to this model, abbreviated sometimes as PARX, incorporating also exogenous variables was proposed and studied by  \cite{agosto16}, and
the  model has already found a number of important extensions and applications in various fields. 
In finance, \cite{agosto16} used PARX to model US corporate default counts, while
\cite{agosto2019} analyzed Italian corporate default counts data in the real estate sector. 
Other applications are in sports \citep{angelini17}, and disease modelling \citep{agosto2020}.
Alternative specifications for $\mathcal{F}_{\lambda}$ include the Negative Binomial (NB) distribution 
\citep{christou,zhu,fokianos17}, and the zero--inflated Poisson distribution \citep{francq21}, among others.

The problem of testing goodness-of-fit within INGARCH models has been considered by many authors, for instance \citep{fokianos2013,weiss+schweer2015, schweer2016, HHM-gof, HHM-mult} to mention some of them.   We propose a method for testing $\mathcal{H}_0$ for an arbitrary family of conditional distributions satisfying \eqref{eq:SOC}, a possibly non-linear function $h$ and a setup with external covariates. This, of course, involves 
testing  PARX and INGARCH specifications as special cases. Our test is based on a characterization of the conditional mean by \cite{bierens}, which has ben successfully used in construction of various specification tests,  e.g.,  see \citep{escanciano,hlavka2017} and references therein. In contrast to these works, we reformulate the characterization 
 in terms of the conditional probability generating function (PGF) of $Y_t$ given $\bb Z_t$.
PGF--based procedures are specifically tailored for count responses and are often  convenient from the computational point of view. Earlier PGF-based methods, suggested by \cite{HHM-gof}, \cite{schweer2016} and \cite{HHM-mult}, are suitable for an analysis of time series without exogenous covariates. In contrast to this, the new tests suggested herein, although still PGF--based,  adopt an alternative formulation which allows them to be readily applied  to more complicated dynamic models.

 The rest of the paper is outlined as follows. Section~\ref{sec:model} introduces the considered model. The test statistic is proposed in Section~\ref{sec:gen}, while its asymptotic properties are studied in Section \ref{sec:as}.
 Section \ref{sec:boot} describes the practical computation of the test and presents a bootstrap version. 
  An application to the 
Poisson autoregression with covariates (PARX) is discussed in Section \ref{sec:PARX}.  Section \ref{sec:simul} presents  results of a Monte Carlo study. A real data application of road safety is included in Section  \ref{sec:data}. We finally conclude with discussion of our finding and further outlook in  Section \ref{sec:concl}. Proofs are  postponed to Appendix. 

\section{Model specification}\label{sec:model}

Let $p\geq 1$ and $q\geq 0$ be given integers.
Let $\{\mathcal{F}_{\lambda}, \lambda \in \Omega \subset (0,\infty)\}$ be some family of discrete distributions on $\mathbb{N}_0$ indexed by a mean parameter   $\lambda\in(0,\infty)$ and satisfying~\eqref{eq:SOC}.  Consider  a   model
\begin{equation}\label{rate2}
Y_t|{\cal {I}}_{t-1} \sim \mathcal{F}_{\lambda_t}, \quad  \lambda_t=h(\bb Z_t;\tth),
\end{equation}
where $\tth\in \boldsymbol{\Theta}\subset \R^K$ is a vector of unknown model parameters,
 \begin{equation}\label{eq:Zt}
  \bb Z_t = (Y_{t-1},\dots,Y_{t-p},\lambda_{t-1},\dots,\lambda_{t-q}, \bb X_{t-1}^\top)^\top,
  \end{equation} 
 is a vector of dimension $d=p+q+m$,  which contains past values of both endogenous as well as exogenous covariates, and
  $h:[0,\infty)^{p+q}\times \R^{m+K}\to(0,\infty)$ is a  link function such that
\begin{equation}\label{eq:hNONLIN}
h(\bb z;\tth)=h(\bb y, \bb l,\bb x; \tth)= r(\bb y,\bb l;\tth)+\pi(\bb x;\tth) 
\end{equation}
for $\bb z\in[0,\infty)^{p+q}\times\R^{m}$ such that $\bb z=(\bb y^\top,\bb l^\top,\bb x^\top)^\top$, $ \bb y\in[0,\infty)^p, \bb l\in[0,\infty)^q, \bb x \in\R^m$ and for some non-negative functions $r:[0,\infty)^{p+q}\times \R^K\to (0,\infty)$ and $\pi:\R^{m+K}\to(0,\infty)$. This means that the conditional mean of $Y_t$ given $\mathcal{I}_{t-1}$ is specified as
\begin{equation*} 
\E[Y_t|\mathcal{I}_{t-1}]= \lambda_t = r(Y_{t-1},\dots,Y_{t-p},\lambda_{t-1},\dots,\lambda_{t-q};\tth)+\pi(\bb X_{t-1};\tth).
 \end{equation*}
Notice that the random vector $\bb Z_t$ implicitly depends on the value of the model parameter $\tth$, but we do not stress this dependence if not necessary.  If $q=0$ then the vector $\bb Z_t$ contains  only past values of $Y_t$ and $\bb X_t$,  and the conditional mean is specified using a (nonlinear) autoregressive model with exogenous covariates (ARX).

If the exogenous covariate process $\{\bb X_t\}$ is stationary and ergodic, 
then  there exists a stationary and ergodic  solution $\{Y_t\}$ to the model \eqref{rate2} with the link function $h$ given by \eqref{eq:hNONLIN}  provided that 
the function $r$ satisfies  
\begin{equation}\label{eq:Lipsch}
|r(\bb y_1,\bb \lambda_1;\tth) - r(\bb y_2, \bb \lambda_2;\tth)|\leq \sum_{i=1}^p \alpha_i|y_{1i}-{y}_{2i}| + \sum_{j=1}^q \beta_j |\lambda_{1j} - {\lambda}_{2j}|
\end{equation}
for all $\bb y_i= (y_{1i},\dots,y_{1p})^\top\in[0,\infty)^p$, $\bb \lambda_i = (\lambda_{i1},\dots,\lambda_{iq})^\top\in[0,\infty)^q$, $i=1,2$,
and 
for some $\alpha_i,\beta_j$, $i=1,\dots,p$, $j=1,\dots,q$  such that  
\begin{equation}\label{eq:alphabeta}
 \sum_{i=1}^p \alpha_i + \sum_{j=1}^q \beta_j <1,
\end{equation}
see \citep{francq21}.

Model \eqref{eq:hNONLIN} includes as a special case a linear model,  which can be for non-negative components of $\tX_t$ formulated as \eqref{rate2} for a link function
  \begin{equation}\label{eq:hLIN}
h(\bb y, \bb l,\bb x; \tth)= \omega+\bb \alpha^\top\bb y + \bb \beta^\top \bb l + \bb \pi^\top \bb x
  \end{equation}
 for $\tth=(\omega,\bb \alpha^\top,\bb \beta^\top,\bb\pi^\top)^\top$ with $\omega>0$, $\bb\alpha\in[0,1]^p$, $\bb \beta\in[0,1]^q$ and $\bb \pi \in (0,\infty)^m$. The condition for existence of a stationary and ergodic solution then reduces to \eqref{eq:alphabeta}. In that case
  \begin{equation}\label{eq:LIN}
\E[Y_t|\mathcal{I}_{t-1}]= \lambda_t = \omega+\sum_{i=1}^p\alpha_i Y_{t-i} + \sum_{j=1}^ q \beta_j\lambda_{t-j} + \bb \pi^\top \tX_{t-1}.
\end{equation}

\section{Goodness-of-fit test}\label{sec:gen}

 Let $\{(Y_t,\bb X_{t-1})\}_{t=1}^T$ be  observed data. We develop test for the null hypothesis
\begin{align*} 
\mathcal{H}_0: & \{(Y_t,\bb X_{t-1})\}_{t=1}^T \text{ follow model } \eqref{rate2} \text{ with }
\text{a specified class of distributions } \mathcal{F}_{\lambda}, \notag \\ 
& \text{ and
specified function }  h(\cdot;\bb \theta_0)
   \text{ for some  inner point } \bb \theta_0 \in \bb \Theta_0,
\end{align*}
against a general alternative  ${\cal{H}}_1: \text{not} \ 	{\cal{H}}_0$, where $\bb \Theta_0$ stands for a compact subset of the parametric space $\bb \Theta$. 
 The model orders $(p,q)$ are implicitly specified by the function $h$, so they are assumed to be known under the null hypothesis.

Our test utilizes the characterization of \cite{bierens} which for a random variable $U$ and a $d$-dimensional random vector $\bb W$ may be formulated  via the following equivalence relation
\begin{equation}\label{Bierens}
\E[U|{\bb{W}}]=\mu_0({\bb{W}}) \Longleftrightarrow \E[\{U-\mu_0({\bb{W}})\}e^{{\texttt{i}} \bb v^\top \bb W}]=0,  \ \forall \bb v\in \R^d,
\end{equation}
where ${\texttt{i}}=\sqrt{-1}$ and $\mu_0(\cdot)$ denotes a specific regression function possibly involving an unknown parameter vector.

 In the current context, we  rephrase  (\ref{Bierens}) for model \eqref{rate2}  in terms of the conditional  probability generating function (PGF). 
 Recall that a PGF of a count variable $Y$ is defined as $g_Y(u):=\mathbb E(u^Y)$ for $u\in\R$ for which the expectation is finite. 
The PGF is always well defined  for $u\in[0,1]$ and 
the distribution of $Y$ is uniquely determined from the values of $g_Y$ from a neighborhood of the origin. 
 Let $g_{\lambda}(u)$ be the PGF corresponding to ${\cal F}_{\lambda}$.
Under $\mathcal{H}_0$ it holds that 
\[
\E[u^{Y_t}|{\cal {I}}_{t-1}]=\E[u^{Y_t}| \bb Z_t]=g_{\lambda_t}(u) =g_{h(\bb Z_t;\tth_0)}(u)  , \quad u\in[0,1].
\]
Hence, 
the characterization featuring in (\ref{Bierens}) can be under ${\cal{H}}_0$ reformulated 
as
\begin{equation} \label{null1}
\E[\{u^{Y_t}-g_{\lambda_t}(u)\}e^{{\texttt{i}} \bb v^\top \bb Z_t}]=0, \forall u\in(0,1), \bb v\in \R^d, \quad \lambda_t= h(\bb Z_t;\tth_0).
\end{equation}
 Our test procedure will be based on a weighted $L_2$ norm of an estimator of the expectation  in~\eqref{null1}.
Let  $\widehat {\tth}:=\widehat {\tth}_T$ be a suitable estimator of the model parameter $\tth$ and define 
 \begin{equation} \label{eq:EPS}
	\widehat \varepsilon_t(u)=u^{Y_{t}}- g_{\widehat{\lambda}_t}(u), \ t=1,...,T, \end{equation}
where  $\widehat{\lambda}_{t}$ are for $t=1,\dots,T$,  defined recursively as
$
\widehat {\lambda}_t= h(\widehat{\bb Z}_t;\widehat{\tth}), 
$
with  
$\widehat{\bb Z}_t $ being an estimated version of $\bb Z_t$, computed from some initial values for $Y_t$,  $\widehat{\lambda_t}$, and $\bb X_t$ for $t\leq 0$, i.e. 
$
 \widehat{\bb Z}_t=({Y}_{t-1},\dots,{Y
 }_{t-p},  \widehat{\lambda}_{t-1},\dots,\widehat{\lambda}_{t-q},\boldsymbol X^{\top}_{t-1})^\top
$ for $t\geq p+1$. 
 Let $W:\R^{d+1}\to (0,\infty)$ be a nonnegative weight function. Define the test statistic
\begin{equation} \label{teststCM-gen}
\Delta_{T,W}= T \int_{0}^1 \int_{\R^d} \left| D_T(u;\bb v) \right|^2
W(u,\bb v) {\rm{d}}\bb v {\rm{d}}u,
\end{equation}
where
\begin{equation*} 
D_T(u;\bb v)=\frac{1}{T} \sum_{t=1}^T {\widehat {\varepsilon}}_t(u)e^{{\texttt{i}} \bb v^\top \widehat{\bb Z}_t}, \ (u,\bb v)\in (0,1) \times \R^d.
 \end{equation*}
The test statistic $\Delta_{T,W}$   in (\ref{teststCM-gen}) is an $L_2$--type distance statistic which, at least for large sample size $T$, should be small under the null hypothesis ${\cal {H}}_0$ and large under alternatives. Therefore large values of the test statistic indicate that the null hypothesis is violated.

The estimator $\widehat{\tth}$ can be either a maximum likelihood estimator (MLE), or, more generally, a Poisson quasimaximum likelihood estimator (QMLE). The latter method has been studied in \cite{ahmad} and \cite{francq21}, where it is shown that the resulting estimator is consistent and asymptotically normal under mild regularity conditions.

For practical computation of $\Delta_{T,W}$, it is useful to notice that 
\begin{equation} \label{Dnsum}
|D_T(u;\bb v)|^2=\frac{1}{T^2} \sum_{t,s=1}^T  {\widehat {\varepsilon}}_t(u){\widehat {\varepsilon}}_s(u)\cos\left(\bb v^\top (\bb {\widehat {Z}}_t-\bb {\widehat {Z}}_s)\right).
\end{equation}
If the weight function can be decomposed as $W(u,\bb v)=w(u)\omega(\bb v)$, with $w(\cdot)$ and $\omega(\cdot)$ satisfying some additional assumptions, then further simplification is possible; see Section~\ref{sec:boot} 
for more details on the practical aspects of computing the test statistic $\Delta_{T,W}$ and Section~\ref{sec:PARX} for the special case of a linear PARX model. 

\begin{remark}
Note that the integral  in \eqref{teststCM-gen} is computed for $u\in[0,1]$, which is an approach consistent with  a number of previous works. Alternatively, the integration can be carried over the interval $[-1,1]$, or over an interval $I\subset[-1,1]$ containing the origin.
\end{remark}

\begin{remark}
Our vector of covariates $\bb Z_t$ defined as in~\eqref{eq:Zt} may also include an intercept. In such case, $\widecheck{\bb Z}_t=(1,Y_{t-1},\dots,Y_{t-q},\lambda_{t-1},\dots,\lambda_{t-q},\tX_{t-1}^\top)^\top =(1,\bb Z_t^\top)^\top$, where $\tZ_t$ is given \eqref{eq:Zt} and $d=p+q+m$, then it follows from \eqref{Dnsum} and \eqref{teststCM-gen} that the resulting test statistic $\widecheck{\Delta}_{T,\widecheck{W}}$ based on a weight function $\widecheck{W}:\R^{d+2}\to (0,\infty)$ is equal to  $\Delta_{T,W}$ computed from $\tZ_t$ and the weight function $W(u,v_1,\dots,v_d)=\int_{-\infty}^{\infty} \widecheck{W}(u,x,v_1,\dots,v_d) \mathrm{d} x$. 
\end{remark}

\section{Asymptotic distribution}\label{sec:as}

 This section describes the asymptotic behavior of the test statistic   $\Delta_{T,W}$ under the null hypothesis (Section~\ref{sec:asH0}) as well as under some alternatives (Section~\ref{sec:asALT}). Since some of the assumptions can be substantially weakened for an autoregressive type of model, where $q=0$, the behavior under the null for these models  is treated in more detail in Section~\ref{sec:asAR}.

\subsection{Behavior under the null hypothesis}\label{sec:asH0}

We start with some assumptions on the link function and the weight function.

\begin{assump}\label{as:h}
 Let  $h:[0,\infty)^{p+q}\times \R^{m+K}\to(c_h,\infty)$ for some $c_h>0$ be a function of the form \eqref{eq:hNONLIN}
for some continuous non-negative functions $r:[0,\infty)^{p+q}\times \R^K\to (0,\infty)$  and $\pi:\R^{m+K}\to(0,\infty)$.
Let $\bb \Theta\subset\R^K$  be such that
 \eqref{eq:Lipsch} holds for all $\tth\in\bb \Theta$  and for some $\alpha_i,\beta_j$, $i=1,\dots,p$, $j=1,\dots,q$, satisfying \eqref{eq:alphabeta}.
\end{assump}

\begin{assump}\label{as:stat}
Let $\{\tX_t\}$ be a strictly stationary and ergodic sequence of $m$-di\-men\-sional random vectors.

Let $\{\mathcal{F}_\lambda\}$ satisfy \eqref{eq:SOC} and let
 $\{Y_t\}$ 
 be the stationary and ergodic solution of~\eqref{rate2} with 
 $\lambda_t=h(\bb Z_t; \tth_0)$, where $\bb Z_t$ is defined by \eqref{eq:Zt}, for a true parameter $\tth_0\in\mathsf{int}(\tTheta_0)$, where $\tTheta_0$ is a compact subset of $\tTheta\subset\R^K$. Assume further that $\E \|\tZ_t\|^2<\infty$. 
\end{assump}
(Here $\mathsf{int}(A)$ stands for the interior of a set $A$ and  $\|\tx\|$ is the Euclidean norm of a vector $\tx$).

The existence of a stationary ergodic solution from Assumption~\ref{as:stat} is proved for $h$ satisfying Assumption~\ref{as:h}  in \cite[Theorem 3.3]{francq21}. Under the same conditions, the process $(Y_t,\bb X_t^\top)^\top$ is stationary and ergodic and the same holds for the multivariate process $\{\bb Z_t\}$. Sufficient conditions for the existence of moments for the linear case with $p=q=1$ are provided in \cite[Theorem 3.2]{francq21}. Also notice that we assume $\tth_0$ lies in the interior of the parameter space $\tTheta_0$; for estimation and specification testing which allow the parameter to lie on the boundary of the parameter space we refer to  \cite{FZ07} and \cite{cav2023}.

\begin{assump}\label{as:r}
Let  $r$ and $\pi$ be the functions from Assumption~\ref{as:h}.  Let $r(\bb y,\cdot,\cdot)$ be three times continuously differentiable on $(0,\infty)^q\times \tTheta$ for all $\bb y\in(0,\infty)^p$ and 
denote  
\[
r_{\bb \lambda}(\bb y,\bb \lambda;\tth) = \frac{\partial r(\bb y,\bb \lambda;\tth)}{\partial \bb \lambda}, \quad r_{\tth}(\bb y,\bb \lambda;\tth) = \frac{\partial r(\bb y,\bb \lambda;\tth)}{\partial \tth} 
\]
for $\bb \lambda\in\R^q$ and $\tth\in\bb \Theta$.
Suppose $r_{\bb \lambda}(\bb y,\bb \lambda;\tth)=d(\tth)$ is constant with respect to $\bb y$ and $\bb \lambda$
such that
\[
\sup_{\tth \in \Theta_0}  \rho (\bb D) <1, \quad \bb D = \begin{pmatrix} d(\tth) 
\\ \bb I_{q-1}\quad \bb 0\end{pmatrix}, 
\]
where $\rho(\cdot)$ is a spectral radius and $\bb I_{k}$ is a $k\times k$ identity matrix. 
Let $\pi$ be two times continuously differentiable with respect to $\tth$ and denote
\[
\quad \pi_{\tth}(\bb x;\tth) = \frac{\partial \pi(\bb x;\tth)}{\partial \tth}.
\]
Assume that for the stationary variables $Y_t,\lambda_t,\tX_t$  there exists $\kappa>0$ such that 
\begin{equation}\label{eq:Esup}
\E\sup_{\tth \in \Theta_0}  \|\pi_{\tth}(\bb X_{t-1};\tth) \|^\kappa<\infty, \quad  \E \sup_{\tth \in \Theta_0} \|r_{\tth}(\bb Y_{t-1:t-p},\bb \Lambda_{t-1:t-q};\tth) \|^\kappa<\infty,
\end{equation}
where $\bb Y_{t-1:t-p} = (Y_{t-1},\dots,Y_{t-p})^\top$ and $\bb \Lambda_{t-1:t-q} = (\lambda_{t-1},\dots,\lambda_{t-q})^\top$.
\end{assump}

Assumption~\ref{as:r} is related to the smoothness of the mean function and the required conditions ensure that the effect of the initial values  is asymptotically negligible. A similar set of assumptions is invoked by \cite{francq21} to prove  the asymptotic normality of the QMLE. Notice that Assumption~\ref{as:r}  is always satisfied by a linear function \eqref{eq:hLIN}.

\begin{assump}\label{as:g}
Assume that the PGF $g_{\lambda}(u)$ of the distribution $\mathcal{F}_{\lambda}$,  considered  as a function  $g_z(u)$ of $(u,z)$, is twice continuously differentiable and that
 \[
\left| \frac{\partial g_z(u)}{\partial z}\right|\leq M_1,\quad  \left| \frac{\partial^2 g_z(u)}{\partial z^2}\right|\leq M_2, \quad \left| \frac{\partial^2 g_z(u)}{\partial z \partial u}\right|\leq H(z)
 \]
for all $(u,z)^\top\in(0,1)\times(0,\infty)$, and some  constants $M_1,M_2<\infty$, where $H(\cdot)$ denotes a finite positive function. 
 \end{assump}

Assumption~\ref{as:g} ensures that the PGF $g_{\lambda}$ as a function $\lambda$ is sufficiently smooth.  For a Poisson distribution, one can take $M_1=M_2=1$ and  $H(z)=z+1$.

\begin{assump}\label{as:lambda}
Let $(Y_t,\lambda_t(\tth))$ be the stationary solution of \eqref{rate2} for parameter $\tth\in\tTheta$. 
Let the second derivative
$
\frac{\partial^2 \lambda_t(\tth)}{\partial \tth \partial \tth^\top} 
$
exist and be continuous in $\tth$ and let  there exist a neighborhood $\mathcal{V}(
\tth_0)$ of $\tth_0$ such that 
\begin{equation}\label{as:Elam1}
\E \sup_{\tth \in \mathcal{V}(\tth_0)} \frac{\partial \lambda_t(\tth)}{\partial \tth} \left( \frac{\partial \lambda_t(\tth)}{\partial \tth} \right)^\top  \quad \text{ and } \quad \E \sup_{\tth \in \mathcal{V}(\tth_0)} \frac{\partial^2 \lambda_t(\tth)}{\partial \tth\partial \tth^\top}  
\end{equation}
are finite. Furthermore, assume that 
\begin{equation*} 
\E H^2(\lambda_t(\tth_0))<\infty, 
\end{equation*}
where the function $H(\cdot)$ is from Assumption~\ref{as:g}.  
\end{assump}

\begin{assump}\label{as:est}
Let $\widehat{\tth}_T$ be an estimator of $\tth$   satisfying 
\[
\sqrt{T}(\widehat{\tth}_T-\tth_0)=\frac{1}{\sqrt{T}}\sum_{i=1}^T \bb s_t (\tth_0, Y_t, \mathcal{I}_{t-1})+o_{\mathsf P}(1),
\]
where  $\tth_0 \in \bb\tTheta_0$ is the true value of $\tth$,  $\{\bb s_t(\tth_0, Y_t, \mathcal{I}_{t-1})\}$ is  
a  stationary martingale difference sequence
with respect to the filtration $\{\mathcal{I}_t\}$
with a finite variance such that
$T^{-1/2}\sum_{t=1}^T \bb{s} _t(\tth_0, Y_t, \mathcal{I}_{t-1})\stackrel{D}{\to} \mathsf{N}(\bb{0},\bb{\Sigma})$ for some $K\times K$  positive semidefinite matrix $\bb{\Sigma}$.
\end{assump}

Assumption \ref{as:est} is  satisfied under some additional regularity conditions by the MLE, see \cite{agosto16} for a linear Poisson model. More generally, the estimator  
$\widehat{\tth}_T$ may be the Poisson QMLE, see \cite{ahmad,francq21}.

\begin{assump}\label{as:W}
Let 
$W:\R^{d+1}\to(0,1)$ be  $W(u,\bb v)=w(u)\omega(\bb v)$ for some measurable 
functions $w:(0,1)\to (0,\infty)$ and $\omega:\R^d\to(0,\infty)$ 
such that $\omega(\bb v)= \omega(-\bb v)$ for all 
$\bb v \in\R^d$  and 
\[
0<\int_{0}^1 w(u) \mathrm{d} u<\infty, \quad 0<\int_{\R^d}|| \bb v||^k \omega(\bb v) \mathrm{d}\bb v<\infty
\]
for $k\leq 4$. 
\end{assump}

 \smallskip   

Define for $\bb v_1, \bb v_2\in\mathbb R^d$ and $u\in[0,1]$
\begin{align*}
\alpha(\bb v_1,\bb v_2)&=\cos\left(\bb v_1^\top \bb v_2\right)+\sin\left(\bb v_1^\top \bb v_2 \right),\\
\bb\beta(u,\bb v_1)&=-\E \left\{
\frac{\partial g_z(u)}{\partial z}\Bigr\vert_{z=\lambda_1} \frac{\partial \lambda_1(\tth)}{\partial \tth}\Bigr\vert_{\tth=\tth_0}  {\alpha}( \bb Z_1,\bb v_1)
\right\}. 
\end{align*}
Note that $\bb\beta(u,\bb v_1)$ is finite for all $u\in[0,1]$ and $\bb v\in\R^d$ due to Assumptions \ref{as:g} and \ref{as:lambda}. 
The following theorem describes the asymptotic distribution of the proposed test statistic under the null hypothesis ${\cal H}_0$.

 \begin{thm}\label{th1}
Let Assumptions \ref{as:h}--\ref{as:W} be satisfied. Then the test statistic $\Delta_{T,W}$  converges in distribution for $T\to\infty$ to a random variable
\[
\int_{0}^1\int_{\R^d} |\mathcal{Z}(u,\bb v)|^2 W(u,\bb v) {\rm{d}}\bb v {\rm{d}} u,
\]
where $\mathcal{Z}=\{\mathcal{Z}(u,\bb v), s\in[0,1], \bb v \in\R^d\}$ is a centered Gaussian process with a covariance function
\begin{align*}
\E\left\{\mathcal{Z}(u_1,\bb v_1),\mathcal{Z}(u_2,\bb v_2)\right\}=&\E\left[\left(u_1^{Y_1}-g_{\lambda_1}(u_1) \right)\alpha(\tZ_1,\bb v_1) + \bb s_1 (\tth_0, Y_1, \mathcal{I}_{0})^\top \bb\beta(u_1,\bb v_1)\right] \\
&\times \left[\left(u_2^{Y_1}-g_{\lambda_1}(u_2) \right)\alpha(\tZ_1,\bb v_2) + \bb s_1 (\tth_0, Y_1, \mathcal{I}_{0})^\top \bb\beta(u_2,\bb v_2)\right].
\end{align*}
\end{thm}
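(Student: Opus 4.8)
The plan is to establish a functional central limit theorem for the empirical process
\[
D_T(u;\bb v) = \frac{1}{T}\sum_{t=1}^T \widehat\varepsilon_t(u)e^{{\texttt{i}}\bb v^\top\widehat{\bb Z}_t},
\]
and then invoke the continuous mapping theorem, since $\Delta_{T,W}$ is (up to the factor $T$) the squared $L_2(W)$-norm of $D_T$. The first step is to show that replacing the estimated quantities $\widehat{\bb Z}_t$ by their stationary counterparts $\bb Z_t$ changes $\sqrt T D_T$ only by $o_{\mathsf P}(1)$ uniformly in $(u,\bb v)$ on compacts. This uses Assumptions \ref{as:h} and \ref{as:r}: the Lipschitz/contraction conditions \eqref{eq:Lipsch}--\eqref{eq:alphabeta} and the spectral-radius condition on $\bb D$ guarantee that the initialization error $|\widehat\lambda_t - \lambda_t|$ decays geometrically, while the moment bounds \eqref{eq:Esup} and \eqref{as:Elam1} together with the smoothness of $g_z$ in $z$ (Assumption \ref{as:g}, bound $M_1$) control the propagation of this error into $\widehat\varepsilon_t$ and the exponential term, the latter being globally Lipschitz in $\bb Z_t$.

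The second step is the linearization in $\widehat{\tth}_T$. Writing $\widetilde D_T(u;\bb v)$ for the version with $\bb Z_t$ but still with $\widehat\lambda_t(\widehat\tth)$ replaced by $\lambda_t(\widehat\tth)$, a Taylor expansion of $g_{\lambda_t(\tth)}(u)$ and of $e^{{\texttt{i}}\bb v^\top\bb Z_t(\tth)}$ around $\tth_0$ yields
\[
\sqrt T\,\widetilde D_T(u;\bb v) = \frac{1}{\sqrt T}\sum_{t=1}^T\varepsilon_t(u)e^{{\texttt{i}}\bb v^\top\bb Z_t} + \Big(\sqrt T(\widehat\tth_T-\tth_0)\Big)^\top \bb\beta(u,\bb v) + o_{\mathsf P}(1),
\]
where the cross term arises because $\E[\varepsilon_t(u)e^{{\texttt{i}}\bb v^\top\bb Z_t}]=0$ under $\mathcal H_0$ kills the $\frac{\partial}{\partial\tth}e^{{\texttt{i}}\bb v^\top\bb Z_t}$ contribution after centering, leaving only the derivative of $g$; the remainder is handled by the second-derivative bounds ($M_2$ and the second derivative in \eqref{as:Elam1}), the $\sqrt T$-consistency from Assumption \ref{as:est}, and a stochastic equicontinuity argument. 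Plugging in the asymptotic linear expansion $\sqrt T(\widehat\tth_T-\tth_0)=T^{-1/2}\sum_t\bb s_t + o_{\mathsf P}(1)$ from Assumption \ref{as:est} then expresses $\sqrt T D_T$ as $T^{-1/2}\sum_t\psi_t(u,\bb v)$ with
\[
\psi_t(u,\bb v) = \varepsilon_t(u)\,\alpha(\bb Z_t,\bb v) + \bb s_t(\tth_0,Y_t,\mathcal I_{t-1})^\top\bb\beta(u,\bb v)
\]
after taking the real part (the sine term in $\alpha$ appears from the imaginary part of $e^{{\texttt{i}}\bb v^\top\bb Z_t}$ once one accounts for $\omega(\bb v)=\omega(-\bb v)$, or by working with the complex process and symmetrizing in the final integral).

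The third step is the weak convergence of $T^{-1/2}\sum_t\psi_t$ to the Gaussian process $\mathcal Z$. Since $\{\varepsilon_t(u)e^{{\texttt{i}}\bb v^\top\bb Z_t}\}_t$ is a martingale difference sequence with respect to $\{\mathcal I_t\}$ (by $\E[u^{Y_t}|\mathcal I_{t-1}]=g_{\lambda_t}(u)$ and $\mathcal I_{t-1}$-measurability of $\bb Z_t$) and $\{\bb s_t\}$ is an m.d.s. by Assumption \ref{as:est}, finite-dimensional convergence follows from the martingale CLT, with the stated covariance obtained by direct computation; here Assumption \ref{as:stat} ($\E\|\bb Z_t\|^2<\infty$) and Assumption \ref{as:lambda} ($\E H^2(\lambda_t)<\infty$) supply the requisite second moments, uniformly in $(u,\bb v)$ over compacts. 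Tightness in the space of continuous functions on $[0,1]\times K$ for compact $K\subset\R^d$ follows from a moment bound on increments exploiting the Lipschitz continuity of $u\mapsto u^{Y_t}-g_{\lambda_t}(u)$ (via $H$), of $\bb v\mapsto e^{{\texttt{i}}\bb v^\top\bb Z_t}$ (via $\|\bb Z_t\|$), and of $\bb\beta$. Finally, the integrability conditions on $W$ in Assumption \ref{as:W} (moments of $\|\bb v\|^k$ up to $k\le 4$) allow one to pass from weak convergence on compacts to convergence of the integral $\Delta_{T,W}$ over all of $\R^d$, by a truncation argument controlling the tail $\int_{\|\bb v\|>R}$ uniformly in $T$ via Markov's inequality on $\E|D_T(u;\bb v)|^2$.

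The main obstacle will be the uniform (in $(u,\bb v)$) negligibility of the initialization-and-estimation replacement errors combined with tightness: because $\bb v$ ranges over an unbounded set and $e^{{\texttt{i}}\bb v^\top\bb Z_t}$ does not decay, every estimate must be carefully paired with a weight-integrability bound, and the stochastic equicontinuity of the linearization remainder in $\widehat\tth_T$ requires the second-order smoothness assumptions to be used at full strength; in the general (non-autoregressive, $q\ge 1$) case the recursive dependence of $\lambda_t(\tth)$ on the whole past makes the derivative bounds in \eqref{as:Elam1} and the contraction condition on $\bb D$ essential, and verifying that the remainder terms are genuinely $o_{\mathsf P}(1)$ rather than merely $O_{\mathsf P}(1)$ is the delicate point.
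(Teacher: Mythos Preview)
Your proposal is correct and follows essentially the same approach as the paper's proof: reduce to stationary quantities via the geometric decay guaranteed by Assumptions~\ref{as:h}--\ref{as:r}, Taylor-expand in $\widehat\tth_T$ to second order, identify the leading martingale-difference sum, and pass to the limit via a CLT on compacts followed by a truncation argument. One small clarification: the moment condition $\int\|\bb v\|^4\omega(\bb v)\,d\bb v<\infty$ in Assumption~\ref{as:W} is used not for the final tail truncation (which only needs $\omega$ integrable, since $\E Q_T^2(u,\bb v)$ is uniformly bounded) but to integrate the squared second-order Taylor remainder, which the paper bounds pointwise by $C|v_2|^2/\sqrt T$ so that $|R_T|^2\le C^2|v_2|^4/T$; and rather than full process tightness plus continuous mapping, the paper invokes a functional-limit result of Ibragimov--Has'minskii (verifying a uniform $L_2$ bound and an increment bound of the form $\E|Q_T^2(u_1,\bb v_1)-Q_T^2(u_2,\bb v_2)|\le C\|(u_1,\bb v_1)-(u_2,\bb v_2)\|$) to obtain convergence of the $L_2$-functional directly on compacts.
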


 The assertion provides an  approximation of the  distribution of  $\Delta_{T,W}$ under the null hypothesis.  It follows from Theorem~\ref{th1} that $\Delta_{T,W}$ is asymptotically distributed as an infinite weighted sum of $\chi^2_1$ distributed random variable, where, however, the weights depend on the unknown parameters via a highly non-trivial way.  
 Hence, it is extremely complicated to use this asymptotic distribution 
 in order to determine critical points and actually carry out the test. Instead, in Section~\ref{sec:boot} 
 we recommend the use of a bootstrap test which is straightforward to apply.

\subsection{Behavior under the null  for an ARX type of model}\label{sec:asAR}
 
An important special case of the model \eqref{rate2} is obtained for $q=0$, when the conditional mean  $\lambda_t$ depends only on the past values of $Y_t$ and on $\bb X_{t-1}$ via some kind of an ARX structure. Some of the assumptions specified in the previous section can be weakened for these models, so we treat this situation in more detail. If $q=0$, then
\[
h(\bb y, \bb x;\tth) = r(\bb y;\tth)+\pi(\bb x;\tth), \quad \bb y\in[0,\infty)^p, \bb x\in \R^m, \tth\in\R^K
\]
for some functions $r$ and $\pi$, and 
$\bb Z_t=(Y_{t-1},\dots,Y_{t-q},\bb X_{t-1}^\top)^\top$ depends solely on past values of $Y_t$ and on $\bb X_{t-1}$. This fact enables us to reduce and simplify the assumptions required in \ref{as:r}, \ref{as:lambda} and~\ref{as:W}, into the following weaker versions:

\setcounter{assumpT}{2}
\begin{assumpT}\label{as:r-AR}
 Let $r:[0,\infty)^p\times \R^K \to (0,\infty)$ and $\pi:\R^{K+m}\to(0,\infty)$ be functions such that $r(\bb y;\cdot\dot)$ and $\pi(\bb x;\cdot)$ are twice continuously differentiable for all $\bb y\in[0,\infty)^p$ and $\bb x\in\R^m$. Assume that there exists a neighborhood $\mathcal{V}(\tth_0)$ of $\tth_0$ such that 
 \begin{align*}
 \E& \sup_{\tth\in\mathcal{V}(\tth_0)} \left| \frac{\partial r(\bb Y_{t-1:t-p};\tth)}{\partial \theta_i} \frac{\partial r(\bb Y_{t-1:t-p};\tth)}{\partial \theta_j}\right|<\infty, \quad
 \E \sup_{\tth\in\mathcal{V}(\tth_0)} \left| \frac{\partial^2 r(\bb Y_{t-1:t-p};\tth)}{\partial \theta_i \partial \theta_j} \right|<\infty, \\
 \E & \sup_{\tth\in\mathcal{V}(\tth_0)} \left| \frac{\partial \pi(\bb X_{t-1};\tth)}{\partial \theta_i} \frac{\partial \pi(\bb X_{t-1};\tth)}{\partial \theta_j}\right|<\infty, \quad
     \E \sup_{\tth\in\mathcal{V}(\tth_0)} \left| \frac{\partial^2 \pi(\bb X_{t-1};\tth)}{\partial \theta_i\partial \theta_j} \right|<\infty
 \end{align*}
 for all $i,j=1,\dots,K$, where $\bb Y_{t-1:t-p}$ is defined below \eqref{eq:Esup} and $Y_t,\tX_t$ are the stationary variables from Assumption~\ref{as:stat}.
  \end{assumpT}

\setcounter{assumpT}{4}
\begin{assumpT}\label{as:lambda-AR}
Assume that 
\[
\E H^2(\lambda_t)<\infty, 
\]
where the function $H(\cdot)$ is from Assumption~\ref{as:g} and $(Y_t,\lambda_t)$   are the stationary variables from Assumption~\ref{as:stat}.
\end{assumpT}


\setcounter{assumpT}{6}
\begin{assumpT}\label{as:W-AR}
Let  
$W:\R^{d+1}\to(0,1)$ be  $W(u,\bb v)=w(u)\omega(\bb v)$ for some measurable 
functions $w:(0,1)\to (0,\infty)$ and $\omega:\R^d\to(0,\infty)$ 
such that $\omega(\bb v)= \omega(-\bb v)$ for all 
$\bb v \in\mathbb R^d$  and 
\[
0<\int_{0}^1 w(u) \mathrm{d} u<\infty, \quad 0<\int_{\R^d} \omega(\bb v) \mathrm{d}\bb v<\infty.
\]
\end{assumpT}

 \begin{thm}\label{th1-AR}
 Let assumptions \ref{as:h}--\ref{as:stat} hold for $q=0$, and \ref{as:r-AR}, \ref{as:g}, \ref{as:lambda-AR}, \ref{as:est} and \ref{as:W-AR} be satisfied. 
 Then the test statistic $\Delta_{T,W}$  converges in distribution for $T\to\infty$ to a random variable
$
\int_{0}^1\int_{\R^d} |\mathcal{Z}(u,\bb v)|^2 W(u,\bb v) {\rm{d}}\bb v {\rm{d}} u,
$
where $\mathcal{Z}$ is the process specified in Theorem~\ref{th1}. 
\end{thm}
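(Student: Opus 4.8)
The plan is to mirror the proof of Theorem~\ref{th1}, exploiting $q=0$ to bypass the recursive estimates that, in the general case, force the stronger Assumptions~\ref{as:r}, \ref{as:lambda} and~\ref{as:W}. When $q=0$ the regressor $\bb Z_t=(Y_{t-1},\dots,Y_{t-p},\tX_{t-1}^\top)^\top$ carries no dependence on $\tth$ and is directly observed, so $\widehat{\bb Z}_t=\bb Z_t$ for every $t\ge p+1$ and $\widehat\lambda_t=r(\bb Y_{t-1:t-p};\widehat{\tth}_T)+\pi(\tX_{t-1};\widehat{\tth}_T)$ is an explicit, non-recursive, twice differentiable function of $\widehat{\tth}_T$: there is nothing to expand in $\bb v$ inside the exponential, and $\widehat\lambda_t-\lambda_t$ is Taylor-expanded in $\tth$ using only first and second $\tth$-derivatives of $r$ and $\pi$, which is exactly what Assumption~\ref{as:r-AR} furnishes. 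A preliminary reduction, valid because $\omega(-\bb v)=\omega(\bb v)$ (Assumption~\ref{as:W-AR}), puts the statistic in real form: writing $|a+{\texttt{i}}b|^2=\tfrac12\{(a+b)^2+(a-b)^2\}$ together with $\cos(\bb v^\top\bb Z_t)\pm\sin(\bb v^\top\bb Z_t)=\alpha(\bb Z_t,\pm\bb v)$ and changing variable $\bb v\mapsto-\bb v$ in one piece gives
\[
\Delta_{T,W}=\int_0^1\!\!\int_{\R^d}|S_T(u;\bb v)|^2\,w(u)\,\omega(\bb v)\,\mathrm d\bb v\,\mathrm du,\qquad S_T(u;\bb v)=\frac1{\sqrt T}\sum_{t=1}^T\widehat\varepsilon_t(u)\,\alpha(\widehat{\bb Z}_t,\bb v),
\]
with $\widehat\varepsilon_t(u)$ as in \eqref{eq:EPS}; all subsequent work is carried out with this real process.

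Next I would establish the stochastic expansion, negligible in the $L_2(W)$-sense,
\[
S_T(u;\bb v)=\frac1{\sqrt T}\sum_{t=1}^T\Big\{\big(u^{Y_t}-g_{\lambda_t}(u)\big)\,\alpha(\bb Z_t,\bb v)+\bb s_t(\tth_0,Y_t,\mathcal I_{t-1})^\top\bb\beta(u,\bb v)\Big\}+\widetilde R_T(u;\bb v),
\]
in three steps: (i) replacing $\widehat{\bb Z}_t,\widehat\lambda_t$ by $\bb Z_t,\lambda_t$ touches only the $\le p$ summands with $t\le p$, hence costs $O_{\mathsf P}(T^{-1/2})$ — this is where the delicate recursive bookkeeping of Theorem~\ref{th1} collapses; (ii) a first-order Taylor expansion of $z\mapsto g_z(u)$ at $\lambda_t$, whose quadratic remainder is $\le\tfrac12 M_2(\widehat\lambda_t-\lambda_t)^2$ (Assumption~\ref{as:g}) and is $o_{\mathsf P}(1)$ in $L_2(W)$ by $\sqrt T$-consistency (Assumption~\ref{as:est}) together with $\E\sup_{\mathcal V(\tth_0)}\|\partial_\tth r\|^2,\ \E\sup_{\mathcal V(\tth_0)}\|\partial_\tth\pi\|^2<\infty$ (Assumption~\ref{as:r-AR}) — here $|\alpha(\bb Z_t,\bb v)|\le\sqrt2$ plays the role of $|e^{{\texttt{i}}\bb v^\top\bb Z_t}|=1$, so no moment of $\bb v$ is needed, which is why the $\|\bb v\|^4$-condition of Assumption~\ref{as:W} can be dropped; (iii) a first-order expansion of $\tth\mapsto\lambda_t(\tth)=r(\bb Y_{t-1:t-p};\tth)+\pi(\tX_{t-1};\tth)$ at $\tth_0$, with quadratic remainder controlled by the second-derivative moment bounds of Assumption~\ref{as:r-AR}, which leaves $\bb\beta_T(u,\bb v)^\top\sqrt T(\widehat{\tth}_T-\tth_0)$ with $\bb\beta_T(u,\bb v)=-\tfrac1T\sum_t\frac{\partial g_z(u)}{\partial z}\big|_{z=\lambda_t}\frac{\partial\lambda_t(\tth)}{\partial\tth}\big|_{\tth_0}\alpha(\bb Z_t,\bb v)$; the ergodic theorem (stationarity and ergodicity of $\{(Y_t,\lambda_t,\bb Z_t)\}$ from Assumption~\ref{as:stat} and \cite[Theorem~3.3]{francq21}) gives $\bb\beta_T\to\bb\beta$ in $L_2(W)$ in probability, and Assumption~\ref{as:est} replaces $\sqrt T(\widehat{\tth}_T-\tth_0)$ by $T^{-1/2}\sum_t\bb s_t$.

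It then remains to show that the leading sum, as a random element of the separable Hilbert space $\mathcal H=L_2((0,1)\times\R^d;\,w(u)\omega(\bb v)\,\mathrm du\,\mathrm d\bb v)$, converges weakly to the centered Gaussian element $\mathcal Z$ with the stated covariance. For fixed $(u,\bb v)$ the summand $\xi_t(u,\bb v)=(u^{Y_t}-g_{\lambda_t}(u))\alpha(\bb Z_t,\bb v)+\bb s_t^\top\bb\beta(u,\bb v)$ is a stationary ergodic martingale difference for $\{\mathcal I_t\}$ — $\E[u^{Y_t}-g_{\lambda_t}(u)\mid\mathcal I_{t-1}]=0$ under $\mathcal H_0$, $\alpha(\bb Z_t,\bb v)$ is $\mathcal I_{t-1}$-measurable, $|u^{Y_t}-g_{\lambda_t}(u)|\le1$, $|\alpha|\le\sqrt2$, and $\{\bb s_t\}$ is a martingale difference with finite variance — so the martingale central limit theorem gives convergence of the finite-dimensional distributions. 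Tightness in $\mathcal H$ follows from the exact identity $\E\|T^{-1/2}\sum_t\xi_t\|_{\mathcal H}^2=\int_0^1\int_{\R^d}\E\,\xi_1(u,\bb v)^2\,W(u,\bb v)\,\mathrm d\bb v\,\mathrm du=\E\|\mathcal Z\|_{\mathcal H}^2<\infty$, finiteness holding because $\E\,\xi_1(u,\bb v)^2$ is bounded uniformly in $(u,\bb v)$ — using $\|\bb\beta(u,\bb v)\|\le\sqrt2\,M_1\,\E\|\partial_\tth\lambda_1(\tth_0)\|$ (Assumptions~\ref{as:g},~\ref{as:r-AR}) and $\E\|\bb s_1\|^2<\infty$ (Assumption~\ref{as:est}) — while $\int_0^1 w<\infty$ and $\int_{\R^d}\omega<\infty$ (Assumption~\ref{as:W-AR}); this, combined with the finite-dimensional convergence, yields the standard tightness criterion for $\mathcal H$-valued random elements (negligibility of the tail of the expansion in an orthonormal basis). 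Assumption~\ref{as:lambda-AR} ($\E H^2(\lambda_t)<\infty$) is used to make $u\mapsto\bb\beta(u,\bb v)$, hence $\mathcal Z$, continuous, via $|\partial_u\partial_z g_z(u)|\le H(z)$ and $\E[H(\lambda_1)\|\partial_\tth\lambda_1\|]\le(\E H^2(\lambda_1))^{1/2}(\E\|\partial_\tth\lambda_1\|^2)^{1/2}<\infty$. The continuous mapping theorem applied to $f\mapsto\|f\|_{\mathcal H}^2$ then yields $\Delta_{T,W}=\|S_T\|_{\mathcal H}^2\;\Dto\;\|\mathcal Z\|_{\mathcal H}^2$, which is the assertion.

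The hard part will be the $L_2(W)$-uniform control of the Taylor remainders in steps (ii)--(iii): one has to dominate $(\widehat\lambda_t-\lambda_t)^2$ and the second-order $\tth$-terms by $\tth$-free integrable envelopes — available from Assumption~\ref{as:r-AR} precisely because for $q=0$ the map $\tth\mapsto\lambda_t(\tth)$ is the plain sum $r+\pi$ — multiply by the bounded factor $\alpha(\widehat{\bb Z}_t,\bb v)$, and integrate against $w\,\mathrm du\,\omega\,\mathrm d\bb v$, invoking $\int w,\int\omega<\infty$; and, as is typical for $L_2$-type statistics, verifying Hilbert-space tightness together with the continuity of $\mathcal Z$. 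Everything else is a routine specialisation of the proof of Theorem~\ref{th1} obtained by setting $q=0$ and deleting the recursive estimates for $\lambda_t(\tth)$ and its $\tth$-derivatives, for which Assumption~\ref{as:r} and the spectral-radius condition on $\bb D$ were needed.
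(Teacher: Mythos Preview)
Your proposal is correct and follows the paper's own strategy: exploit $q=0$ so that $\bb Z_t$ is observed and $\tth$-free, expand $g_{\lambda_t(\widehat\tth)}(u)$ about $\tth_0$ using only the first two $\tth$-derivatives of $r+\pi$ (which is exactly what~\ref{as:r-AR} provides), and then pass to the limit via a CLT for the stationary martingale-difference array $\{\xi_t(u,\bb v)\}$. The paper's proof of Theorem~\ref{th1-AR} is a two-line sketch that simply points back to Theorem~\ref{th1} after noting that $\widetilde\lambda_t$ and $\widetilde{\bb Z}_t$ disappear when $q=0$; you have supplied the details it omits.

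The one genuine methodological difference is in the weak-convergence step. The paper (in the proof of Theorem~\ref{th1}, to which Theorem~\ref{th1-AR} defers) invokes Theorem~22 of \cite{Ibragimov}: it verifies a uniform second-moment bound~\eqref{eq:cond1} and a H\"older-type modulus condition~\eqref{eq:cond2} on $(u,\bb v)\mapsto Q_T^2(u,\bb v)$, the latter being precisely where $\E H^2(\lambda_t)<\infty$ enters (via $|\partial_u\partial_z g_z(u)|\le H(z)$ to control increments in $u$). You instead argue directly in the Hilbert space $\mathcal H=L_2((0,1)\times\R^d;W)$: the exact identity $\E\|T^{-1/2}\sum_t\xi_t\|_{\mathcal H}^2=\E\|\xi_1\|_{\mathcal H}^2$ for stationary martingale differences gives uniform control of the tails in any orthonormal basis, yielding tightness without a modulus bound. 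Your route is slightly more economical---it does not, strictly speaking, need Assumption~\ref{as:lambda-AR} for the convergence in $\mathcal H$ itself, only (as you note) for the sample-path continuity of the limiting process $\mathcal Z$---whereas the paper's Ibragimov argument bakes continuity into the tightness step. Both are standard and both work here; the paper's choice has the advantage of delivering convergence in $C([0,1]\times F)$ for compact $F$ en route, while yours is cleaner when the target is only the $L_2(W)$-functional $\|\cdot\|_{\mathcal H}^2$.

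One small point of notation: in your step~(i), ``replacing $\widehat{\bb Z}_t,\widehat\lambda_t$ by $\bb Z_t,\lambda_t$'' should read $\lambda_t(\widehat\tth)$ rather than $\lambda_t$; the passage from $\lambda_t(\widehat\tth)$ to $\lambda_t=\lambda_t(\tth_0)$ is what steps~(ii)--(iii) accomplish. As written the sentence is momentarily ambiguous, though the intent is clear from the surrounding text.
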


\subsection{Behavior under some alternatives}\label{sec:asALT}
 
The limit behavior of the test statistic is provided also for some types of alternatives. To this end analogously to the notation below \eqref{eq:Esup}, we write  $\bb Y_{k:l}$ for $(Y_k,Y_{k-1},\dots,Y_{k-(k-l)})^\top$ and $\bb \Lambda_{k:l}$ for $(\lambda_k,\lambda_{k-1},\dots,\lambda_{k-(k-l)})^\top$,  for any integers $(k,l)$ with $k\geq l$.

Let $p_0>0$, $q_0\geq 0$, $K_0\geq 1$ and write $h_0:[0,\infty)^{p_0+q_0}\times \R^{m+K_0}\to(0,\infty)$ for a  particular link function. 
Consider the null hypothesis $\mathcal{H}_0$  whereby the link function is specified by $h_0$ and  a fixed family  of distributions $\{\mathcal{F}_\lambda,\lambda>0\}$, that is
\begin{align*} 
\mathcal{H}_0: &\text{ there exists } \tth\in \mathsf{int}(\bb\Theta_0) \text{ such that } Y_t|\mathcal{I}_{t-1}\sim \mathcal{F}_{\lambda_t}  \notag\\
&   \lambda_t = h_0(\bb Y_{t-1:t-p_0},\bb \Lambda_{t-1:t-q_0},\bb X_{t-1};\tth),
\end{align*}
for $\bb \Theta_0\subset \bb \Theta$ a compact set. A general alternative admits an incorrect specification of the conditional mean function as well as the conditional distribution. Specifically suppose that the actual model is given by 
\begin{equation*} 
Y_t|\mathcal{I}_{t-1}\sim \mathcal{F}^A_{\lambda_t} \quad \lambda_{t} = h_A(\bb Y_{t-1:t-p_A},\bb \Lambda_{t-1:t-q_A},\bb X_{t-1};\tth_A)
\end{equation*}
 for some family of distributions $\{\mathcal{F}_{\lambda}^A, \lambda>0\}$, 
a link function $h_A:[0,\infty)^{p_A+q_A}\times \R^{m+K_A}\to(0,\infty)$, with  $p_A>0$, $q_A\geq 0$, $K_A\geq 1$, and for some $\tth_A\in \R^{K_A}$.  
Hence if the true pair $(\mathcal{F}^A_{\lambda},h^A)$ differs from $(\mathcal{F}_{\lambda},h_0)$
 then the model is not correctly specified under $\mathcal{H}_0$. 
 Let $\Delta_{T,W}$ be the 
  test statistic defined by  \eqref{teststCM-gen} for testing  $\mathcal{H}_0$,  based on 
  some estimator $\widehat{\tth}_T$ computed under $\mathcal{H}_0$, 
  with the sequence $\{\widehat{\lambda}_{t,0}\}$ defined recursively 
  using the link function $h_0$, and 
a $d$-dimensional sequence  $\{\widehat{\bb Z}_{t,0}\}$ where
  $d=p_0+q_0+m$.   
 Assume that the PGF $g_\lambda(u)$ of $\mathcal{F}_\lambda$ satisfies  
 \begin{equation}\label{eq:as_g_H1}
 |\partial g_\lambda(u)/\partial \lambda|<M_1 \text{ for all } u\in[0,1] \text{ and all } \lambda\in(0,\infty)
 \end{equation} 
for some $M_1>0$ and that there exists  $\tth_0\in\mathsf{int}(\bb \Theta_0)$ such that
\begin{equation}\label{eq:conv.th.alt}
\widehat{\bb \theta}_T \Pto \bb \tth_0, \quad T\to\infty.
\end{equation}

Consider first an alternative whereby the mean function $h$ is  correctly specified, but  the conditional distribution is different from the hypothesized one, i.e. 
$\mathcal{F}_{\lambda}\ne \mathcal{F}_{\lambda}^A$, and write $g_{\lambda}^A$ for the PGF corresponding to $\mathcal{F}_{\lambda}^A$. 
Note that in  this case the true parameter $\tth_0$ can still be  consistently estimated using, for instance, the Poisson QMLE, and accordingly we may assume that \eqref{eq:conv.th.alt} holds with $\tth_0$ and $\tth_A$ being identical.

 \begin{thm}\label{th3}
 Let $\{Y_t, \lambda_t\}$  and $\{\tX_t\}$ satisfy \ref{as:stat} with the conditional distribution $\mathcal{F}_{\lambda}^A$ and the link function $h_0$, and let 
  \ref{as:h}, \ref{as:r}, \ref{as:lambda}, \ref{as:W}, \eqref{eq:as_g_H1} and  \eqref{eq:conv.th.alt} hold. 
  Then 
 \[
 \frac{ \Delta_{T,W}}{T} \Pto \int_{0}^1\int_{\R^d}  \zeta^2(u,\bb v) W(u,\bb v) \mathrm{d}\bb v\mathrm{d} u,
\]
as $T \to \infty$, where
\[
\zeta(u,\bb v)=  \E\Big\{\big[g^A_{\lambda_1}(u) - g_{\lambda_1}(u)\big]  \alpha(\bb Z_1,\bb v)\Big\}
\]
and 
$\bb Z_t = (\bb Y_{t-1:t-p_0}^\top, \bb\Lambda_{t-1:t-q_0}^\top,\bb X_{t-1}^\top)$. 
\end{thm}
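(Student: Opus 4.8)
The plan is to show that $\Delta_{T,W}/T$ converges in probability to the claimed deterministic limit by first establishing pointwise (in $(u,\bb v)$) convergence of $D_T(u;\bb v)$ to $\zeta(u,\bb v)$, and then upgrading this to convergence of the weighted $L^2$ norm via a dominated-convergence argument. First I would write $D_T(u;\bb v) = T^{-1}\sum_{t=1}^T \widehat\varepsilon_t(u) e^{\texttt{i}\bb v^\top\widehat{\bb Z}_{t,0}}$ with $\widehat\varepsilon_t(u) = u^{Y_t} - g_{\widehat\lambda_{t,0}}(u)$, and split the analysis into two reductions. The first reduction replaces the estimated quantities $\widehat\lambda_{t,0}$, $\widehat{\bb Z}_{t,0}$ by their ``true'' (non-estimated, correctly-initialized) counterparts $\lambda_{t,0} = h_0(\bb Z_{t,0};\tth_0)$ and $\bb Z_{t,0}$: using \eqref{eq:as_g_H1}, the consistency \eqref{eq:conv.th.alt}, Assumption~\ref{as:r} (which controls the effect of initialization and of perturbing $\tth$, exactly as in the null-hypothesis argument and as in \citealp{francq21}), and the mean value theorem applied to $z\mapsto g_z(u)$ and $\bb v\mapsto e^{\texttt{i}\bb v^\top\bb z}$, one gets $\sup_{u\in[0,1]}|D_T(u;\bb v) - \widetilde D_T(u;\bb v)| = o_{\mathsf P}(1)$ for each fixed $\bb v$, where $\widetilde D_T$ uses $\lambda_{t,0}$ and $\bb Z_{t,0}$; the bound will be of the form $C(\bb v)\,o_{\mathsf P}(1)$ with $C(\bb v)$ growing polynomially in $\|\bb v\|$, which is integrable against $\omega$ under Assumption~\ref{as:W}.

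The second reduction is the ergodic theorem: since $\{(Y_t,\lambda_t,\bb X_t)\}$ is strictly stationary and ergodic by Assumption~\ref{as:stat} (with distribution $\mathcal{F}^A_\lambda$ and link $h_0$), and hence so is the summand process $\{(u^{Y_t} - g_{\lambda_{t,0}}(u))e^{\texttt{i}\bb v^\top \bb Z_{t,0}}\}_t$, the Birkhoff--Khinchin theorem gives
\[
\widetilde D_T(u;\bb v) \asto \E\left[(u^{Y_1} - g_{\lambda_1}(u))e^{\texttt{i}\bb v^\top\bb Z_1}\right]
\]
for each fixed $(u,\bb v)$, provided the summand is integrable, which holds since $|u^{Y_1}|\le 1$ and $|g_{\lambda_1}(u)|\le 1$. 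Now the key identity: conditioning on $\mathcal{I}_0$ and using that under the actual model $\E[u^{Y_1}\mid\mathcal{I}_0] = g^A_{\lambda_1}(u)$ while $\bb Z_1$ is $\mathcal{I}_0$-measurable (it depends only on $Y_0,\dots,Y_{1-p_0}$, past $\lambda$'s and $\bb X_0$), one obtains
\[
\E\left[(u^{Y_1}-g_{\lambda_1}(u))e^{\texttt{i}\bb v^\top\bb Z_1}\right] = \E\left[(g^A_{\lambda_1}(u) - g_{\lambda_1}(u))e^{\texttt{i}\bb v^\top\bb Z_1}\right].
\]
Taking real and imaginary parts (and noting the summand simplifications in \eqref{Dnsum}), $|D_T(u;\bb v)|^2 \to \zeta(u,\bb v)^2$ pointwise, where $\zeta(u,\bb v) = \E\{[g^A_{\lambda_1}(u)-g_{\lambda_1}(u)]\alpha(\bb Z_1,\bb v)\}$ with $\alpha(\bb Z_1,\bb v) = \cos(\bb v^\top\bb Z_1) + \sin(\bb v^\top\bb Z_1)$ accounting for both parts; here I would be slightly careful that the cross term in $|D_T|^2$ combines the cosine and sine contributions into the single real-valued limit $\zeta^2$.

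Finally, to pass from pointwise convergence of the integrand to convergence of $\Delta_{T,W}/T = \int_0^1\int_{\R^d}|D_T(u;\bb v)|^2 W(u,\bb v)\,\mathrm d\bb v\,\mathrm d u$, I would invoke dominated convergence: the integrand is bounded by $4 W(u,\bb v) = 4 w(u)\omega(\bb v)$ uniformly in $T$ (since $|\widehat\varepsilon_t(u)|\le 2$ gives $|D_T(u;\bb v)|^2\le 4$), and $\int_0^1 w(u)\,\mathrm d u\cdot\int_{\R^d}\omega(\bb v)\,\mathrm d\bb v < \infty$ by Assumption~\ref{as:W}; pointwise convergence holds for a.e. $(u,\bb v)$ by the arguments above applied along a countable dense set and continuity in $(u,\bb v)$. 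This yields $\Delta_{T,W}/T \Pto \int_0^1\int_{\R^d}\zeta^2(u,\bb v)W(u,\bb v)\,\mathrm d\bb v\,\mathrm d u$, as claimed.

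I expect the main obstacle to be the first reduction, i.e.\ showing that replacing $\widehat{\bb Z}_{t,0}$ and $\widehat\lambda_{t,0}$ by their true-parameter, true-initialization versions is negligible \emph{uniformly enough in $\bb v$ that the error survives integration against $\omega$}. The difficulty is that $|e^{\texttt{i}\bb v^\top\widehat{\bb Z}_t} - e^{\texttt{i}\bb v^\top\bb Z_t}| \le \|\bb v\|\,\|\widehat{\bb Z}_t - \bb Z_t\|$ carries a factor $\|\bb v\|$, so one needs both the $k=1$ moment condition on $\omega$ in Assumption~\ref{as:W} and a uniform-in-$t$ $o_{\mathsf P}(1)$ control on $T^{-1}\sum_t\|\widehat{\bb Z}_{t,0}-\bb Z_{t,0}\|$, which in turn requires the Lipschitz/contraction structure of $h_0$ (Assumption~\ref{as:h}, \eqref{eq:Lipsch}--\eqref{eq:alphabeta}) to propagate the initialization error geometrically, together with the smoothness/moment bounds of Assumption~\ref{as:r} to handle the $\tth\to\tth_0$ perturbation. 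This is essentially the same machinery used in the proof of Theorem~\ref{th1}, so I would import those intermediate lemmas rather than redo the estimates.
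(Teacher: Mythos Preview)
Your approach is essentially the same as the paper's: reduce $\widehat\tth$ to $\tth_0$ via a first-order (mean-value/Taylor) expansion together with the geometric decay of initialization errors coming from Assumption~\ref{as:r}, then apply the ergodic theorem and condition on $\mathcal{I}_0$ to identify the limit. Your dominated-convergence step (using $|D_T|^2\le 4$) is in fact more explicit than the paper's rather terse passage to the integral.

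There is, however, one concrete slip. The pointwise claim $|D_T(u;\bb v)|^2\to\zeta(u,\bb v)^2$ is false. Writing $c=\E\bigl[(g^A_{\lambda_1}-g_{\lambda_1})\cos(\bb v^\top\bb Z_1)\bigr]$ and $s=\E\bigl[(g^A_{\lambda_1}-g_{\lambda_1})\sin(\bb v^\top\bb Z_1)\bigr]$, the pointwise limit of $|D_T|^2$ is $c^2+s^2$, whereas $\zeta^2=(c+s)^2=c^2+s^2+2cs$. These differ by the cross term $2cs$, which is odd in $\bb v$ (cosine even, sine odd). The identity
\[
\int_{\R^d}|D_T(u;\bb v)|^2\,\omega(\bb v)\,\mathrm d\bb v \;=\; \int_{\R^d}\Bigl(\frac1T\sum_t\widehat\varepsilon_t(u)\,\widetilde\alpha_t(\widehat\tth,\bb v)\Bigr)^{2}\omega(\bb v)\,\mathrm d\bb v
\]
holds only \emph{after} integrating against the even weight $\omega$ (Assumption~\ref{as:W}), which annihilates the cross term; the paper exploits exactly this symmetry to pass to the real-valued $\alpha$-representation at the very start of the proof of Theorem~\ref{th1}, so that its integrand $C_T^2$ converges directly to $\zeta^2$. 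Your dominated-convergence argument still goes through once you take the correct pointwise limit $c^2+s^2$ and then invoke the symmetry of $\omega$ at the end, or---simpler---switch to the $\alpha$-representation from the outset as the paper does.
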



\bigskip
Consider now a situation whereby the conditional distribution is correctly specified   under the null hypothesis but the link function is misspecified, i.e. $\mathcal{F}^A_{\lambda}= \mathcal{F}_{\lambda}$ but $h_A\ne h_0$. 
We provide the asymptotic behavior of $\Delta_{T,W}$ in two practically important situations, when
either $h_0$ corresponds to  a linear model, or when $h_0$ is non-linear
with
$q_0=0$, i.e. the tested link $h_0$ corresponds to an ARX type of a model. For both situations we show that the test based on $\Delta_{T,W}$ is consistent.

Let us start with the latter (non--linear) case, whereby
 $h_0(\bb y,\bb x;\tth)=r_0(\bb y;\tth)+\pi_0(\bb x;\tth)$ for some 
 $r:[0,\infty)^{p_0}\times \R^{K}\to(0,\infty)$ and $\pi: \R^{K_0+m}\to(0,\infty)$ and
  \[
     \widehat{\bb Z}_{t,0} = {\bb Z}_{t,0} = (\bb Y_{t-1:t-p_0}^\top,\bb X_{t-1}^\top)^\top
  \]
  for $t\geq p_0+1$. Notice that nothing is assumed about the true orders $(p_A,q_A)$, so these can be completely arbitrary. 

\begin{thm} \label{th4}
Let $h_0:[0,\infty)^{p_0}\times \R^{K_0+m}\to(0,\infty)$ be a function satisfying  \ref{as:h} on $\bb \Theta\subset\R^{K_0}$  for some  $p_0>0$, $K_0\geq 1$.
Let  $\{(Y_t,\lambda_{t,A})\}$ and  $\{\bb X_t\}$ satisfy \ref{as:stat} with a link function $h_A$ and distribution $\mathcal{F}_\lambda$. 
 Assume  that  \eqref{eq:as_g_H1}, \eqref{eq:conv.th.alt} and \ref{as:W-AR}
 hold, and 
  there exists a neighborhood $\mathcal{V}(\tth_0)$ of $\tth_0$ such that
  \[
\E\sup_{\tth \in \mathcal{V}(\tth_0)} \left|  h_{0}(\bb Y_{t-1:t-p_0},\bb X_{t-1}; \tth) \right|<\infty,\quad  
\E\sup_{\tth \in \mathcal{V}(\tth_0)} \left| \frac{ \partial h_{0}(\bb Y_{t-1:t-p_0},\bb X_{t-1}; \tth)}{\partial \tth} \right|<\infty.
 \]
  Then
  \[
  \frac{1}{T}\Delta_{T,W}\Pto \int_{0}^1 \int_{\R^d} \zeta^2(u,\bb v) w(u)\omega(\bb v )\mathrm{d}\bb v \mathrm{d}u
  \]
 as $T\to\infty$, where
   \[
  \zeta(u,\bb v) = \E\big\{\left[ g_{\lambda_{1,A}}(u) -g_{\lambda_{1,0}}(u)\right]\alpha({\bb Z}_{1,0},\bb v)\big\}, \quad \lambda_{1,0}=h_0(\bb Y_{0:1-p_0},\bb X_{0};\tth_0).
  \]
  \end{thm}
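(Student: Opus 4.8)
\noindent\textbf{Proof plan for Theorem~\ref{th4}.}
Since $\Delta_{T,W}=T\int_0^1\int_{\R^d}|D_T(u;\bb v)|^2 W(u,\bb v)\,\mathrm{d}\bb v\,\mathrm{d}u$ by \eqref{teststCM-gen}, the goal reduces to controlling $|D_T(u;\bb v)|^2$. Because $q_0=0$, the hypothesized model involves no recursion, so $\widehat{\bb Z}_{t,0}=\bb Z_{t,0}=(\bb Y_{t-1:t-p_0}^\top,\bb X_{t-1}^\top)^\top$ holds exactly for $t\ge p_0+1$ with no dependence on the estimator, and the only estimated ingredient in $\widehat\varepsilon_t$ (recall \eqref{eq:EPS}) is $\widehat\lambda_{t,0}=h_0(\bb Z_{t,0};\widehat{\tth}_T)$. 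The $p_0$ initial summands of $D_T(u;\bb v)$ have modulus at most $p_0/T$ uniformly in $(u,\bb v)$ and are negligible. The plan, which parallels the proof of Theorem~\ref{th3}, is to establish pointwise-in-$(u,\bb v)$ convergence of $|D_T(u;\bb v)|^2$ and then to move the limit through the double integral by dominated convergence.

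First I would write $D_T(u;\bb v)=L_T(u;\bb v)+R_T(u;\bb v)+o(1)$ uniformly in $(u,\bb v)$, where $L_T(u;\bb v)=\frac1T\sum_{t=p_0+1}^T[u^{Y_t}-g_{\lambda_{t,0}}(u)]e^{\texttt{i}\bb v^\top\bb Z_{t,0}}$ with $\lambda_{t,0}=h_0(\bb Z_{t,0};\tth_0)$, and $R_T(u;\bb v)=\frac1T\sum_{t=p_0+1}^T[g_{\lambda_{t,0}}(u)-g_{\widehat\lambda_{t,0}}(u)]e^{\texttt{i}\bb v^\top\bb Z_{t,0}}$. For the remainder, the mean value theorem and \eqref{eq:as_g_H1} give $|g_{\lambda_{t,0}}(u)-g_{\widehat\lambda_{t,0}}(u)|\le M_1|h_0(\bb Z_{t,0};\tth_0)-h_0(\bb Z_{t,0};\widehat{\tth}_T)|$, and a further mean value expansion in $\tth$ bounds this, on the event $\{\widehat{\tth}_T\in\mathcal V(\tth_0)\}$ (which has probability tending to one by \eqref{eq:conv.th.alt}), by $M_1\|\widehat{\tth}_T-\tth_0\|\sup_{\tth\in\mathcal V(\tth_0)}\|\partial h_0(\bb Z_{t,0};\tth)/\partial\tth\|$. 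This bound is free of $(u,\bb v)$ because $|e^{\texttt{i}\bb v^\top\bb Z_{t,0}}|=1$, so the ergodic theorem applied to the stationary ergodic sequence $\{\bb Z_{t,0}\}$ together with the moment conditions on $h_0$ and its first derivative assumed in the theorem shows that the average over $t$ converges a.s.\ to a finite constant; hence $\sup_{u\in[0,1],\,\bb v\in\R^d}|R_T(u;\bb v)|\le M_1\|\widehat{\tth}_T-\tth_0\|\,O_{\mathsf P}(1)=o_{\mathsf P}(1)$.

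Next, for each fixed $(u,\bb v)$ I would apply the ergodic theorem to $L_T$: the summands $[u^{Y_t}-g_{\lambda_{t,0}}(u)]e^{\texttt{i}\bb v^\top\bb Z_{t,0}}$ are a measurable transformation of $\{(Y_t,\bb X_t)\}$, which is stationary and ergodic by Assumption~\ref{as:stat} with the link $h_A$, and they have modulus at most one for $u\in[0,1]$, so $L_T(u;\bb v)\to\E\{[u^{Y_1}-g_{\lambda_{1,0}}(u)]e^{\texttt{i}\bb v^\top\bb Z_{1,0}}\}$ a.s. Since $\bb Z_{1,0}$ and $\lambda_{1,0}=h_0(\bb Y_{0:1-p_0},\bb X_0;\tth_0)$ are $\mathcal I_0$-measurable and $\E[u^{Y_1}\mid\mathcal I_0]=g_{\lambda_{1,A}}(u)$ under the alternative (where $\mathcal F^A_\lambda=\mathcal F_\lambda$), the tower property identifies the limit as $\E\{[g_{\lambda_{1,A}}(u)-g_{\lambda_{1,0}}(u)]e^{\texttt{i}\bb v^\top\bb Z_{1,0}}\}$, whose squared modulus is $\eta(u,\bb v):=\bigl(\E\{\delta(u)\cos(\bb v^\top\bb Z_{1,0})\}\bigr)^2+\bigl(\E\{\delta(u)\sin(\bb v^\top\bb Z_{1,0})\}\bigr)^2$ with $\delta(u)=g_{\lambda_{1,A}}(u)-g_{\lambda_{1,0}}(u)$. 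Combined with the uniform estimate on $R_T$, this yields $|D_T(u;\bb v)|^2\Pto\eta(u,\bb v)$ for every $(u,\bb v)$.

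Finally, since $|u^{Y_t}|\le1$ and $g_{\widehat\lambda_t}(u)\in[0,1]$ for $u\in[0,1]$, one has $|\widehat\varepsilon_t(u)|\le1$ and hence the deterministic bound $|D_T(u;\bb v)|^2\le1$, while $W(u,\bb v)=w(u)\omega(\bb v)$ is integrable by Assumption~\ref{as:W-AR}. A dominated-convergence argument — bounding $\E\bigl|\Delta_{T,W}/T-\int_0^1\int_{\R^d}\eta(u,\bb v)W(u,\bb v)\,\mathrm{d}\bb v\,\mathrm{d}u\bigr|\le\int_0^1\int_{\R^d}\E\bigl||D_T(u;\bb v)|^2-\eta(u,\bb v)\bigr|\,W(u,\bb v)\,\mathrm{d}\bb v\,\mathrm{d}u$ and noting that the inner expectation tends to $0$ by bounded convergence and is dominated by $2W$ — gives $\Delta_{T,W}/T\Pto\int_0^1\int_{\R^d}\eta(u,\bb v)W(u,\bb v)\,\mathrm{d}\bb v\,\mathrm{d}u$. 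To recover the stated form, observe that for $\zeta(u,\bb v)=\E\{\delta(u)\alpha(\bb Z_{1,0},\bb v)\}=\E\{\delta(u)\cos(\bb v^\top\bb Z_{1,0})\}+\E\{\delta(u)\sin(\bb v^\top\bb Z_{1,0})\}$ one has $\zeta^2(u,\bb v)=\eta(u,\bb v)+2\,\E\{\delta(u)\cos(\bb v^\top\bb Z_{1,0})\}\,\E\{\delta(u)\sin(\bb v^\top\bb Z_{1,0})\}$, and the cross term is an odd function of $\bb v$ (a product of an even and an odd function) so it integrates to zero against the symmetric weight $\omega(\bb v)=\omega(-\bb v)$; thus $\int_{\R^d}\zeta^2(u,\bb v)\omega(\bb v)\,\mathrm{d}\bb v=\int_{\R^d}\eta(u,\bb v)\omega(\bb v)\,\mathrm{d}\bb v$ for every $u\in[0,1]$, which gives the claimed limit $\int_0^1\int_{\R^d}\zeta^2(u,\bb v)w(u)\omega(\bb v)\,\mathrm{d}\bb v\,\mathrm{d}u$. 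The step demanding the most care is the uniform-in-$(u,\bb v)$ control of the estimation remainder $R_T$: only this uniformity, combined with the crude bound $|D_T|\le1$ and the integrability of $W$, legitimizes passing the probability limit through the double integral, the rest being ergodic-theorem bookkeeping and the symmetrization identity above.
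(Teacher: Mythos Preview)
Your proof is correct and follows essentially the same strategy as the paper: exploit $q_0=0$ so that $\bb Z_{t,0}$ is free of $\tth$, control the estimation remainder by a mean-value expansion together with the moment bound on $\partial h_0/\partial\tth$, and identify the limit of the main term via the ergodic theorem and the tower property. The only notable differences are cosmetic: the paper switches to the real representation $\alpha(\bb Z,\bb v)=\cos(\bb v^\top\bb Z)+\sin(\bb v^\top\bb Z)$ at the outset (as in the proof of Theorem~\ref{th1}) rather than at the end, and it asserts uniform-in-$(u,\bb v)$ convergence of the main term without elaboration, whereas your dominated-convergence argument (using $|D_T|^2\le 1$ and integrability of $W$) makes the passage of the limit through the double integral fully explicit and arguably cleaner.
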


 Consider now a situation, where $h_0$ corresponds to a linear model with orders $p=q=1$, that is
 \begin{equation}\label{eq:h0_lin}
 h_0(y,\lambda,\tx,;\tth) = \omega+\alpha y+\beta \lambda +\tgamma^\top\tx
 \end{equation}
 for $\tth=(\omega,\alpha,\beta,\tgamma^\top)^\top$. 
  Let $\bb \Theta$ be a subset of $\R^{3+m}$ such that for any $\tth\in\bb\Theta$, $\tth=(\omega,\alpha,\beta,\tgamma^\top)^\top$, it holds that $\omega,\alpha,\beta>0$ and $\alpha+\beta<1$. Again, nothing is assumed about the true orders $(p_A,q_A)$ and only the standard assumptions are posed for the true link function $h_A$.

\begin{thm}\label{th5}
 Let  $\{Y_t, \lambda_{t,A}\}$  and $\{\bb X_t\}$ satisfy \ref{as:stat} with a link function $h_A$ and distribution $\mathcal{F}_{\lambda}$. Let $h_0$ be given by \eqref{eq:h0_lin} and $\bb \Theta_0\subset\bb \Theta$ a compact let.
 Assume that \eqref{eq:as_g_H1}, \eqref{eq:conv.th.alt} and  \ref{as:W}
 hold. 
Then
\[
\frac{1}{T}\Delta_{T,W} \Pto  \int_0^1\int_{\R^d} \zeta^2(u,\bb v) w(u) \omega(\bb v) \mathrm{d}\bb v \mathrm{d} u
\]
 as $T\to\infty$, where
\[
\zeta(u,\bb v) =  \E\big\{ \left[g_{\lambda_{1,A}}(u) - g_{\lambda_{1,0}(\tth_0)}(u)\right]\alpha\bigl({\bb Z}_{1,0}(\tth_0),\bb v\bigr)\big\}, 
\]
and $\{\lambda_{t,0}(\tth_0))\}$ is a stationary sequence such that
\[
{\lambda}_{t,0}(\tth)= 
\frac{\omega}{1-\beta} +\sum_{j=0}^{\infty} \beta^j [\alpha Y_{t-1-j}+\tgamma^\top\tX_{t-1-j}],
\]
and ${\bb Z}_{t,0}(\tth)=(Y_{t-1},\lambda_{t-1,0}(\tth),\bb X_{t-1}^\top)^\top$. 
\end{thm}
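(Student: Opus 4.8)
The starting observation is that $\Delta_{T,W}/T=\int_0^1\int_{\R^d}|D_T(u;\bb v)|^2W(u,\bb v)\,\mathrm{d}\bb v\,\mathrm{d}u$, so the plan is to prove that $|D_T|^2$ converges, for almost every $(u,\bb v)$, to a limit whose integral against $w(u)\omega(\bb v)$ equals $\int\int\zeta^2 W$, while retaining a uniform bound that lets the limit pass through the double integral. The first step is to discard the estimated quantities: put $\widetilde D_T(u;\bb v)=\frac1T\sum_{t=1}^T[u^{Y_t}-g_{\lambda_{t,0}(\tth_0)}(u)]e^{\texttt{i}\bb v^\top\bb Z_{t,0}(\tth_0)}$ with $\bb Z_{t,0}(\tth)=(Y_{t-1},\lambda_{t-1,0}(\tth),\bb X_{t-1}^\top)^\top$. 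Since $h_0$ is linear (see \eqref{eq:h0_lin}), on a small neighbourhood $\mathcal V(\tth_0)$ the coefficient $\beta$ is bounded by some $\bar\beta<1$; subtracting the recursion for $\widehat\lambda_{t,0}$ from the one satisfied by the stationary solution $\lambda_{t,0}(\tth)$ gives $|\widehat\lambda_{t,0}-\lambda_{t,0}(\widehat\tth_T)|=\bar\beta^{\,t-1}O_{\mathsf P}(1)$, while a mean-value expansion in $\tth$ yields $|\lambda_{t,0}(\widehat\tth_T)-\lambda_{t,0}(\tth_0)|\le\|\widehat\tth_T-\tth_0\|\,\sup_{\mathcal V(\tth_0)}\|\partial\lambda_{t,0}(\tth)/\partial\tth\|$, which is $o_{\mathsf P}(1)$ on average by \eqref{eq:conv.th.alt} and the ergodic theorem — the required integrability of $\sup_{\mathcal V(\tth_0)}|\lambda_{t,0}(\tth)|$ and of its gradient following from the explicit geometric-series representation of $\lambda_{t,0}$ together with $\E Y_t^2<\infty$, $\E\|\bb X_t\|^2<\infty$ (consequences of $\E\|\bb Z_t\|^2<\infty$ in Assumption~\ref{as:stat}). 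Combining this with $|u^{Y_t}|\le1$, $|g_\lambda(u)|\le1$ on $[0,1]$, the bound $|g_\lambda(u)-g_{\lambda'}(u)|\le M_1|\lambda-\lambda'|$ from \eqref{eq:as_g_H1} and $|e^{\texttt{i}a}-e^{\texttt{i}b}|\le|a-b|$, I would obtain $|D_T(u;\bb v)-\widetilde D_T(u;\bb v)|\le A_T+\|\bb v\|B_T$ with $A_T,B_T\Pto0$ not depending on $(u,\bb v)$; hence, using $\int w<\infty$, $\int\omega<\infty$ and $\int\|\bb v\|^2\omega<\infty$ from Assumption~\ref{as:W}, $\int_0^1\int_{\R^d}|D_T-\widetilde D_T|^2W\Pto0$, and by Cauchy--Schwarz together with the bound $|D_T|,|\widetilde D_T|\le2$ it is enough to analyse $\int_0^1\int_{\R^d}|\widetilde D_T|^2W$.

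For the latter, I would note that $\{(Y_t,\lambda_{t,0}(\tth_0),\bb X_t^\top)^\top\}$ is a measurable image of the stationary ergodic process $\{(Y_t,\bb X_t^\top)^\top\}$, hence stationary and ergodic, and apply the ergodic theorem to the real and imaginary parts of $\widetilde D_T$ (all summands bounded by $2$). For each fixed $(u,\bb v)$ this gives, almost surely, $\mathrm{Re}\,\widetilde D_T(u;\bb v)\to a(u,\bb v)$ and $\mathrm{Im}\,\widetilde D_T(u;\bb v)\to b(u,\bb v)$, where — conditioning on $\mathcal I_0$ and using that $\lambda_{1,0}(\tth_0)$ and $\bb Z_{1,0}(\tth_0)$ are $\mathcal I_0$-measurable while $\E[u^{Y_1}\mid\mathcal I_0]=g_{\lambda_{1,A}}(u)$ (the true conditional law of $Y_1$ being $\mathcal F_{\lambda_{1,A}}$) —
\[
a(u,\bb v)=\E\big\{[g_{\lambda_{1,A}}(u)-g_{\lambda_{1,0}(\tth_0)}(u)]\cos(\bb v^\top\bb Z_{1,0}(\tth_0))\big\},\quad b(u,\bb v)=\E\big\{[g_{\lambda_{1,A}}(u)-g_{\lambda_{1,0}(\tth_0)}(u)]\sin(\bb v^\top\bb Z_{1,0}(\tth_0))\big\}.
\]
Thus $|\widetilde D_T(u;\bb v)|^2\to a^2(u,\bb v)+b^2(u,\bb v)$ almost surely for a.e.\ $(u,\bb v)$ (Fubini), and since $|\widetilde D_T|^2\le4$ with $4W$ integrable, dominated convergence gives $\int_0^1\int_{\R^d}|\widetilde D_T|^2W\to\int_0^1\int_{\R^d}(a^2+b^2)W$ almost surely.

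It then remains to identify the limit. For fixed $u$, $a(u,\cdot)$ is even and $b(u,\cdot)$ is odd in $\bb v$, so $a(u,\cdot)b(u,\cdot)$ is odd, and the symmetry $\omega(\bb v)=\omega(-\bb v)$ forces $\int_{\R^d}a(u,\bb v)b(u,\bb v)\omega(\bb v)\,\mathrm{d}\bb v=0$. Hence $\int_{\R^d}(a^2+b^2)\omega=\int_{\R^d}(a+b)^2\omega=\int_{\R^d}\zeta^2(u,\bb v)\omega(\bb v)\,\mathrm{d}\bb v$, the last equality because $a(u,\bb v)+b(u,\bb v)=\E\{[g_{\lambda_{1,A}}(u)-g_{\lambda_{1,0}(\tth_0)}(u)]\alpha(\bb Z_{1,0}(\tth_0),\bb v)\}=\zeta(u,\bb v)$ by the definition of $\alpha(\cdot,\cdot)$; integrating against $w(u)$ and combining with the first step yields $\Delta_{T,W}/T\Pto\int_0^1\int_{\R^d}\zeta^2(u,\bb v)w(u)\omega(\bb v)\,\mathrm{d}\bb v\,\mathrm{d}u$. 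I expect the first step to be the main obstacle: one must control the geometric decay of the initialization error uniformly in $t$ and in $\tth\in\mathcal V(\tth_0)$ and simultaneously linearize the passage from $\widehat{\bb Z}_{t,0}$ to $\bb Z_{t,0}(\tth_0)$, which produces the factor $\|\bb v\|$ that can only be absorbed thanks to $\int\|\bb v\|^2\omega<\infty$ (Assumption~\ref{as:W}) and $\E\|\bb Z_t\|^2<\infty$ (Assumption~\ref{as:stat}); the remaining steps are a routine combination of the ergodic theorem, conditioning on the past, and the symmetry argument above.
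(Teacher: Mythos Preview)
Your proposal is correct and follows essentially the same route as the paper: replace the estimated conditional means by the stationary sequence $\lambda_{t,0}(\tth_0)$ (controlling the initialization error by the geometric decay $\bar\beta^{t-1}$ and the $\tth$-error via a mean-value bound with remainder proportional to $\|\bb v\|$, absorbed through $\int\|\bb v\|^2\omega<\infty$), and then identify the pointwise limit through ergodicity and conditioning on $\mathcal I_0$. The only cosmetic difference is that the paper passes to the real $\alpha$-form $\cos+\sin$ at the outset (so the limit is $\zeta$ directly) and uses a first-order Taylor expansion in $\tth$ together with an explicit martingale/ergodic split $A_1+A_2$, whereas you keep the complex exponential, apply the ergodic theorem and dominated convergence, and recover $\int\zeta^2\omega$ from $\int(a^2+b^2)\omega$ via the parity argument $\int ab\,\omega=0$ at the very end.
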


In order to further discuss the conditions of consistency, write $\Delta^\zeta_{\infty,W}$ for each probability limit of $\Delta_{T,W}/T$ figuring in Theorems \ref{th3}, \ref{th4} and \ref{th5}.  Since $\zeta$ is a continuous function on $(0,1)\times \R^d$,  $\Delta^\zeta_{\infty,W}=0$ only if  $\zeta(u,\bb v)$ is equal to zero identically in $(u,\bb v) \in (0,1)\times \R^d$. On the other hand if $\zeta(u,\bb v)\neq 0$ for some $(u,\bb v) \in (0,1)\times \R^d$, then it follows from assumption \ref{as:W} that 
$\Delta^\zeta_{\infty,W}>0$, and consequently the corresponding test that rejects the null hypothesis  ${\cal{H}}_0$ 
for large values of $\Delta_{T,W}$ is consistent, in the sense that the rejection probability is equal to one asymptotically under the violations of ${\cal{H}}_0$ considered herein.

In this connection note that in Theorem \ref{th3} $g^A_{\lambda}\neq  g_{\lambda}$, and therefore 
$g_{\lambda_1}^A(u)-g_{\lambda_1}(u)$ is generally a non-zero random variable. If in addition $g^A_{\lambda}(u)>g_{\lambda}(u)$ for all $u\in(0,1)$ (which is the case for $\mathcal{F}_{\lambda}$ Poisson and $\mathcal{F}^A_{\lambda}$ a Negative Binomial), then $g_{\lambda_1}^A(u)-g_{\lambda_1}(u)>0$ for all $u\in(0,1)$. 
In the case of Theorems \ref{th4} and \ref{th5},
 if the condition~\eqref{eq:SOC} holds sharply in the sense that $\{\lambda_1 <  \lambda_2 \implies g_{\lambda_1}(u)>g_{\lambda_2}(u) \text{ for all } \ u\in(0,1) \}$  (as, e.g., for the Poisson distribution), then  $\lambda_{1,A} - \lambda_{1,0} \ne 0$ a.s. implies that 
$g_{\lambda_{1,A}}(u)-g_{\lambda_{1,0}}(u)\ne 0$ a.s. for all $u\in(0,1)$.
  Also observe that $\alpha(\bb v_1,\bb v_2)=0$ only if $\bb v_1^\top \bb v_2=(3\pi/4)+k\pi, \ k=0,\pm1,\pm2,... \ $. 
  Hence in each of the three cases of alternatives considered by Theorems \ref{th3}--\ref{th5}, the random function within the corresponding expectation in the definition of $\zeta(u,\bb v)$ is generally non-zero. 
  This of course does not rule out the case of $\zeta(\cdot,\cdot)$ being identically equal to zero, but alludes to the very special circumstances under which the test based on $\Delta_{T,W}$ will miss any of the alternatives considered herein, at least for large sample size $T$.   In this connection, the small sample behavior of the test based on $\Delta_{T,W}$ under various realizations of the alternatives considered in Theorems \ref{th3}--\ref{th5} is investigated by means of  a Monte Carlo simulation study in Section~\ref{sec:simul} and shows that the new test has non--trivial power against several instances of violations of the null hypothesis.


\section{Practical computation and bootstrap test}\label{sec:boot}

 \subsection{Analytic computations}

Let the weight function $W$ 
be of the form  $W(u,\bb v)=w(u)\omega(\bb v)$. Due to \eqref{Dnsum}
 the test statistic $\Delta_{T,W}$  takes the form
\begin{equation} \label{testsum}
\Delta_{T,W}= \frac{1}{T} \sum_{t,s=1}^T K_{w,ts} \widetilde{K}_{\omega,ts},
\end{equation}
with 
\[
K_{w,ts} = \int_{0}^1 \widehat{\eps}_t(u)\widehat{\eps}_s(u) w(u)\mathrm{d}u= \int_{0}^1 \left[u^{Y_t} - g_{\widehat{\lambda_t}}(u)\right]\left[u^{Y_s} - g_{\widehat{\lambda_s}}(u)\right] w(u)\mathrm{d}u,
\]
and
 $\widetilde{K}_{\omega,ts}:=\widetilde{K}_{\omega}(\bb {\widehat {Z}}_t-\bb {\widehat {Z}}_s)$,
where
\begin{equation*} 
\widetilde{K}_\omega(\bb z)=\int_{\R^d} \cos(\bb v^\top \bb z) \omega(\bb v) {\rm{d}}\bb v.
\end{equation*}
 The integral $\widetilde{K}_\omega(\cdot)$ 
can be easily   analytically computed if the  weight function $\omega(\cdot)$ corresponds to the density of a spherical distribution. Then we immediately have $\widetilde{K}_\omega(\bb z)=\Xi(\|\bb z\|)$ with the function $\Xi(\cdot)$ being the generator of the characteristic function of the chosen spherical distribution. Hereafter we use $\Xi:=\Xi_{\gamma,\eta}$, with 
\begin{equation*} 
\Xi_{\gamma,\eta}(\xi)=e^{-\gamma\xi^\eta}, \ \xi,\gamma>0, \ \eta \in (0,2],
\end{equation*} 
corresponding to the spherical stable characteristic function.  Notice that Assumption~\ref{as:W-AR} is satisfied for all $\gamma>0$ and $\eta \in (0,2],$, while only $\eta=2$ fulfills the requirements of Assumption~\ref{as:W}.

The integral $K_{w,ts}$ can be further simplified depending on the PGF $g_{\lambda}$ and the choice of $w(u)$. Commonly used function is
 $w_{\rho}(u)=u^{\rho}$ for $\rho\geq 0$. Another possibility is to take a weight function proportional to a density of a beta distribution
$w_{\rho,\kappa}(u)=u^{\rho}(1-u)^{\kappa}$ for $\rho,\kappa\geq 0$, which enables to regulate the weight given to neighborhoods of the origin and point $u=1$.  The case of linear Poisson PARX model and weight function $w_{\rho}$ is considered in more detail in Section~\ref{sec:PARX}.

 \subsection{Bootstrap version of the test}

  Theorem \ref{th1}  provides the asymptotic distribution of the test statistics $\Delta_{T,W}$  under the null hypothesis $\mathcal{H}_0$, but this distribution depends
on  several unknown quantities.  To get  an approximation for the desired critical value  one can   generate the Gaussian  process described in Theorem  \ref{th1} with the unknown quantities replaced by their estimators and then calculate the integral and its quantiles. However, here we propose  a parametric bootstrap.

For given data $(Y_1,\ldots,Y_T;\bb X_1,\ldots, \bb X_T)$ and specified weight function $W$ 
the test proceeds as follows:
\begin{enumerate}[(i)]
  \item  Calculate the  estimator $\widehat { \bb \theta}_T=\widehat { \bb \theta}_T  (Y_1,\ldots,Y_T;\bb X_1,\ldots, \bb X_T )$ of $\bb \theta$ and the test statistic $ \Delta_{W,T}=\Delta_{W,T} (Y_1,\ldots, Y_T;\bb X_1,\ldots, \bb X_T)$.
\item Generate bootstrap observations  $\bb X^*_1,\ldots,\bb X_T^*$ from  $\bb X_1,\ldots,\bb X_T$ by a block bootstrap. 
\item Generate the bootstrap observations  $Y^*_1,\ldots, Y_T^*$ recursively 
from the  distribution $\mathcal{F}_{ \widetilde {\lambda}^*_t(\widehat{\bb \theta}_T)}$ with mean
$ \widehat {\lambda}^*_t(\widehat{\bb \theta}_T)$ defined as
     $\widehat {\lambda}_t(\bb \theta)$
      with   $\bb \theta$ replaced by $\widehat {\bb \theta}_T$ and $Y_{t-i}$ replaced by $Y^*_{t-i}$ and $\bb X_t$ replaced by $\bb X_t^*$.
\item Calculate the bootstrap estimate $\widehat{ \bb \theta}_T^*= \widehat { \bb \theta}_T  (Y^*_1,\ldots,Y^*_T;\bb X^*_1,\ldots, \bb X^*_T )$ of $\bb \theta$ and  the bootstrap test statistic $\Delta_{W,T}^*:=\Delta_{W,T}^* (Y^*_1,\ldots, Y^*_T;\bb X_1^*,\ldots, \bb X_T^*)$.
\item  Repeat steps (ii)--(iv) a number of times, say $B$, thereby producing the test statistics
     ($  \Delta_{W,T,b}^*, \ b=1,...,B$), where $\Delta_{W,T,b}^*= \Delta_{W,T,b}^* (Y^*_{1b},\ldots, Y_{Tb}^*,\bb X^*_{1 b},\ldots,\bb X_{Tb}^*)$.
\item Reject the null hypothesis if $ \Delta_{W,T}>  \Delta^*_{W,T, B- \lfloor B\alpha\rfloor}$, where 
     $ \Delta_{W,T,(1)}^*
      \leq \ldots   \leq  \Delta_{W,T,(B)}^*  $  stand for the order statistics corresponding to ($  \Delta_{W,T,b}^*, \ b=1,...,B$), and $\alpha$ denotes the prescribed level of significance.  
\end{enumerate}

The simulation study in next section uses overlapping block bootstrap with fixed block length $l$ with $l\approx n^{1/3}$.   The initial values for computation of $\widehat{\lambda}_t$ and $\widehat{\lambda}_t^*$   are chosen as    $\widehat \lambda_0=0$, $Y_0=Y_1$ and $\bb X_0=\bb X_1$.

\bigskip

Using similar tools as in \cite{Neu21}  while examining step--by--step the proof of Theorem \ref{th1}, one will come to the conclusion that  the  approximation of the critical value by the above parametric bootstrap is asymptotically correct if  the original data follow the null hypothesis. 

\section{Application to PARX model} \label{sec:PARX}

The linear Poisson autoregression with exogenous covariates (PARX) assumes that $\mathcal{F}_{\lambda}$ is Poisson, so the corresponding PGF takes form
$
g_\lambda(u)=e^{\lambda(u-1)},
$
 and the conditional mean is modelled as \eqref{eq:LIN} for some $p\geq 1$ and $q\geq 0$ and an unspecified model 
 parameter~$\tth$. 
 In the following this particular hypothesis is referred to as $\mathcal{H}_0^{\rm{P}}$. 
Then  
\[
\widehat \varepsilon_t(u)=u^{Y_{t}}-e^{\widehat{\lambda}_t(u-1)},  
\]
where $\widehat{\lambda}_t$ are computed recursively as described in Section~\ref{sec:gen}.  The integral $K_{w,ts}$ from \eqref{testsum} can be computed as
$K_{w,ts}:=K_w({\widehat {\varepsilon}}_t,{\widehat {\varepsilon}}_s)$ for 
\begin{equation*}
K_w(\varepsilon_1,\varepsilon_2)=\int_{0}^1 (u^{y_{1}}-e^{\lambda_1(u-1)})  (u^{y_{2}}-e^{\lambda_2(u-1)}) w(u){\rm{d}}u,
\end{equation*} where  $\lambda_t$ and $y_t, \ t=1,2$, are used as generic notations for pairs of  parameters and associated responses, respectively.
To proceed further hereafter we adopt the weight function 
$w(u)=w_\rho(u)=u^\rho$,  $\rho\geq 0$,
and write $K^{(\rho)}(\varepsilon_1,\varepsilon_2):=K^{(\rho)}(y_1,y_2;\lambda_1,\lambda_2)$,  for the resulting integral which after some straightforward algebra is rendered as
\begin{align*}
K^{(\rho)}(y_1,y_2;\lambda_1,\lambda_2)&=\frac{1}{1+y_1+y_2+\rho}\\
&- e^{-\lambda_2} I(y_1+\rho,\lambda_2)-e^{-\lambda_1} I(y_2+\rho,\lambda_1)+e^{-(\lambda_1+\lambda_2)} I(\rho,\lambda_1+\lambda_2),
\end{align*}
where
\begin{equation*} 
I(a,\theta)=\int_{0}^1 u^a e^{\theta u} {\rm{d}}u=\frac{(-\theta)^{-a}}{\theta} \left(\Gamma(1+a,-\theta)-\Gamma(1+a)\right).
\end{equation*}

Then the test statistic figuring in \eqref{testsum}  may be written as
\begin{equation*} 
\Delta_{T,W}= \frac{1}{T} \sum_{t,s=1}^T \widehat {K}_{ts}^{(\rho)} \Xi_{\gamma,\eta}(\|\bb {\widehat Z}_t-\bb {\widehat Z}_s\|),
\end{equation*}
with $\widehat {K}_{ts}^{(\rho)}={K}^{(\rho)}(y_t, y_s;\widehat\lambda_t,\widehat\lambda_s)$ and $\widehat {\bb Z}_t$ defined
recursively as described in Section~\ref{sec:gen}. 

\section{Simulations}\label{sec:simul}

This simulation study explores the 
 finite sample behavior of the suggested bootstrap test  based on $\Delta_{T,W}$ for the null hypothesis  $\mathcal{H}_0^{\rm{P}}$, where the conditional distribution $\mathcal{F}_{\lambda}$ is Poisson. 
The conditional mean is specified either as an ARX(1) kind of model
\begin{equation}
\lambda_t=\omega+\alpha_1 Y_{t-1} + c\left(\cos X_{t-1}+1\right), \label{eq:lam1}\tag{S1}
\end{equation}
or as a GARCHX(1,1) type of model
\begin{equation}
\lambda_t=\omega+\alpha_1 Y_{t-1} + 
\beta_1 \lambda_{t-1}+c\left(\cos X_{t-1}+1\right).\label{eq:lam2}\tag{S2}
\end{equation}
The exogenous series $\{X_t\}$ was generated from a stationary AR(1) model $X_t= \rho_X X_{t-1}+\eps_t$ for $\rho_X=0.5$ and $\{\eps_t\}$ iid centred normal with variance  $(1-\rho_X^2)^{-1}$. The model parameters were estimated by the maximum likelihood method. The performance of the testing procedure was investigated under the null hypothesis. Moreover the alternative hypothesis settings considered are the cases whereby we have:
\begin{enumerate}[(a)]
\item Incorrect model specification. 
Observations were generated  from a model of an incorrect  order, while the function $\pi$ from \eqref{eq:hNONLIN} and the conditional distribution $\mathcal{F}_{\lambda}$ were specified correctly. Namely,  the test for \eqref{eq:lam1} was conducted for observations from the GARCHX(1,1) model \eqref{eq:lam2} and from a ARX(2) kind of model
\begin{equation}
\lambda_t=\omega+\alpha_1 Y_{t-1} +\alpha_2Y_{t-2}+ c \left(\cos X_{t-1}+1\right),\label{eq:lam3}\tag{S3}
\end{equation}
which was used as
 an alternative for testing \eqref{eq:lam2} as well. 
\item Incorrect conditional distribution. Observations were generated from a model with a correctly specified conditional mean $\lambda_t$ but the conditional distribution was
 the Negative Binomial  $\mathsf{NB}(\lambda_t,r)$ with a dispersion parameter $r$. Recall that if a random variable follows $\mathsf{NB}(\lambda,r)$ then its mean is $\lambda$ and variance is $\lambda(1+\lambda/r)$. 
\item Incorrectly specified dependence on the exogenous variable $X_t$.  Observations were generated from a Poisson model with a correct orders $p,q$ and 
 $\pi(X_{t-1};c)=c(\sin X_{t-1}+1)$ (abbreviated as ``sin" alternative) or 
$ \pi(X_{t-1};c)=c(2-|X_{t-1}|)I[|X_{t-1}|<2]$ (abbreviated as ``abs" alternative) and we test for  \eqref{eq:lam1} or \eqref{eq:lam2}, where $\pi(X_{t-1};c)=c(\cos X_{t-1}+1)$. 
\end{enumerate}
Results are presented for sample sizes $T=100,200,500$ and $T=1\,000$ at level of significance $\alpha=0.05$. 
The weight function $W(u,\boldsymbol{v})$ was considered as described in Section~\ref{sec:PARX}.
The tuning parameter $\rho$ was set as $\rho=0$, while the parameters
$\gamma,\eta$ were considered in the following combinations
\[
(\gamma,\eta)\in\Bigl\{
\left(1/4,1/4\right),  \left(1/2,1/2\right),  \left(1/2,1\right), \left(1,1/2\right),(1,1),(1,2),(2,2)\Bigr\}.
\]
 For model (S1) all these settings satisfy the assumptions required by the weight function, but for (S2) we have theoretical justification only for the last two choices of $(\gamma,\eta)$ due to the stronger requirements in Assumption~\ref{as:W}. Despite this fact, we explore the behavior of the bootstrap test for all the choices of $(\gamma,\eta)$ listed above, so we can investigate how sensitive the  procedure is to violation of  Assumption~\ref{as:W}.

In addition, the behavior of the test statistic $\Delta_{T,W}$ was compared to tests based on the following two test  statistics:
\[
\Delta^{(0)}_{T}= T \int_{0}^1  \left| \frac{1}{T} \sum_{t=1}^T {\widehat {\varepsilon}^2}_t(u)\right|^2
 {\rm{d}}u, \qquad 
\Delta^{(1)}_{T}= \sqrt{T} \sup_{0<u<1}  \left| \frac{1}{T} \sum_{t=1}^T {\widehat {\varepsilon}}_t(u)\right|.
\]
The significance of $\Delta_{T,W}$ and  $\Delta^{(i)}_{T}$, $i=0,1$, was evaluated using the parametric  bootstrap described in Section \ref{sec:boot} with resampling size $B=499$. The simulations were conducted in the {\tt{R}}-computing environment  for $M=1\,000$ repetitions.

The size of the test 
in Table~\ref{tab:size} is presented for the setting $\omega=0.2,$ $\alpha_1=0.3$, $\pi_1=0.5$ for \eqref{eq:lam1} and for $\omega=0.2$, $\alpha_1=0.3$, $\beta_1=0.3$, $c=0.5$ for \eqref{eq:lam2}. 
The results show that the test based on $\Delta_{T,W}$ and $\Delta^{(0)}_T$ keeps the prescribed confidence level. The test corresponding to $\Delta^{(1)}_{T}$ seems to be slightly conservative for \eqref{eq:lam1} model with $T<1\,000$.

\begin{table}[htbp]
\centering
\begin{tabular}{l|rrrrrrr|rr}
  \hline  \hline
&\multicolumn{9}{c}{${\cal{H}}^{\rm{P}}_0$: ARX(1)  model \eqref{eq:lam1}}\\
\hline
&\multicolumn{7}{c|}{$\Delta_{T,W}$ with $\rho=0$ and $(\gamma,\eta)$ }&\multicolumn{2}{c}{}\\
$T$ &  $(\frac14,\frac14)$  & $(\frac12,\frac12)$  & $(\frac12,1)$ & $(1,\frac12)$ & (1,1) & (1,2) &    (2,2) & $\Delta^{(0)}_{T}$ &$\Delta^{(1)}_{T}$ \\
 \hline
  100 & 0.054 & 0.050 & 0.040 & 0.041 & 0.046 & 0.045 & 0.046 & 0.052 & 0.027 \\ 
  200 & 0.036 & 0.043 & 0.053 & 0.054 & 0.057 & 0.052 & 0.061 & 0.038 & 0.037 \\ 
  500 & 0.047 & 0.050 & 0.049 & 0.048 & 0.055 & 0.054 & 0.057 & 0.048 & 0.025 \\ 
  1000 & 0.047 & 0.048 & 0.048 & 0.051 & 0.053 & 0.049 & 0.043 & 0.050 & 0.046 \\ 
   \hline  
&\multicolumn{9}{c}{${\cal{H}}^{\rm{P}}_0$: GARCHX(1,1)  model \eqref{eq:lam2}}\\
\hline
&\multicolumn{7}{c|}{$\Delta_{T,W}$ with $\rho=0$ and $(\gamma,\eta)$ }&\multicolumn{2}{c}{}\\
$T$ &  $(\frac14,\frac14)$  & $(\frac12,\frac12)$  & $(\frac12,1)$ & $(1,\frac12)$ & (1,1) & (1,2) &    (2,2) & $\Delta^{(0)}_{T}$ &$\Delta^{(1)}_{T}$ \\
  \hline
  100 & 0.040 & 0.041 & 0.037 & 0.042 & 0.037 & 0.039 & 0.039 & 0.041 & 0.032 \\ 
  200 & 0.051 & 0.043 & 0.038 & 0.041 & 0.035 & 0.028 & 0.030 & 0.053 & 0.046 \\ 
  500 & 0.049 & 0.050 & 0.048 & 0.046 & 0.039 & 0.049 & 0.049 & 0.054 & 0.055 \\ 
  1000 & 0.048 & 0.041 & 0.041 & 0.039 & 0.040 & 0.040 & 0.041 & 0.041 & 0.055 \\ 
   \hline \hline
\end{tabular}\caption{Size of the bootstrap test for $\lambda_t$ specified by \eqref{eq:lam1} and \eqref{eq:lam2} for $\alpha=0.05$.} \label{tab:size}
\end{table}

 \begin{table}[htbp]
\centering 
\begin{tabular}{ll|rrrrrrr|rr}
 \hline\hline
&&\multicolumn{7}{c|}{$\Delta_{T,W}$ with $\rho=0$ and $(\gamma,\eta)$ }&\multicolumn{2}{c}{}\\&
 Alt. &  $(\frac14,\frac14)$  & $(\frac12,\frac12)$  & $(\frac12,1)$ & $(1,\frac12)$ & (1,1) & (1,2) & (2,2) & $\Delta^{(0)}_{T}$ &$\Delta^{(1)}_{T}$ \\
 \hline
\multirow{4}{*}{\rot{ARX(2)}}  & 100 & 0.485 & 0.462 & 0.439 & 0.397 & 0.379 & 0.346 & 0.318 & 0.365 & 0.446 \\ 
   & 200 & 0.772 & 0.729 & 0.684 & 0.643 & 0.595 & 0.554 & 0.518 & 0.687 & 0.767 \\ 
   & 500 & 0.998 & 0.989 & 0.969 & 0.967 & 0.934 & 0.886 & 0.859 & 0.985 & 1.000 \\ 
   & 1000 & 1.000 & 1.000 & 0.999 & 1.000 & 0.998 & 0.991 & 0.990 & 1.000 & 1.000 \\ 
   \hline
\multirow{4}{*}{\rot{GARCHX}}  & 100 & 0.102 & 0.097 & 0.088 & 0.081 & 0.070 & 0.065 & 0.062 & 0.045 & 0.061 \\ 
   & 200 & 0.154 & 0.158 & 0.146 & 0.128 & 0.116 & 0.113 & 0.104 & 0.116 & 0.137 \\ 
   & 500 & 0.310 & 0.309 & 0.259 & 0.251 & 0.219 & 0.182 & 0.171 & 0.326 & 0.317 \\ 
   & 1000 & 0.576 & 0.539 & 0.386 & 0.440 & 0.314 & 0.242 & 0.226 & 0.534 & 0.600 \\ 
   \hline
\multirow{4}{*}{\rot{NB}}  & 100 & 0.639 & 0.632 & 0.575 & 0.609 & 0.460 & 0.318 & 0.263 & 0.618 & 0.522 \\ 
   & 200 & 0.889 & 0.884 & 0.841 & 0.858 & 0.705 & 0.532 & 0.408 & 0.881 & 0.812 \\ 
   & 500 & 1.000 & 1.000 & 0.999 & 0.999 & 0.987 & 0.911 & 0.795 & 1.000 & 0.995 \\ 
   & 1000 & 1.000 & 1.000 & 1.000 & 1.000 & 1.000 & 0.999 & 0.989 & 1.000 & 1.000 \\ 
   \hline
\multirow{4}{*}{\rot{sin}} & 100 & 0.153 & 0.534 & 0.728 & 0.675 & 0.738 & 0.724 & 0.594 & 0.099 & 0.089 \\ 
   & 200 & 0.343 & 0.893 & 0.972 & 0.953 & 0.969 & 0.968 & 0.914 & 0.163 & 0.145 \\ 
   & 500 & 0.887 & 1.000 & 1.000 & 1.000 & 1.000 & 1.000 & 1.000 & 0.292 & 0.283 \\ 
   & 1000 & 1.000 & 1.000 & 1.000 & 1.000 & 1.000 & 1.000 & 1.000 & 0.608 & 0.546 \\ 
   \hline
\multirow{4}{*}{\rot{abs}}  & 100 & 0.058 & 0.054 & 0.053 & 0.057 & 0.064 & 0.059 & 0.061 & 0.049 & 0.047 \\ 
   & 200 & 0.058 & 0.062 & 0.077 & 0.084 & 0.091 & 0.101 & 0.103 & 0.054 & 0.054 \\ 
   & 500 & 0.076 & 0.091 & 0.091 & 0.110 & 0.126 & 0.116 & 0.130 & 0.071 & 0.065 \\ 
   & 1000 & 0.095 & 0.110 & 0.128 & 0.162 & 0.190 & 0.232 & 0.235 & 0.103 & 0.087 \\ 
  \hline\hline
  \end{tabular}\caption{Power of test for ARX(1) model \eqref{eq:lam1} for data generated from ARX(2) model \eqref{eq:lam3}, data from GARCHX(1,1) model \eqref{eq:lam2},
 data from the Negative Binomial distribution with $r=3$ (NB), and data from incorrectly specified link part $\pi$, where the data are generated using $\pi(x;c)=c(\sin x+1)$ (sin) or $\pi(x;c)=c (2-|x|)I[|x|<2]$ (abs), while we test $\pi(x;c)=c(\cos x +1)$.}\label{tab:AR-H1}
\end{table}
 
The power of the test for \eqref{eq:lam1} in Table~\ref{tab:AR-H1} was computed for data generated from a ARX(2) type of model \eqref{eq:lam3} with  $\omega=0.2,$ $\alpha_1=0.3$, $\alpha_2=0.6$, $c=0.5$,  
a GARCHX(1,1)  model \eqref{eq:lam2} with  $\omega=0.2,$ $\alpha_1=0.3$, $\beta_1=0.6$, $c=0.5$,  a Negative Binomial model with $r=3$ and $\lambda_t$ given by \eqref{eq:lam1}
 for  $\omega=0.2,$ $\alpha_1=0.3$, $c=0.5$, and for a model with misspecified $\pi$ with $\omega=0.2,$ $\alpha_1=0.3$, $c=0.5$. 
 
A first observation is that the test based on $\Delta_{T,W}$ clearly outperforms its competitors for the ``$\sin$" misspecification alternative of the $\pi$ function regardless of the choice of  the tuning parameters $\gamma$ and $\eta$. Otherwise the new test appears to be often sensitive to the choice of $\gamma$ and $\eta$, with  the choices   $(\gamma,\eta)=(1/4,1/4)$, and    $(\gamma,\eta)=(1/2,1/2)$ being overall preferable. In this connection,  the test based on  $(\gamma,\eta)=(1/4,1/4)$, 
outperforms the tests based on $\Delta^{(0)}_{T}$, and   $\Delta^{(1)}_{T}$, nearly uniformly over the sample size $T$ and against all alternatives, but often the power differential is small. The test corresponding to   $(\gamma,\eta)=(1/2,1/2)$ also performs well against competitors. In fact in the case of the ``$\sin$" alternative, even larger values of the pair $(\gamma,\eta)$ lead to improved power, but at the same time all tests find it rather hard to distinguish an  ``abs" misspecification alternative of the $\pi$ function.

 \begin{table}[htpb]
\centering 
\begin{tabular}{ll|rrrrrrr|rr}
  \hline\hline
  &&\multicolumn{7}{c|}{$\Delta_{T,W}$ with $\rho=0$ and $(\gamma,\eta)$ }&\multicolumn{2}{c}{}\\
 Alt. &$T$ &  $(\frac14,\frac14)$  & $(\frac12,\frac12)$  & $(\frac12,1)$ & $(1,\frac12)$ & (1,1) & (1,2) & (2,2) & $\Delta^{(0)}_{T}$ &$\Delta^{(1)}_{T}$ \\
\hline
\multirow{4}{*}{\rot{ARX(2)}}  &100 & 0.057 & 0.046 & 0.046 & 0.031 & 0.023 & 0.017 & 0.018 & 0.039 & 0.023 \\ 
   & 200 & 0.115 & 0.095 & 0.074 & 0.056 & 0.038 & 0.025 & 0.018 & 0.093 & 0.063 \\ 
   & 500 & 0.291 & 0.274 & 0.229 & 0.171 & 0.128 & 0.063 & 0.054 & 0.268 & 0.243 \\ 
   & 1000 & 0.545 & 0.602 & 0.571 & 0.484 & 0.270 & 0.212 & 0.148 & 0.508 & 0.243 \\ 
   \hline
\multirow{4}{*}{\rot{NB}}   & 100 & 0.893 & 0.898 & 0.872 & 0.894 & 0.838 & 0.723 & 0.696 & 0.873 & 0.795 \\ 
   & 200 & 0.997 & 0.996 & 0.994 & 0.996 & 0.988 & 0.923 & 0.905 & 0.996 & 0.978 \\ 
   & 500 & 1.000 & 1.000 & 1.000 & 1.000 & 1.000 & 1.000 & 0.998 & 1.000 & 1.000 \\ 
   & 1000 & 1.000 & 1.000 & 1.000 & 1.000 & 1.000 & 1.000 & 1.000 & 1.000 & 1.000 \\ 
   \hline
\multirow{4}{*}{\rot{sin}}   & 100 & 0.075 & 0.172 & 0.400 & 0.321 & 0.421 & 0.375 & 0.258 & 0.062 & 0.056 \\ 
   & 200 & 0.140 & 0.444 & 0.730 & 0.663 & 0.750 & 0.696 & 0.554 & 0.098 & 0.093 \\ 
   & 500 & 0.355 & 0.935 & 0.995 & 0.991 & 1.000 & 0.991 & 0.953 & 0.213 & 0.191 \\ 
   & 1000 & 0.741 & 1.000 & 1.000 & 1.000 & 1.000 & 1.000 & 1.000 & 0.388 & 0.191 \\ 
   \hline
\multirow{4}{*}{\rot{abs}} & 100 & 0.055 & 0.062 & 0.056 & 0.058 & 0.054 & 0.052 & 0.044 & 0.054 & 0.050 \\ 
   & 200 & 0.050 & 0.053 & 0.051 & 0.047 & 0.049 & 0.055 & 0.043 & 0.051 & 0.057 \\ 
   & 500 & 0.064 & 0.065 & 0.063 & 0.068 & 0.066 & 0.076 & 0.075 & 0.054 & 0.055 \\ 
   & 1000 & 0.065 & 0.066 & 0.071 & 0.073 & 0.082 & 0.099 & 0.091 & 0.059 & 0.055 \\ 
  \hline\hline
  \end{tabular}\caption{Power of test for GARCHX(1,1) model \eqref{eq:lam2} for data generated from ARX(2) model \eqref{eq:lam3},  
 data from the Negative Binomial distribution with $r=3$ (NB), and data from incorrectly specified $\pi$, where the data are generated using $\pi(x;c)=c(\sin x +1)$ (sin) or $\pi(x;c)=c (2-|x|)I[|x|<2]$ (abs), while we test $\pi(x;c)=c(\cos x+1)$.}\label{tab:GAR-H1}
\end{table}

The power of test for \eqref{eq:lam2} for the considered alternatives is shown in Table~\ref{tab:GAR-H1}. The data were generated from an ARX(2) model \eqref{eq:lam3} with  $\omega=0.2,$ $\alpha_1=0.3$, $\alpha_2=0.6$, and $c=0.5$, from 
a Negative Binomial  GARCHX(1,1) model  with $r=3$ and  a pair of  GARCHX(1,1) models with misspecified $\psi$ function (``sin" and ``abs")  for $\omega=0.2$, $\alpha_1=0.3$, $\beta_1=0.3$, $c=0.5$. 

The results in Table~\ref{tab:GAR-H1} are quantitatively similar with those of  Table \ref{tab:AR-H1} for the NB distributional alternative and the ``abs" misspecification of the $\pi$ function, and qualitatively similar for the other two types of alternatives, but in this case it is harder to identify a ARX(2) alternative while the power against a ``sin" misspecification of the $\pi$ function is somewhat lower for all tests. Comparison--wise, the conclusions are also analogous to those drawn from the results of  Table \ref{tab:AR-H1}, overall favouring the tests based on $\Delta_{T,W}$, with   $\gamma=\eta=1/4$ and $\gamma=\eta=1/2$. 
 It is interesting to see that the procedure leads to reasonable results even if  $\eta<2$. The corresponding bootstrap test keeps the prescribed level, and  this choice seems to be even preferable against some alternatives.

\section{Real data analysis}\label{sec:data}

Road crashes have not only a devastating impact on victims and their families, but they also bear a heavy economic cost, see, e.g., \cite[Table 2.9]{peden}.
The total cost of motor vehicle collisions involve
lost productivity, medical costs, legal and court costs, emergency services, insurance administration, travel delay, property damage and workplace losses, \citep{blincoe}.  Among all accidents, injuries to
pedestrians and bicyclists caused 7 \% of the total economic costs and 10 \%  of the total societal harm in US in 2010 \citep{blincoe}.

In this application, we 
analyze daily numbers of accidents involving
at least one pedestrian in Prague, the capital of the Czech Republic, in  years 2018--2019. The data are available at the website of the  Ministry of Interior of the Czech Republic, \verb|https://www.mvcr.cz|.
The considered time series has 730 observations and it consists of rather small counts, with minimum~0, maximum 8, and mean 1.70 collisions per day, see also Figure \ref{fig:acc1}. 
 Zero accidents are observed in  22 \% of days. 
The plot of the sample autocorrelation function (ACF) in Figure \ref{fig:acc1} reveals  
an obvious weekly  pattern. The 1-lag autocorrelation is rather small in magnitude, but it should not be ignored, as already shown in several previous analyses of traffic accident counts, see, e.g., \cite{brijs}, \cite{pedeli2011}.

\begin{figure}[pt]
\includegraphics[width=\textwidth]{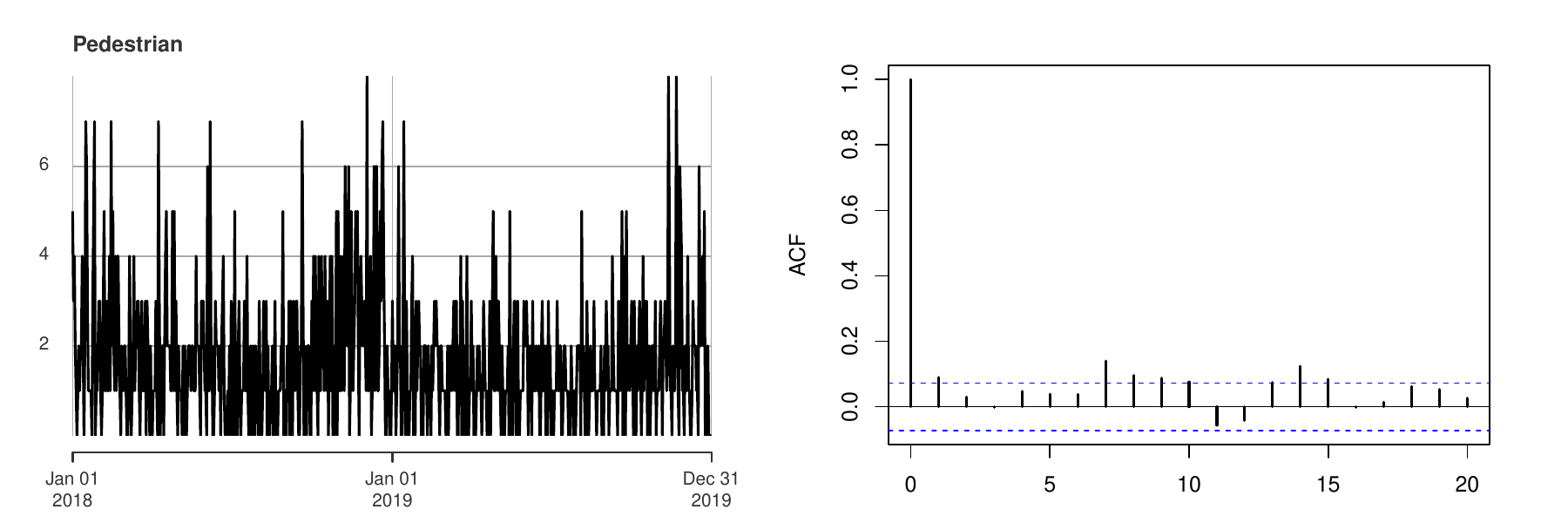}
\caption{Daily numbers of pedestrian collisions (left panel) and the sample ACF (right panel).}\label{fig:acc1}
\end{figure}

It is well known that weather affects both the driver capabilities and road conditions via  visibility impairments, precipitation, and temperature extremes. 
Hence, 
 the following explanatory variables were considered in a model for number of collisions:
 \begin{itemize}\setlength\itemsep{0ex}
\item average daily temperature (AT) in Celsius degrees,
\item daily precipitation (P) in mm,
\item deterministic dummy  covariate indicating a working day (W). 
\end{itemize} 
The corresponding weather variables were downloaded from the Czech Hydrometeorological Institute,  \verb"https://www.chmi.cz/". 
The AT ranges from $-8.40$ to $31.00$ with mean 12.83. The daily precipitation takes values between 0.00 and 42.30 with mean 1.02, with 67\% of values being 0.

During the model building the average daily temperature was transformed either to a categorical variable with 3 categories (AT in $(5,10]$ , $(10,25]$, otherwise), considered linearly via a variable  
$35-$AT, 
or different linear dependencies were considered for AT$<5$ and AT$\geq 5$, modelled via variables $\mathrm{I}($AT$<5)\cdot ($AT$+10)$ and  $\mathrm{I}($AT$\geq 5)\cdot (35-$AT).
The transformations (shifts of AT) are performed, because only positive association can be captured by a PARX model and the covariates  are assumed to take positive values. Furthermore, the precipitation was considered either linearly, or via an indicator $\mathrm{I}($P$>0)$. 
Competitive models were fitted by MLE and compared using AIC and BIC, similarly as in \cite{agosto16}. We considered AR(1), AR(2) and GARCH(1,1) structure, but the simplest AR(1) type model of structure
\begin{equation*} 
\lambda_t = \omega+ \alpha_1 Y_{t-1} + \bb c^\top X_t
\end{equation*}
seems to be the most appropriate, with $\bb X_t$ including the external regressors. 
Note that in this application we model $\lambda_t$ as a function of the current value of $\bb X_t$. The bootstrap goodness-of-fit model was performed for $B=1000$ with various tuning parameters $(\rho,\gamma,\eta)$. Following the results of the simulation study, we set $\rho=0$, $\gamma,\eta\in\{1/4,1/2,1\}$. 
A block bootstrap was applied to the stochastic variables  with block length $l=10$ and $l=30$, while the deterministic variables are held fixed.

The best models, according to the information criteria, seem to be the following two models 
\begin{align*}
M_1: \lambda_t&= 0.440+ 0.036 Y_{t-1}+ 0.245 \mathrm{I}({\rm{P}}_t>0)+0.018(35-{\rm{AT}}_t)+0.981 {\rm{W}}_t, \\
M_2: \lambda_t&= 0.448 +0.034Y_{t-1}+ 0.233 \mathrm{I}({\rm{P}}_t>0) + 0.058(10+{\rm{AT}}_t)\mathrm{I}({\rm{AT}}_t<5)\\
&\quad+ 0.018 (35-{\rm{AT}}_t)\mathrm{I}({\rm{AT}}_t\geq 5)+0.976 {\rm{W}}_t.
\end{align*}
The results of the goodness-of-fit tests for a Poisson model with $\lambda_t$ specified by $M_1$ and $M_2$  are presented in Table~\ref{tab:ped} together with the corresponding information criteria.  The obtained p-values are non-significant for all choices of the tuning parameters $(\gamma,\eta)$ and both considered bootstrap block lengths $l$. AIC slightly prefers $M_2$, while the opposite conclusion is obtained for BIC. As the p-values for the test are generally larger for $M_1$ and due to an easier interpretation, the simpler model $M_1$ is chosen as the final.  The estimated parameters suggest that the difference  in number of pedestrian collisions in working days and during weekends   is in average equal to 1. As expected,  precipitation (and consequently worse visibility) increases a risk of pedestrian collision, but the marginal effect is rather small in magnitude, and the same can be concluded for the marginal effect of temperature. 
 
\begin{table}[ht]
\centering
\begin{tabular}{ll|rrrr|rr}
  \hline   \hline
&& \multicolumn{4}{c|}{Test with $(\gamma,\eta)$}&AIC&BIC\\
& &  $(\frac14,\frac14)$& $(\frac12,\frac12)$ & $(\frac12,1)$ & $(1,\frac12)$ &&\\ 
  \hline
$M_1$  &  $l=10$  & 0.107 & 0.229 & 0.298 & 0.244 &2379.546&2402.511 \\ 
     &  $l=30$  & 0.101 & 0.260 & 0.259 & 0.236 & &\\ 
  \hline
$M_2$  &  $l=10$ &  0.102 & 0.211 & 0.219 & 0.193 &2376.500&2404.058\\ 
   &  $l=30$  & 0.124 & 0.212 & 0.184 & 0.185  & &\\ 
   \hline   \hline
\end{tabular}
 \caption{p-values for the bootstrap goodness-of-fit test for Poisson models $M_1$ and $M_2$ for daily series of pedestrian collisions. }\label{tab:ped}
\end{table}

\section{Discussion and conclusion}\label{sec:concl}

A goodness--of--fit test for count time series with exogenous covariates is proposed. The new test statistic is formulated in terms of Bieren's characterization and may be applied to general models of this type provided that the conditional PGF is specified. The limit distribution of this statistic is obtained under the null hypothesis
 of fixed, but otherwise arbitrary, order and  rather general form of the link function. Asymptotic results for non--null situations are also presented, including consistency of the test against  distributional misspecification as well as against certain types of violations of the hypothesized functional form of the conditional mean. The finite--sample performance of a bootstrap version of the suggested test is investigated by a Monte Carlo study against competing procedures, and corroborates the consistency features of our test. Finally the proposed methodology is used in order to perform explanatory analysis of the main factors affecting road accidents in the city of Prague.     

\subsection*{Acknowledgments}
The research of Marie Hu\v{s}kov\'{a} was partially supported by grant GA\v CR-- 21--13323S. The research of \v{S}\'{a}rka Hudecov\'{a} was supported by grant  GA\v CR 22--01639K.

\bibliographystyle{apalike}

\bibliography{references}

\appendix
\section{Proofs}\label{sec:proofs}
We start with some auxiliary technical lemmas.

\begin{lemma}\label{lem:limit}
Let $\{w_t\}$ be a sequence of random variables such that $\E |w_t|^{\kappa}<\infty$ for some $\kappa>0$ and all $t\geq 1$.  Let $\rho\in(0,1)$. Then
$\rho^t w_t\asto 0$   for $t\to \infty$ and also $T^{-1}\sum_{t=1}^T \rho^t w_t \asto 0$  as $T\to \infty$.
\end{lemma}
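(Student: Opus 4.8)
The plan is to obtain the pointwise (in $\omega$) convergence $\rho^t w_t\to 0$ a.s.\ from a Markov inequality together with the first Borel--Cantelli lemma, and then to derive the Ces\`aro statement $T^{-1}\sum_{t=1}^T\rho^t w_t\to 0$ a.s.\ as an immediate consequence. Throughout I use that $C:=\sup_{t\ge1}\E|w_t|^\kappa<\infty$; this is what is really needed, and it holds in every application of the lemma in the paper, where $\{w_t\}$ is a (sub)sequence of a strictly stationary process so that $\E|w_t|^\kappa$ does not depend on $t$.

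For the first claim, fix $\eps>0$. By Markov's inequality,
\[
\mathsf{P}\bigl(\rho^t|w_t|>\eps\bigr)=\mathsf{P}\bigl(|w_t|^\kappa>\eps^\kappa\rho^{-t\kappa}\bigr)\le \eps^{-\kappa}\rho^{t\kappa}\,\E|w_t|^\kappa\le \frac{C}{\eps^{\kappa}}\,(\rho^\kappa)^t .
\]
Since $\rho\in(0,1)$ we have $\rho^\kappa\in(0,1)$, so $\sum_{t\ge1}\mathsf{P}(\rho^t|w_t|>\eps)<\infty$, and the first Borel--Cantelli lemma gives $\mathsf{P}(\rho^t|w_t|>\eps\ \text{infinitely often})=0$. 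Intersecting the corresponding probability-one events over $\eps\in\{1/k:k\in\mathbb N\}$ yields a single event of probability one on which $\rho^t|w_t|\to0$, i.e.\ $\rho^t w_t\asto 0$.

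For the second claim it suffices to note that $\bigl|T^{-1}\sum_{t=1}^T\rho^t w_t\bigr|\le T^{-1}\sum_{t=1}^T|\rho^t w_t|$, and that the right-hand side is the Ces\`aro average of the sequence $\{|\rho^t w_t|\}$, which converges to $0$ a.s.\ by the first part; hence it too converges to $0$ a.s. (Equivalently, reducing first to $\kappa\le1$ via $|w_t|^{\kappa'}\le 1+|w_t|^\kappa$ and using subadditivity of $x\mapsto x^\kappa$ together with Tonelli, one gets $\E(\sum_t\rho^t|w_t|)^{\kappa}\le C\sum_t(\rho^\kappa)^t<\infty$, so the series $\sum_t\rho^t w_t$ converges absolutely a.s.\ and its partial sums are a.s.\ bounded.) There is no serious obstacle here; the only point requiring a little care is the uniform-in-$t$ moment bound flagged above, without which even the first assertion can fail.
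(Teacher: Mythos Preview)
Your proof is essentially the paper's own: Markov's inequality plus Borel--Cantelli for the first part, and the Ces\`aro limit for the second. The paper's argument is terser but identical in substance. Your explicit flag that one really needs a uniform-in-$t$ bound $\sup_t\E|w_t|^\kappa<\infty$ (or at least summability of $\rho^{t\kappa}\E|w_t|^\kappa$), rather than the mere finiteness of each $\E|w_t|^\kappa$, is a valid observation; the paper's proof silently uses the same bound when asserting that $\sum_t \rho^{t\kappa}\E|w_t|^\kappa<\infty$, and as you note this is harmless in every application since the $\{w_t\}$ there come from stationary sequences.
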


\begin{proof}
The first statement follows from the Cantelli lemma, because
\[
\sum_{t=1}^{\infty}\mathsf{P}(|\rho^t w_t| >\eps) \leq \sum_{t=1}^{\infty}\frac{\rho^{t\kappa} \E |w_t|^\kappa}{\eps^{\kappa}}<\infty
\]
by the Markov inequality and the assumption $\E |w_t|^{\kappa}<\infty$. The second claim then follows from the property of the Ces\`{a}ro limit. 
\end{proof}

In the following we often make use of the following simple relations, which hold for $a,b,c,\widetilde{a},\widetilde{b},\widetilde{c}\in \R$:
\begin{align}
ab-\widetilde{a}\widetilde{b} &= a(b-\widetilde{b})+\widetilde{b}(a-\widetilde{a}),\label{eq:ab}\\
abc-\widetilde{a}\widetilde{b}\widetilde{c}& = ab (c-\widetilde{c})+\widetilde{c}(ab-\widetilde{a}\widetilde{b})=ab(c-\widetilde{c})+a\widetilde{c}(b-\widetilde{b})+\widetilde{b}\widetilde{c}(a-\widetilde{a}).\label{eq:abc}
\end{align} 

For functions $r$ and $\pi$ from Assumption~\ref{as:r} recall that $r_{\tth}$, $r_{\tlam}$, $\pi_{\tth}$ stand for vectors of first partial derivatives and define
$
r_{\tth \tth}^{(ij)}(\bb y, \bb l; \tth) = \partial^2 r(\bb y, \bb l; \tth)/\partial \theta_i \partial \theta_j,
$
and let
\begin{equation}\label{eq:der_r}
r_{\tth \tth}(\bb y, \bb l; \tth)= \left(r_{\tth \tth}^{(ij)}(\bb y, \bb l; \tth) \right)_{i,j=1}^K = \frac{\partial^2 r(\bb y, \bb l; \tth)}{\partial \tth \partial \tth^\top}
\end{equation}
be the matrix of second order derivatives of $r$ with respect to $\tth$. 
Similarly, let
\begin{equation}\label{eq:der_r2}
r_{\tth \tlam}(\bb y, \bb \lambda; \tth)=\frac{\partial^2 r(\bb y, \bb \lambda; \tth)}{\partial \tth \partial \tlam^\top}, \  r_{\tlam \tlam}(\bb y, \bb \lambda; \tth)=\frac{\partial^2 r(\bb y, \bb \lambda; \tth)}{\partial \tlam \partial \tlam^\top}, \ 
\pi_{\tth \tth}(\bb x;\tth)=\frac{\partial^2 \pi(\bb x;\tth)}{\partial \tth \partial \tth^\top}
\end{equation}
for $\bb y \in[0,\infty)^p$, $\bb \lambda\in(0,\infty)^q$ and $\tth\in\bb\Theta$.

\begin{lemma} \label{lem:1}
Let \ref{as:h}, \ref{as:stat}, \ref{as:r} hold.
Then  there exists $\kappa>0$ such that
\[
\E \sup_{\tth \in \bb \Theta_0} \lambda_t(\tth) <\infty, \quad \E \sup_{\tth \in \bb \Theta_0} \Bigl\| \frac{\partial \lambda_t (\tth)}{\partial \tth} \Bigr\|^{\kappa} <\infty,
\]
and
\begin{align}
\sup_{\tth \in \bb \Theta_0} \bigl|\widetilde{\lambda}_t(\tth) - \lambda_t(\tth)\bigr| & < V \cdot \rho_1^t, \label{eq:lam.dif}\\
\sup_{\tth \in \bb \Theta_0}  \Bigl\| \frac{\partial \lambda_t (\tth)}{\partial \tth} -  \frac{\partial \widetilde{\lambda}_t (\tth)}{\partial \tth} \Bigr\| &< W_t \cdot \rho_2^t \notag 
\end{align}
where $\rho_1,\rho_2\in(0,1)$, and $V$, $W_t$ are random variables such that 
$\E V^\kappa<\infty$ and 
  $\sup_t \E W_t^{\kappa} <\infty$. 
\end{lemma}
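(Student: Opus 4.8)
The plan is to exploit one structural consequence of Assumption~\ref{as:r}: the $\bb\lambda$-gradient $r_{\bb\lambda}(\bb y,\bb\lambda;\tth)\equiv d(\tth)$ is constant in $(\bb y,\bb\lambda)$, so $r$ is affine in its $\bb\lambda$-block, and both $\lambda_t(\tth)$ and $\partial\lambda_t(\tth)/\partial\tth$ then solve \emph{linear} recursions whose companion matrix is the $\bb D=\bb D(\tth)$ of Assumption~\ref{as:r}, with $\sup_{\tth\in\bb\Theta_0}\rho(\bb D(\tth))<1$. Concretely, for $q\ge1$ (the case $q=0$ being immediate), writing $\bb L_t(\tth):=(\lambda_t(\tth),\dots,\lambda_{t-q+1}(\tth))^\top$ one has
\[
\bb L_t(\tth)=\bb D(\tth)\,\bb L_{t-1}(\tth)+b_t(\tth)\,\bb e_1,\qquad b_t(\tth):=r(\bb Y_{t-1:t-p},\bb 0;\tth)+\pi(\bb X_{t-1};\tth),
\]
with $\bb e_1$ the first basis vector. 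Since $\bb D(\cdot)$ is continuous on the compact set $\bb\Theta_0$ and its spectral radius stays below $1$ there, $\sup_{\tth\in\bb\Theta_0}\|\bb D(\tth)^k\|\le C\rho_0^k$ for some $C<\infty$, $\rho_0\in(0,1)$, so the recursion solves as the absolutely convergent series $\bb L_t(\tth)=\sum_{k\ge0}\bb D(\tth)^k b_{t-k}(\tth)\bb e_1$, whence $\sup_\tth\lambda_t(\tth)\le C\sum_{k\ge0}\rho_0^k\sup_\tth b_{t-k}(\tth)$. First I would bound the driving term: the Lipschitz inequality~\eqref{eq:Lipsch} gives $r(\bb y,\bb 0;\tth)\le r(\bb 0,\bb 0;\tth)+\sum_i\alpha_i y_i$ with $r(\bb 0,\bb 0;\cdot)$ bounded on $\bb\Theta_0$ by continuity, while $\E Y_t<\infty$ and $\E\sup_\tth\pi(\bb X_{t-1};\tth)<\infty$ follow from Assumption~\ref{as:stat} (using $\pi(\bb X_{t-1};\tth_0)\le\lambda_t(\tth_0)$ and a mean value estimate over $\bb\Theta_0$ with~\eqref{eq:Esup}); summing the geometric series yields $\E\sup_{\tth\in\bb\Theta_0}\lambda_t(\tth)<\infty$.

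Differentiating the recursion and using that $r_{\bb\lambda}=d(\tth)$, hence also $r_{\tth\bb\lambda}=\partial d(\tth)/\partial\tth$, is constant in $(\bb y,\bb\lambda)$, the stacked gradients $\partial\bb L_t(\tth)/\partial\tth$ satisfy the \emph{same} companion recursion, forced by $r_{\tth}(\bb Y_{t-1:t-p},\bb 0;\tth)+\pi_{\tth}(\bb X_{t-1};\tth)+r_{\tth\bb\lambda}(\tth)\,\bb L_{t-1}(\tth)$. The same series argument bounds $\sup_\tth\|\partial\lambda_t(\tth)/\partial\tth\|$ by $C\sum_{k\ge0}\rho_0^k g_{t-k}$, where, via the affine identity $r_{\tth}(\bb y,\bb\lambda;\tth)=r_{\tth}(\bb y,\bb 0;\tth)+r_{\tth\bb\lambda}(\tth)\bb\lambda$, the stationary term $g_s$ is dominated by $\sup_\tth\|r_{\tth}(\bb Y_{s-1:s-p},\bb\Lambda_{s-1:s-q};\tth)\|+\sup_\tth\|\pi_{\tth}(\bb X_{s-1};\tth)\|+C'\sup_\tth\|\bb L_{s-1}(\tth)\|$. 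Taking $\kappa\le1$ without loss of generality and using subadditivity of $x\mapsto x^\kappa$ turns the series bound into $\E\sup_\tth\|\partial\lambda_t/\partial\tth\|^\kappa\le C^\kappa\sum_k\rho_0^{k\kappa}\E g_{t-k}^\kappa$, finite by~\eqref{eq:Esup} and the first part; this is the second display.

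For the two initialization estimates the key observation is that for $t>p$ the arguments $\bb Y_{t-1:t-p}$ and $\bb X_{t-1}$ feeding $\widetilde\lambda_t(\tth)$ and $\lambda_t(\tth)$ coincide, so the discrepancy propagates only through the lagged $\lambda$'s: the stacked differences obey the \emph{homogeneous} recursion $\bb L_t(\tth)-\widetilde{\bb L}_t(\tth)=\bb D(\tth)\bigl(\bb L_{t-1}(\tth)-\widetilde{\bb L}_{t-1}(\tth)\bigr)$ for $t>p$, hence $\sup_\tth|\lambda_t(\tth)-\widetilde\lambda_t(\tth)|\le C\rho_1^{t}V$ for any fixed $\rho_1\in\bigl(\sup_{\tth\in\bb\Theta_0}\rho(\bb D(\tth)),1\bigr)$, with $V:=\sup_\tth\|\bb L_p(\tth)-\widetilde{\bb L}_p(\tth)\|$; that $\E V^\kappa<\infty$ follows from the first part of the lemma applied to $\lambda_p(\tth)$ together with the moments of the finitely many observations and prescribed initial values entering $\widetilde\lambda_p(\tth)$ (Jensen's inequality reducing $\kappa$-th moments to first moments when $\kappa\le1$). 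Differentiating, $\partial\bb L_t/\partial\tth-\partial\widetilde{\bb L}_t/\partial\tth$ solves the same companion recursion forced by $r_{\tth\bb\lambda}(\tth)\bigl(\bb L_{t-1}-\widetilde{\bb L}_{t-1}\bigr)$, of order $\rho_1^{t}V$ by the previous step; solving and absorbing the convolution $\sum_k\rho_0^k\rho_1^{t-k}$ into a slightly larger rate $\rho_2\in(\max(\rho_0,\rho_1),1)$ gives the bound $\rho_2^{t}W_t$ with $W_t$ a fixed multiple of $V$ plus the time-$p$ gradient discrepancy, so $\sup_t\E W_t^\kappa<\infty$.

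The routine part is the geometric summation. The genuine obstacles are two: first, upgrading all the pointwise-in-$\tth$ estimates to suprema over $\bb\Theta_0$ while retaining a \emph{single} exponent $\kappa$ serving all three claims — in particular establishing $\E\sup_\tth\pi(\bb X_{t-1};\tth)<\infty$ from the stated assumptions; second, the bookkeeping that turns the affine-in-$\bb\lambda$ structure of Assumption~\ref{as:r} into the clean companion-form linear recursions for $\lambda_t(\tth)$ and its gradient, since it is that linearity, together with $\sup_{\tth\in\bb\Theta_0}\rho(\bb D(\tth))<1$, that reduces every estimate to summing a geometric series in stationary (or exponentially small) quantities.
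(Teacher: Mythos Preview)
Your approach is essentially the self-contained argument that the paper bypasses: the paper's own proof is a one-line citation of Lemmas A.1--A.3 in \cite{francq21}, plus the observation that under Assumption~\ref{as:r} one has $r_{\bb\lambda\bb\lambda}=\bb 0$ and $r_{\bb\lambda\tth}=\partial d(\tth)/\partial\tth$ bounded on the compact set $\bb\Theta_0$, so that assumption A7 of that reference is met. Your companion-matrix reduction---noting that $r_{\bb\lambda}\equiv d(\tth)$ forces $r$ to be affine in its $\bb\lambda$-block, so that both $\lambda_t(\tth)$ and $\partial\lambda_t(\tth)/\partial\tth$ satisfy linear recursions driven by $\bb D(\tth)$ with $\sup_{\tth\in\bb\Theta_0}\rho(\bb D(\tth))<1$---is exactly how those lemmas are established, so there is no genuine methodological difference; you have simply unpacked the cited result. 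The uniform bound $\sup_{\tth\in\bb\Theta_0}\|\bb D(\tth)^k\|\le C\rho_0^k$ that you use does require the standard compactness-plus-continuity argument (finite cover by neighborhoods on which a suitable operator norm is below a common rate), but this is routine.

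One point deserves emphasis, and you already flag it honestly as an obstacle: your mean value step for $\E\sup_{\tth}\pi(\bb X_{t-1};\tth)<\infty$ requires $\E\sup_{\tth}\|\pi_{\tth}(\bb X_{t-1};\tth)\|<\infty$, i.e.\ $\kappa\ge 1$ in \eqref{eq:Esup}, which is not part of the hypotheses. If only $\kappa<1$ is available, your argument yields the weaker $\E\bigl(\sup_{\tth}\lambda_t(\tth)\bigr)^\kappa<\infty$ via the subadditivity of $x\mapsto x^\kappa$, not the first-moment statement the lemma records. Since the paper does not give its own proof here, it does not resolve this either; in the downstream uses (Lemmas~\ref{lem:3} and~\ref{lem:4}, proof of Theorem~\ref{th1}) only the $\kappa$-th moment version is actually needed, so the discrepancy is harmless for the paper's purposes, but it is a genuine gap between what the lemma states and what your argument delivers when $\kappa<1$.
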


\begin{proof} 

The assertions are proved in \cite[Lemma A.1., Lemma A.2, Lemma A.3]{francq21}. 
Using the notation from \eqref{eq:der_r2}, 
 it follows that under our Assumption~\ref{as:r} it holds that $r_{\tlam \tth} =r_{\tth \tlam}= \partial d(\tth)/\partial \tth$, which is 
a continuously differentiable function on  $\tTheta$. Hence the supremum of its norm over the compact set $\tTheta_0$ is finite. In addition,  $r_{\tlam \tlam} = \bb 0$, so assumption A7 of \cite{francq21} is satisfied. 
\end{proof}

\begin{lemma}\label{lem:2}
Let $p=q=1$ and assume that the functions $(r,\pi)$ satisfy Assumptions~\ref{as:h} and \ref{as:r}. Let $\{y_t\}_{t=0}^T$ and $l_0$ be some fixed real numbers and $\{\bb x_t\}_{t=0}^T$ be a sequence of real $m$-dimensional vectors. Also for  $\tth\in\bb \Theta$, 
let $\{l_t (\tth)\}_{t=1}^T$ be recursively defined as
\[
l_t(\tth) = r(y_{t-1},l_{t-1}(\tth);\tth) +\pi(\bb x_{t-1};\tth), \quad t=1,\dots,T
\] 
with $l_0(\tth)=l_0$ for some $l_0\in[0,\infty)$. 
Then using the notation from Assumption~\ref{as:r},
\begin{align*}
\frac{\partial l_t(\tth)}{\partial \tth} &= r_{\tth}(y_{t-1},l_{t-1}(\tth);\tth)+\pi_{\tth}(\bb x_{t-1};\tth) \\
&+ \sum_{k=1}^{t-1} \prod_{j=1}^k r_{\tlam}(y_{t-j},l_{t-j}(\tth);\tth) [r_{\tth}(y_{t-k-1},l_{t-k-1}(\tth);\tth)+\pi_{\tth}(\bb x_{t-k-1};\tth)]. 
\end{align*}
\end{lemma}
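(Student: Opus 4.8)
The plan is to prove the formula by induction on $t$, exploiting the recursive definition of $l_t(\tth)$ and the chain rule. This is a purely algebraic identity: with $p=q=1$, the recursion $l_t(\tth) = r(y_{t-1},l_{t-1}(\tth);\tth) +\pi(\bb x_{t-1};\tth)$ makes $l_t$ a function of $\tth$ both directly (through the explicit $\tth$ arguments of $r$ and $\pi$) and indirectly (through $l_{t-1}(\tth)$). Differentiating with respect to $\tth$ and applying the chain rule gives the one-step recursion
\[
\frac{\partial l_t(\tth)}{\partial \tth} = r_{\tth}(y_{t-1},l_{t-1}(\tth);\tth)+\pi_{\tth}(\bb x_{t-1};\tth) + r_{\tlam}(y_{t-1},l_{t-1}(\tth);\tth)\,\frac{\partial l_{t-1}(\tth)}{\partial \tth}.
\]
Here I should note that $r_{\tlam}$ is scalar (since $q=1$) and $r_{\tth},\pi_{\tth}, \partial l_t/\partial\tth$ are $K$-vectors, so the product is well defined.

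Next I would set up the induction. For the base case $t=1$, we have $l_1(\tth) = r(y_0,l_0;\tth)+\pi(\bb x_0;\tth)$ with $l_0$ independent of $\tth$, so $\partial l_1(\tth)/\partial\tth = r_{\tth}(y_0,l_0;\tth)+\pi_{\tth}(\bb x_0;\tth)$, which matches the claimed formula since the sum $\sum_{k=1}^{0}$ is empty. For the inductive step, assume the formula holds for $t-1$, i.e.
\[
\frac{\partial l_{t-1}(\tth)}{\partial \tth} = r_{\tth}(y_{t-2},l_{t-2}(\tth);\tth)+\pi_{\tth}(\bb x_{t-2};\tth) + \sum_{k=1}^{t-2} \prod_{j=1}^k r_{\tlam}(y_{t-1-j},l_{t-1-j}(\tth);\tth)\, [r_{\tth}(y_{t-k-2},l_{t-k-2}(\tth);\tth)+\pi_{\tth}(\bb x_{t-k-2};\tth)],
\]
substitute this into the one-step recursion, and reindex the sum (shifting $k \mapsto k+1$ and absorbing the factor $r_{\tlam}(y_{t-1},l_{t-1}(\tth);\tth)$ into the product, which turns $\prod_{j=1}^{k}$ starting at $y_{t-1-j}$ into $\prod_{j=1}^{k+1}$ starting at $y_{t-j}$). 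The term $r_{\tlam}(y_{t-1},\cdot)[r_{\tth}(y_{t-2},\cdot)+\pi_{\tth}(\bb x_{t-2})]$ coming from the leading block of $\partial l_{t-1}/\partial\tth$ becomes the $k=1$ term of the new sum, and the whole thing telescopes into the asserted expression.

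The only genuinely delicate point — really a bookkeeping matter rather than a mathematical obstacle — is getting the index ranges on the product $\prod_{j=1}^k$ and on the outer sum $\sum_{k=1}^{t-1}$ to line up correctly after the shift, together with matching the arguments $y_{t-k-1}$, $l_{t-k-1}(\tth)$, $\bb x_{t-k-1}$ in the residual term. I would verify this explicitly for $t=2$ and $t=3$ as a sanity check before writing the general reindexing. Differentiability of all the relevant compositions is guaranteed by Assumption~\ref{as:r} (which gives $r$ differentiable in $(\bb\lambda,\tth)$ and $\pi$ differentiable in $\tth$), so no regularity issues arise; the argument is valid pointwise for every fixed sequence $\{y_t\}$, $l_0$, $\{\bb x_t\}$ and every $\tth\in\bb\Theta$.
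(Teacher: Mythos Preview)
Your proposal is correct and follows essentially the same approach as the paper: both derive the one-step recursion $\partial l_t/\partial\tth = r_{\tlam}(y_{t-1},l_{t-1};\tth)\,\partial l_{t-1}/\partial\tth + r_{\tth}(y_{t-1},l_{t-1};\tth)+\pi_{\tth}(\bb x_{t-1};\tth)$ via the chain rule and then solve it. The only cosmetic difference is that the paper unrolls the recursion directly (writing $A_k=r_{\tlam}(y_k,l_k;\tth)$, $D_k=r_{\tth}(y_k,l_k;\tth)+\pi_{\tth}(\bb x_k;\tth)$ and expanding $A_{t-1}\,\partial l_{t-1}/\partial\tth + D_{t-1}$ step by step), whereas you phrase the same computation as a formal induction with a reindexing step.
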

\begin{proof}
We have
\[
\frac{\partial l_t(\tth)}{\partial \tth}  = r_{\tlam}(y_{t-1},l_{t-1}(\tth);\tth) \frac{\partial l_{t-1}(\tth)}{\partial \tth} +  r_{\tth}(y_{t-1},l_{t-1}(\tth);\tth)+\pi_{\tth}(\bb x_{t-1};\tth)
\]
for $t\geq 2$ and
\[
\frac{\partial l_1(\tth)}{\partial \tth}  =  r_{\tth}(y_{0},l_{0};\tth)+\pi_{\tth}(\bb x_{0};\tth).
\]
Denote 
\[
A_k = r_{\tlam}(y_{k},l_{k}(\tth);\tth), \quad k\geq 1,  \quad D_k =   r_{\tth}(y_{k},l_{k}(\tth);\tth)+\pi_{\tth}(\bb x_{k};\tth), \quad k\geq 0. 
\]
Then
\begin{align*}
\frac{\partial l_t(\tth)}{\partial \tth}  &= A_{t-1}  \frac{\partial l_{t-1}(\tth)}{\partial \tth} +  D_{t-1} = A_{t-1} \left[A_{t-2} \frac{\partial l_{t-2}(\tth)}{\partial \tth} +D_{t-2}\right]  D_{t-1} \\
&=
 A_{t-1} A_{t-2} A_{t-3} \frac{\partial l_{t-3}(\tth)}{\partial \tth} + A_{t-1}A_{t-2}D_{t-3}+ A_{t-1}D_{t-2}+  D_{t-1} =\dots \\
 &= \sum_{k=1}^{t-1} \prod_{j=1}^k A_{t-j} D_{t-k-1}+D_{t-1}, 
\end{align*}
which is the desired formula.
\end{proof}

\begin{lemma}\label{lem:3}
Let \ref{as:h},  \ref{as:stat}, \ref{as:r} hold. 
Then there exists $\kappa>0$ such that
\[
\E \sup_{\tth \in \bb \Theta_0}\left\|\frac{\partial \widetilde{\lambda}_t(\tth)}{\partial \tth} \right\|^{\kappa} <\infty
\]
for all $t\geq 1$.
\end{lemma}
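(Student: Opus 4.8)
## Proof proposal for Lemma~\ref{lem:3}

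The plan is to reduce the statement to the estimates already available from Lemma~\ref{lem:1} and Lemma~\ref{lem:2}, the triangle inequality being the only real engine. By Lemma~\ref{lem:1} we already know that $\E\sup_{\tth\in\bb\Theta_0}\|\partial\lambda_t(\tth)/\partial\tth\|^{\kappa}<\infty$ and that
$\sup_{\tth\in\bb\Theta_0}\|\partial\lambda_t(\tth)/\partial\tth-\partial\widetilde\lambda_t(\tth)/\partial\tth\|<W_t\rho_2^{t}$ with $\sup_t\E W_t^{\kappa}<\infty$ and $\rho_2\in(0,1)$. Hence, writing
$\partial\widetilde\lambda_t/\partial\tth=\partial\lambda_t/\partial\tth-(\partial\lambda_t/\partial\tth-\partial\widetilde\lambda_t/\partial\tth)$ and taking suprema,
\[
\sup_{\tth\in\bb\Theta_0}\Bigl\|\tfrac{\partial\widetilde\lambda_t(\tth)}{\partial\tth}\Bigr\|
\le \sup_{\tth\in\bb\Theta_0}\Bigl\|\tfrac{\partial\lambda_t(\tth)}{\partial\tth}\Bigr\|+W_t\rho_2^{t}.
\]
If $\kappa\le 1$ one raises both sides to the power $\kappa$ using subadditivity of $x\mapsto x^{\kappa}$; if $\kappa>1$ one uses instead the elementary bound $(a+b)^{\kappa}\le 2^{\kappa-1}(a^{\kappa}+b^{\kappa})$. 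Either way, taking expectations gives a bound by a constant times $\E\sup_{\tth}\|\partial\lambda_t/\partial\tth\|^{\kappa}+\sup_t\E W_t^{\kappa}\rho_2^{t\kappa}$, both of which are finite (uniformly in $t$, since $\rho_2^{t\kappa}\le 1$). This already finishes the case where the cited estimates of Lemma~\ref{lem:1} are taken at face value.

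An alternative, more self-contained route — probably what is intended given that Lemma~\ref{lem:2} has just been stated — is to argue directly from the explicit series representation. For the case $p=q=1$ (the general $p,q$ being analogous via the companion-matrix form $\bb D$ of Assumption~\ref{as:r}), Lemma~\ref{lem:2} applied to the estimated recursion with the initial value $\widetilde\lambda_0$ yields
\[
\frac{\partial\widetilde\lambda_t(\tth)}{\partial\tth}
= D_{t-1}+\sum_{k=1}^{t-1}\Bigl(\prod_{j=1}^{k}A_{t-j}\Bigr)D_{t-k-1},
\]
where $A_k=r_{\tlam}(\,\cdot\,;\tth)=d(\tth)$ is constant in its first two arguments by Assumption~\ref{as:r} and satisfies $\sup_{\tth\in\bb\Theta_0}|d(\tth)|\le\bar d<1$ (from the spectral-radius condition with $q=1$), and $D_k=r_{\tth}(Y_k,\widetilde\lambda_k(\tth);\tth)+\pi_{\tth}(\bb X_k;\tth)$. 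Thus $\prod_{j=1}^k A_{t-j}=d(\tth)^{k}$ and
\[
\sup_{\tth\in\bb\Theta_0}\Bigl\|\tfrac{\partial\widetilde\lambda_t(\tth)}{\partial\tth}\Bigr\|
\le \sum_{k=0}^{t-1}\bar d^{\,k}\sup_{\tth\in\bb\Theta_0}\|D_{t-k-1}\|.
\]
Now bound $\sup_{\tth}\|D_j\|\le \sup_{\tth}\|r_{\tth}(Y_j,\widetilde\lambda_j(\tth);\tth)\|+\sup_{\tth}\|\pi_{\tth}(\bb X_j;\tth)\|$; the second term has finite $\kappa$-th moment by \eqref{eq:Esup}, and for the first term one uses continuity of $r_{\tth}$ together with \eqref{eq:lam.dif} (which shows $\widetilde\lambda_j(\tth)$ stays within a bounded random distance of the stationary $\lambda_j(\tth)$) and the moment bound for $r_{\tth}$ at the stationary argument in \eqref{eq:Esup}, so that $\sup_t\E\sup_{\tth}\|D_t\|^{\kappa}<\infty$. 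Applying the $c_\kappa$-inequality to the convergent geometric sum $\sum_{k\ge 0}\bar d^{\,k}(\cdot)$ — for $\kappa\le 1$ directly, for $\kappa>1$ after pulling out $(\sum\bar d^{\,k})^{\kappa-1}$ by Hölder — and taking expectations yields the claimed uniform-in-$t$ bound.

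The main obstacle is the first term $r_{\tth}(Y_j,\widetilde\lambda_j(\tth);\tth)$: its moment control is stated in \eqref{eq:Esup} only at the \emph{stationary} argument $\lambda_j(\tth)$, not at the perturbed $\widetilde\lambda_j(\tth)$, so one must transfer the bound across using \eqref{eq:lam.dif} and a local Lipschitz or continuity property of $r_{\tth}$ in its $\bb\lambda$-argument. If one is unwilling to impose such a property, the cleanest fix is to revert to the first route above and simply cite Lemma~\ref{lem:1}: the estimate $\|\partial\lambda_t/\partial\tth-\partial\widetilde\lambda_t/\partial\tth\|<W_t\rho_2^t$ proved there \emph{already absorbs} exactly this difficulty, and the lemma then follows in two lines. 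I would present the short argument in the text and relegate the series computation to a remark, since the bookkeeping for general $(p,q)$ via the matrix $\bb D$ adds length without adding insight.
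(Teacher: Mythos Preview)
Your first route is correct and is in fact shorter than what the paper does: the paper does not invoke the triangle-inequality shortcut via Lemma~\ref{lem:1} but instead works through the explicit series representation of Lemma~\ref{lem:2} (your second route). Your short argument is preferable if one is happy to rely on all four conclusions of Lemma~\ref{lem:1} as black boxes; the paper's longer argument has the modest advantage of being more self-contained and of setting up exactly the decomposition that is reused in the proof of Lemma~\ref{lem:4}.

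On your second route, the ``main obstacle'' you flag is not an obstacle at all under the stated assumptions, and the paper resolves it cleanly: since Assumption~\ref{as:r} forces $r_{\tlam}(y,\lambda;\tth)=d(\tth)$ to be constant in $(y,\lambda)$, one has $r_{\tth\tlam}(y,\lambda;\tth)=\partial d(\tth)/\partial\tth$, which is also constant in $(y,\lambda)$. The mean value theorem then gives
\[
\bigl\| r_{\tth}(Y_j,\widetilde\lambda_j(\tth);\tth)-r_{\tth}(Y_j,\lambda_j(\tth);\tth)\bigr\|
= \Bigl\|\tfrac{\partial d(\tth)}{\partial\tth}\Bigr\|\cdot\bigl|\widetilde\lambda_j(\tth)-\lambda_j(\tth)\bigr|,
\]
and the first factor is a deterministic continuous function on the compact $\bb\Theta_0$, hence bounded. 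Combined with \eqref{eq:lam.dif} and \eqref{eq:Esup} this transfers the moment bound from $\lambda_j$ to $\widetilde\lambda_j$ without any additional hypothesis. So in your write-up you should not present this as a property one might be ``unwilling to impose''; it is already a consequence of~\ref{as:r}.
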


\begin{proof}
We present a  proof for $p=q=1$; general orders $(p,q)$ can be treated analogously. 
Under the assumptions in \ref{as:r},  $r_{\tlam}(Y_{t-1},\widetilde{\lambda}_{t-1}(\tth);\tth)=d(\tth)$ is a non-random scalar such that $\sup_{\tth\in \bb\Theta_0} |d(\tth)|=\gamma<1$. 
It then follows from Lemma~\ref{lem:2} that 
\begin{align*}
\sup_{\tth\in\bb\Theta_0} \left\|\frac{\partial \widetilde{\lambda}_t(\tth)}{\partial \tth} \right\|&\leq \sup_{\tth\in\bb\Theta_0} |r_{\tth}(Y_{t-1},\lambda_{t-1}(\tth);\tth)|
+  \sup_{\tth\in\bb \Theta_0} |r_{\tth}(Y_{t-1},\widetilde{\lambda}_{t-1}(\tth);\tth) -r_{\tth}(Y_{t-1},{\lambda}_{t-1}(\tth);\tth)| \\
&+\sup_{\tth\in\bb \Theta_0} |\pi_{\tth}(\bb X_{t-1};\tth)| + \sum_{k=1}^{t-1} \gamma^k |r_{\tth}(Y_{t-k-1},{\lambda}_{t-k-1}(\tth);\tth)|\\
&+ \sum_{k=1}^{t-1} \gamma^k |r_{\tth}(Y_{t-k-1},\widetilde{\lambda}_{t-k-1}(\tth);\tth)-r_{\tth}(Y_{t-k-1},{\lambda}_{t-k-1}(\tth);\tth)|\\
&+ \sum_{k=1}^{t-1} \gamma^k \sup_{\tth\in\bb \Theta_0}|\pi_{\tth}(\bb X_{t-k-1};\tth)|. 
\end{align*}
 If $r_{\tlam \tth}$, $r_{\tlam \tth}$, $r_{\tth \tth}$ stand for the second partial derivatives of $r$ defined in \eqref{eq:der_r} and \eqref{eq:der_r2}, then  it follows from Assumption \ref{as:r} that  $r_{\tlam \tth}(y,\lambda,\tth)= \partial d(\tth) / \partial \tth$
 for all $y\in[0,\infty)$ and $\lambda\in(0,\infty)$. By the mean value theorem
 \[
 \bigl|r_{\tth}(y,\widetilde{\lambda}_{t-1}(\tth);\tth) -r_{\tth}(y,{\lambda}_{t-1}(\tth);\tth)\bigr| = \left|\frac{\partial d(\tth)}{\partial \tth} \right| |\widetilde{\lambda}_{t-1}(\tth)-{\lambda}_{t-1}(\tth)|
 \]
 for any $y\in[0,\infty)$. 
 Since 
 $\partial d(\tth) / \partial \tth$
 
 is continuous on the compact set $\bb \Theta_0$,  the supremum of its norm over $\bb \Theta_0$ is finite and non-random. 
The assertion then follows from 
\eqref{eq:Esup} and Lemma~\ref{lem:1}. 
\end{proof}

\begin{lemma}\label{lem:4}
Let \ref{as:h}, \ref{as:stat}, \ref{as:r} hold. Then there exist $\beta_1,\beta_2\in(0,1)$ such that
\[
\sup_{\tth \in \bb \Theta_0} \left| \frac{\partial \lambda_t(\tth)}{\partial \theta_i} \frac{\partial \lambda_t(\tth)}{\partial \theta_j}  - \frac{\partial \widetilde\lambda_t(\tth)}{\partial \theta_i} \frac{\partial \widetilde\lambda_t(\tth)}{\partial \theta_j} 
\right|< V_{t1} \cdot \beta_1^t, 
\]
for all $i,j=1,\dots,K$, and
\[
\sup_{\tth \in \bb \Theta_0}  \Bigl\| \frac{\partial^2 \lambda_t (\tth)}{\partial \tth \partial \tth^\top} -  \frac{\partial^2 \widetilde{\lambda}_t (\tth)}{\partial \tth\partial \tth^\top} \Bigr\| <  V_{t2} \cdot \beta_2^t 
\]
where $V_{ti}$, $i=1,2$, are positive random variables such that $\E V_{ti}^\kappa<\infty$ for some $\kappa>0$.  
\end{lemma}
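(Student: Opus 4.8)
The plan is to follow the pattern of Lemmas~\ref{lem:1}--\ref{lem:3}: reduce to the case $p=q=1$ (general orders being entirely analogous), set $\gamma:=\sup_{\tth\in\bb\Theta_0}|d(\tth)|<1$, and let $C$ denote a finite deterministic constant, depending only on bounds over the compact set $\bb\Theta_0$ of the derivatives $\partial d(\tth)/\partial\tth$ and $\partial^2 d(\tth)/\partial\tth\partial\tth^\top$, which may change from line to line. For the first inequality I would apply the elementary identity \eqref{eq:ab} twice to write, for each $i,j$,
\[
\frac{\partial\lambda_t}{\partial\theta_i}\frac{\partial\lambda_t}{\partial\theta_j}-\frac{\partial\widetilde\lambda_t}{\partial\theta_i}\frac{\partial\widetilde\lambda_t}{\partial\theta_j}
=\frac{\partial\lambda_t}{\partial\theta_i}\Big(\frac{\partial\lambda_t}{\partial\theta_j}-\frac{\partial\widetilde\lambda_t}{\partial\theta_j}\Big)
+\frac{\partial\lambda_t}{\partial\theta_j}\Big(\frac{\partial\lambda_t}{\partial\theta_i}-\frac{\partial\widetilde\lambda_t}{\partial\theta_i}\Big)
-\Big(\frac{\partial\lambda_t}{\partial\theta_i}-\frac{\partial\widetilde\lambda_t}{\partial\theta_i}\Big)\Big(\frac{\partial\lambda_t}{\partial\theta_j}-\frac{\partial\widetilde\lambda_t}{\partial\theta_j}\Big)
\]
(all evaluated at $\tth$); then, taking $\sup_{\tth\in\bb\Theta_0}$ and bounding each first-derivative difference by $W_t\rho_2^t$ via Lemma~\ref{lem:1}, the right-hand side is at most $V_{t1}\beta_1^t$ with $\beta_1:=\rho_2$ and $V_{t1}:=\big(2\sup_{\tth}\|\partial\lambda_t(\tth)/\partial\tth\|+W_t\big)W_t$. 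Since $\E\sup_\tth\|\partial\lambda_t(\tth)/\partial\tth\|^{\kappa}<\infty$ by Lemma~\ref{lem:1} and $\sup_t\E W_t^{\kappa}<\infty$, Hölder's inequality gives $\sup_t\E V_{t1}^{\kappa/2}<\infty$, which, after relabelling $\kappa$, proves the first claim.

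For the second inequality I would exploit the structural consequences of Assumption~\ref{as:r}: since $r_{\tlam}(\bb y,\lambda;\tth)=d(\tth)$ does not depend on $(\bb y,\lambda)$, the function $r$ is affine in $\lambda$, so $r_{\tlam\tlam}\equiv\bb0$, $r_{\tth\tlam}(\bb y,\lambda;\tth)=\partial d(\tth)/\partial\tth$ is also constant in $(\bb y,\lambda)$, and $r_{\tth\tth}(\bb y,\widetilde\lambda;\tth)-r_{\tth\tth}(\bb y,\lambda;\tth)=(\widetilde\lambda-\lambda)\,\partial^2 d(\tth)/\partial\tth\partial\tth^\top$. Differentiating the recursion $\lambda_t(\tth)=r(Y_{t-1},\lambda_{t-1}(\tth);\tth)+\pi(\bb X_{t-1};\tth)$ twice (in the spirit of Lemma~\ref{lem:2}) yields the \emph{linear} recursion $H_t(\tth)=d(\tth)H_{t-1}(\tth)+R_t(\tth)$ for $H_t(\tth):=\partial^2\lambda_t(\tth)/\partial\tth\partial\tth^\top$, where $R_t(\tth)$ collects the terms $r_{\tth\tth}(Y_{t-1},\lambda_{t-1}(\tth);\tth)$, $\pi_{\tth\tth}(\bb X_{t-1};\tth)$ and the symmetrised cross term $\frac{\partial d(\tth)}{\partial\tth}\big(\frac{\partial\lambda_{t-1}(\tth)}{\partial\tth}\big)^\top+\frac{\partial\lambda_{t-1}(\tth)}{\partial\tth}\big(\frac{\partial d(\tth)}{\partial\tth}\big)^\top$; the same recursion holds for the initialised version $\widetilde H_t(\tth):=\partial^2\widetilde\lambda_t(\tth)/\partial\tth\partial\tth^\top$ with $R_t$ replaced by the corresponding $\widetilde R_t(\tth)$ and $\widetilde H_0(\tth)=\bb0$. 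Subtracting, $\Delta_t(\tth):=\widetilde H_t(\tth)-H_t(\tth)$ satisfies $\Delta_t(\tth)=d(\tth)\Delta_{t-1}(\tth)+\big(\widetilde R_t(\tth)-R_t(\tth)\big)$ with $\Delta_0(\tth)=-H_0(\tth)$. The $\pi_{\tth\tth}$ term cancels, so by the structural identities above and Lemma~\ref{lem:1} one gets $\sup_{\tth\in\bb\Theta_0}\|\widetilde R_t(\tth)-R_t(\tth)\|\le C\big(V\rho_1^{t-1}+W_{t-1}\rho_2^{t-1}\big)$.

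Unrolling this recursion and using $\sup_\tth|d(\tth)|\le\gamma<1$ then gives
\[
\sup_{\tth\in\bb\Theta_0}\|\Delta_t(\tth)\|\le\gamma^t\sup_{\tth}\|H_0(\tth)\|+C\sum_{k=0}^{t-1}\gamma^k\big(V\rho_1^{t-1-k}+W_{t-1-k}\rho_2^{t-1-k}\big),
\]
and I would finish by fixing $\beta_2\in\big(\max(\gamma,\rho_1,\rho_2),1\big)$ and setting $V_{t2}:=\sup_\tth\|H_0(\tth)\|+C\beta_2^{-1}\sum_{k=0}^{t-1}(\gamma/\beta_2)^k(\rho_1/\beta_2)^{t-1-k}V+C\beta_2^{-1}\sum_{k=0}^{t-1}(\gamma/\beta_2)^k(\rho_2/\beta_2)^{t-1-k}W_{t-1-k}$, so that $\sup_\tth\|\Delta_t(\tth)\|\le V_{t2}\beta_2^{t}$. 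Here $\E\sup_\tth\|H_0(\tth)\|^{\kappa}<\infty$ by stationarity and Assumption~\ref{as:lambda}, $\E V^{\kappa}<\infty$ and $\sup_s\E W_s^{\kappa}<\infty$ by Lemma~\ref{lem:1}, and the geometric weights satisfy $\sup_t\sum_{k=0}^{t-1}(\gamma/\beta_2)^k(\rho_\ell/\beta_2)^{t-1-k}<\infty$ for $\ell=1,2$, being a convolution of two summable geometric sequences; applying Minkowski's inequality (if $\kappa\ge1$) or the $c_r$-inequality (if $\kappa<1$) to the two finite sums, and then to the three summands, yields $\sup_t\E V_{t2}^{\min(\kappa,1)}<\infty$, which is the second inequality (with $\kappa$ relabelled; note the moment bound is in fact uniform in $t$, consistent with Lemma~\ref{lem:1}).

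The main obstacle is precisely this last bookkeeping step: one must keep the exponential rate strictly below $1$ while simultaneously guaranteeing a $t$-uniform moment bound for the random prefactor $V_{t2}$. The delicate points are that a crude bound on the convolution $\sum_k\gamma^k\rho_\ell^{t-1-k}$ produces a spurious factor $t$ when $\gamma=\rho_\ell$, which must be absorbed by slightly enlarging $\beta_2$, and that the index-dependent weights $W_{t-1-k}$ cannot be replaced by $\sup_s W_s$ (which need not be integrable), so the weighted sum has to be carried through a moment inequality with exponent $\kappa$ possibly smaller than $1$. A secondary technical point, as in Lemmas~\ref{lem:1} and~\ref{lem:3}, is to verify that the relevant second-order moments hold on the whole of $\bb\Theta_0$, which follows from Assumption~\ref{as:r}, Assumption~\ref{as:lambda} and the geometric decay established in Lemma~\ref{lem:1}.
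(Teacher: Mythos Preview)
Your proposal is correct and follows essentially the same route as the paper: the identity \eqref{eq:ab} for the product of first derivatives combined with Lemmas~\ref{lem:1} and~\ref{lem:3}, and for the Hessian the linear recursion $\Delta_t(\tth)=d(\tth)\Delta_{t-1}(\tth)+(\widetilde R_t-R_t)$ obtained from the structural consequences $r_{\tlam\tlam}=0$, $r_{\tth\tlam}=\partial d/\partial\tth$ of Assumption~\ref{as:r}. The paper stops at the recursive inequality and writes ``the assertion then easily follows from $|d(\tth)|<1$''; your explicit unrolling, choice of $\beta_2>\max(\gamma,\rho_1,\rho_2)$, and Minkowski/$c_r$ argument for the $t$-uniform moment of $V_{t2}$ are precisely the bookkeeping that this sentence hides, so you are in fact more careful than the paper on the point you flag as the main obstacle.
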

\begin{proof} 
 It follows from \eqref{eq:ab} that
\begin{align*}
\frac{\partial \widetilde{\lambda}_t(\tth)}{\partial \theta_j}\frac{\partial \widetilde{\lambda}_t(\tth)}{\partial \theta_i }&=
 \frac{\partial {\lambda}_t(\tth)}{\partial \theta_j}\frac{\partial {\lambda}_t(\tth)}{\partial \theta_i } + \frac{\partial \widetilde{\lambda}_t(\tth)}{\partial \theta_j}\left[ \frac{\partial \widetilde{\lambda}_t(\tth)}{\partial \theta_i } -\frac{\partial {\lambda}_t(\tth)}{\partial \theta_i } \right] \\
& + \frac{\partial {\lambda}_t(\tth)}{\partial \theta_i }  \left[ \frac{\partial \widetilde{\lambda}_t(\tth)}{\partial \theta_j } -\frac{\partial {\lambda}_t(\tth)}{\partial \theta_j} \right],
\end{align*}
so by Lemma~\ref{lem:1} 
\[
\sup_{\tth \in \bb \Theta_0}  \left| \frac{\partial \lambda_t(\tth)}{\partial \theta_i} \frac{\partial \lambda_t(\tth)}{\partial \theta_j}  - \frac{\partial \widetilde\lambda_t(\tth)}{\partial \theta_i} \frac{\partial \widetilde\lambda_t(\tth)}{\partial \theta_j} 
\right| \leq \sup_{\tth \in \bb \Theta_0} \left|  \frac{\partial \widetilde{\lambda}_t(\tth)}{\partial \theta_j} \right| W_t \rho_2^t + \sup_{\tth \in \bb\Theta_0} \left|  \frac{\partial {\lambda}_t(\tth)}{\partial \theta_j} \right| W_t \rho_2^t,
\]
and the statement follows from Lemmas~\ref{lem:1} and \ref{lem:3}.

The second part of the assertion is proved solely for $p=q=1$, while for general orders the proof proceeds in an analogous way. %
 Recall the notation \eqref{eq:der_r} and \eqref{eq:der_r2}. 
If $l_t(\tth)$ is defined as in Lemma~\ref{lem:2}, then
\begin{align*}
\frac{\partial l_{t}(\tth)}{\partial \tth \partial \tth^\top} &= r_{\tth \tth}(y_t,  l_{t-1}(\tth);\tth) + r_{\tth \tlam}(y_t,  l_{t-1}(\tth);\tth) \frac{\partial l_{t-1}(\tth)}{\partial \tth^\top} \\
&+  r_{\tlam \tlam}(y_t,  l_{t-1}(\tth);\tth) \frac{\partial l_{t-1}(\tth)}{\partial \tth}  \frac{\partial l_{t-1}(\tth)}{\partial \tth^\top} + r_{\tlam}(y_t,  l_{t-1}(\tth);\tth)\frac{\partial l_{t-1}(\tth)}{\partial \tth \partial \tth^\top}+ \pi_{\tth \tth}(\bb x_{t-1};\tth).
\end{align*}
Due to Assumptions \ref{as:r},  $r_{\tlam}(y,  \lambda;\tth)=d(\tth)$ for any $y,\lambda\in[0,\infty)$, so $r_{\tlam \tlam}=0$ and $r_{\tlam \tth}(y,l;\tth) = \partial d(\tth)/\partial \tth$. Denote as  
$d_{\tth}^{(j)}(\tth)=\partial d(\tth)/\partial \theta_j$  $j\in\{1,\dots,K\}$ and
$d_{\tth}(\tth)=\partial d(\tth)/\partial \tth=\Bigl(d_{\tth}^{(j)}(\tth)\Bigr)_{j=1}^K$. 
Then 
\begin{align*}
\Bigl| \frac{\partial^2 \lambda_t(\tth)}{\partial \theta_i\partial \theta_j} - \frac{\partial^2 \widetilde{\lambda}_t(\tth)}{\partial \theta_i\partial \theta_j}\Bigr| &\leq
|r_{\tth \tth}^{(ij)}(Y_{t-1}, \lambda_{t-1}(\tth);\tth) - r_{\tth \tth}^{(ij)}(Y_{t-1}, \widetilde{\lambda}_{t-1}(\tth);\tth)|\\
&+
\Bigl| d_{\tth}^{(j)}(\tth) \Bigr| \left| \frac{\partial \lambda_t(\tth)}{\partial \theta_i} -\frac{\partial \widetilde{\lambda}_t(\tth)}{\partial \theta_i}  \right| 
+  |d(\tth)|   \left| \frac{\partial^2 \lambda_t(\tth)}{\partial \theta_i\partial \theta_j}  - \frac{\partial^2 \widetilde\lambda_t(\tth)}{\partial \theta_i \partial \theta_j}\right|. 
\end{align*}
Furthermore, for $i,j\in\{1,\dots,K\}$
\[
\frac{\partial^3 r(y,\lambda;\tth)}{\partial \lambda \partial \theta_i \partial \theta_j} = \frac{\partial^2 }{\partial \theta_i \partial \theta_j}\left(\frac{\partial r(y,\lambda;\tth)}{\partial \lambda}\right)=\frac{\partial^2 d(\tth)}{\partial \theta_i \partial \theta_j}=:d_{\tth \tth}^{(ij)}(\tth),  
\]
so the mean value theorem implies that
\[
|r_{\tth \tth}^{(ij)}(Y_{t-1}, \lambda_{t-1}(\tth);\tth) - r_{\tth \tth}^{(ij)}(Y_{t-1}, \widetilde{\lambda}_{t-1}(\tth);\tth)| = |d_{\tth\tth}^{(ij)}(\tth)|\cdot |\lambda_{t-1}(\tth) - \widetilde{\lambda}_{t-1}(\tth)|.
\]
Assumption \ref{as:r} ensures that the function $d_{\tth \tth}^{(ij)}$ is 
continuous on $\bb \Theta_0$,
so it follows from Lemma~\ref{lem:1} that
\[
\sup_{\tth \in \bb \Theta_0} \Bigl| \frac{\partial^2 \lambda_t(\tth)}{\partial \theta_i\partial \theta_j} - \frac{\partial^2 \widetilde{\lambda}_t(\tth)}{\partial \theta_i\partial \theta_j}\Bigr| 
\leq K_1 \cdot V \cdot  \rho_1^t +  K_2 \cdot W_t \rho_2^t + |d(\tth)|   \left| \frac{\partial^2 \lambda_t(\tth)}{\partial \theta_i\partial \theta_j}  - \frac{\partial^2 \widetilde\lambda_t(\tth)}{\partial \theta_i \partial \theta_j}\right| 
\]
and the assertion then easily follows from $|d(\tth)|<1$. 
\end{proof}

\noindent{\it Proof of Theorem~\ref{th1}.}

In the following we write $\widehat{\tth}$ instead of $\widehat{\tth}_T$, and $g(u,\lambda)$ for the PGF $g_{\lambda}(u)$ of the distribution $\mathcal{F}_{\lambda}$. We consider for simplicity  the case $p=q=1$, 
so $\bb Z_t=(Y_{t-1},\lambda_{t-1},\tX_{t-1}^\top)^\top$, and
\[
h(\bb Z_t;\tth) = r(Y_{t-1},\lambda_{t-1}(\tth);\tth)+\pi(\tX_{t-1};\tth),
\]
with $d=m+2$, where $m$ is the dimension of $\tX_t$. The more general situation can be treated analogously.


Let $\lambda_t(\tth)$ be  the  stationary solution to \eqref{rate2} for parameter $\tth$. Furthermore, let $\widetilde{\lambda}_t(\tth)$ be defined recursively for some initial values for $Y_0$ and $\widetilde{\lambda}_0$ as
\[
\widetilde{\lambda}_t(\tth) = r(Y_{t-1},\widetilde{\lambda}_{t-1}(\tth);\tth) +\pi(\tX_{t-1};\tth).
\]
Similarly, let $\widetilde{\bb Z}_t(\tth)$ be  defined as
\[
 \widetilde{\bb Z}_t(\tth)=(Y_{t-1},  \widetilde{\lambda}_{t-1}(\tth),\boldsymbol X^{\top}_{t-1})^\top,
\]
where we now stress the dependence on $\tth$. 
Then 
\[
\widehat{\lambda}_t = \widetilde{\lambda}_t(\widehat{\tth}), \quad \widehat{\bb Z}_t =  \widetilde{\bb Z}_t(\widehat{\tth}).
\]

Recall the notation
$
\alpha(\bb v_1,\bb v_2)=\cos\left(\bb v_1^\top \bb v_2\right)+\sin\left(\bb v_1^\top \bb v_2 \right),
$
and define 
\begin{align}\label{eq:alpha_tilde}
\alpha_t(\tth,\bb v) &= \alpha\left( \bb v, \tZ_t(\tth) \right) = \cos(\bb v^\top{\tZ}_t(\tth))+\sin(\bb v^\top{\tZ}_t(\tth)),  \\
\widetilde{\alpha}_t(\tth,\bb v) &= \alpha\left( \bb v, \widetilde{\tZ}_t(\tth) \right) = \cos(\bb v^\top\widetilde{\tZ}_t(\tth))+\sin(\bb v^\top\widetilde{\tZ}_t(\tth)),
\end{align}
where
\begin{equation}\label{eq:vZ}
\bb v^\top\widetilde{\tZ}_t(\tth) = v_1 Y_{t-1}+v_2 \widetilde{\lambda}_{t-1}(\tth)+v_3 X_{t-1,1}+\dots + v_{m+2} X_{t-1,m}. 
\end{equation}

Also notice that assumption \ref{as:W} implies $W(u,\bb v) = w(u)\omega(\bb v)$, with
  \[
  \int_{\R^d}\sin(\bb v^{\top}\bb x) \omega (\bb v)  {\rm{d}}\bb v =0 
  \]
   for all $ \boldsymbol x \in \mathbb{R}^d$.
Hence, we can get an equivalent expression for
$\Delta_{T,W}$ as
\[
\Delta_{T,W} = \int_{0}^1\int_{\R^d}\left| \frac{1}{\sqrt{T}}\sum_{t=1}^T \widehat{\eps}_t(u) \widetilde{\alpha}_t(\widehat{\tth},\bb v) \right|^2 w(u)\omega(\bb v) {\rm{d}}\bb v {\rm{d}}u .
\]
Thus, it  suffices to investigate the
asymptotic  behavior of
\begin{equation*} 
\widetilde{\Delta}_{T,W} = \int_{\R^d} \int_0^1 |\widetilde{B}_T(u,\bb v,\widehat{\tth})|^2 w(u) \omega (\bb v) {\rm{d}}u {\rm{d}}\bb v,
\end{equation*}
where  
\begin{equation} \label{delta-p2}
\widetilde{B}_T(u,\bb v,\tth)=\frac{1}{\sqrt{T}}\sum_{t=2}^T \left[u^{Y_t} - g(u,\widetilde{\lambda}_t(\tth))\right]\widetilde{\alpha}_t(\tth,\bb v).
\end{equation}
We will first use the  Taylor expansion for $\widetilde{B}_T(u,\bb v, \widehat{\tth})$ in $\tth_0$ and then we will approximate $\widetilde{\lambda}_t(\tth_0)$ by the stationary variables $\lambda_t(\tth_0)$. This will  allow us to express $\widetilde{B}_T(u,\bb v, \widehat{\tth})$ as a  sum of a function of stationary variables $(Y_t, \lambda_t(\tth_0))$ and a negligible remainder.

The  Taylor expansion 
gives 
\begin{align*} 
\widetilde{B}_T(u,\bb v,\widehat{\tth})&=\underbrace{\widetilde{B}_T(u,\bb v, \tth_0)}_{A_1}+\underbrace{\Big(\frac{1}{\sqrt{T}}\frac{\partial \widetilde{B}_T(u,\bb v,\tth_0)}{\partial \tth} \Big)^\top\sqrt{T}(\widehat{\tth}_T-\tth_0)}_{A_2}\\
\notag &+ \underbrace{\frac12 \sqrt{T} (\widehat{\tth}_T-\tth_0)^\top\left[\frac{1}T \mathbb{B}(\tth^*) \right]\sqrt{T}(\widehat{\tth}_T-\tth_0)}_{A_3},
\end{align*}
where
$\mathbb{B}(\tth^*)$ is a matrix of the second order derivates $\left(\frac{\partial^2 \widetilde{B}_T(u,\bb v,\tth)}{\partial \theta_i\theta_j} \right)_{i,j=1}^K $ evaluated at $\tth=\tth^*$,
where $\tth^*$ is a point between $\widehat{\tth} $  and $\tth_0$.   We will  show that the first two terms $A_1$ and $A_2$ are influential, while   the term $A_3$ is asymptotically negligible.

\medskip

\noindent{\it Treating $A_1$:}
Introduce 
\[
B_T(u,\bb v,\tth)=\frac{1}{\sqrt{T}}\sum_{t=2}^T  \left[u^{Y_t} - g(u,{\lambda}_t(\tth))\right]\alpha_t(\tth,\bb v).
\]
Then from \eqref{eq:ab}
\begin{align*}
\widetilde{B}_T(u,\bb v,\tth) - B_T(u,\bb v,\tth)
=&  \frac{1}{\sqrt{T}}\sum_{t=2}^T \left[u^{Y_t} - g(u,\lambda_t(\tth))\right][\widetilde{\alpha}_t(\tth,\bb v) - {\alpha}_t(\tth,\bb v)] \\
&+\frac{1}{\sqrt{T}}\sum_{t=2}^T  \widetilde{\alpha}_t(\tth,\bb v)[ g(u,\widetilde{\lambda}_t(\tth))- g(u,\lambda_t(\tth))].
\end{align*}
It follows from \eqref{eq:alpha_tilde}, \eqref{eq:vZ}, \eqref{eq:lam.dif} and \ref{as:g} that
\begin{align}
|\alpha_t(\tth,\bb v)  - \widetilde{\alpha}_t(\tth,\bb v)|& \leq 2 |v_2| |\lambda_t (\tth) -\widetilde{\lambda}_t (\tth) | \leq 2|v_2| V \rho_1^t,\label{eq:alpha.dif}\\
| g(u,\widetilde{\lambda}_t(\tth))- g(u,\lambda_t(\tth))|&\leq M |\lambda_t (\tth) -\widetilde{\lambda}_t (\tth) |  \leq M V \rho_1 ^t \notag
\end{align}
for some random variable $V$ and $\rho_1\in(0,1)$. Also recall that $g(u,\lambda)$ is defined as the PGF of $Z\sim\mathcal{F}_{\lambda}$, i.e. $g(u,\lambda)=\E u^Z$, and consequently $|g(u,\lambda)|\leq 1$ for all $u\in[0,1]$. 
Hence,  it follows from Lemma~\ref{lem:limit} that
\begin{align*}
|\widetilde{B}_T(u,\bb v,\tth) &- B_T(u,\bb v,\tth)|
\leq \frac{2}{\sqrt{T}} \sum_{t=2}^T 4|v_2| V \rho_1^t + \frac{2}{\sqrt{T}} \sum_{t=2}^TM V \rho_1^t    \asto 0
\end{align*}
as $T\to \infty$. 

\medskip
\noindent{\it Treating $A_2$:} We have
\[
\frac{1}{\sqrt{T}}\frac{\partial \widetilde{B}_T}{\partial \tth}(u,\bb v,\tth) =  
 \frac{1}{T}\sum_{t=2}^T \Bigl[\ell_{1,t}(u,\bb v, \tth)+\ell_{2,t}(u,\bb v, \tth)\Bigr],
\]
where
\begin{align*}
\ell_{1,t}(u,\bb v,\tth) =& - g_z(u,\widetilde{\lambda}_t(\tth)) \frac{\partial \widetilde{\lambda}_t(\tth)}{\partial \tth} \widetilde{\alpha}_t(\tth,\bb v)  \\
\ell_{2,t}(u,\bb v,\tth) =&  \left[u^{Y_t} - g(u,\widetilde{\lambda}_t(\tth))\right] \frac{\partial \widetilde{\alpha}_t(\tth,\bb v)}{\partial \tth},
\end{align*}
with $g_z(u,z) = \frac{\partial g(u,z)}{\partial z}$.

Similarly let \begin{align*}
\ell_{1,t}^0(u,\bb v,\tth) =& - g_z(u,{\lambda}_t(\tth)) \frac{\partial  {\lambda}_t(\tth)}{\partial \tth}  {\alpha}_t(\tth,\bb v)  \\
\ell_{2,t}^0(u,\bb v,\tth) =&  \left[u^{Y_t} - g(u,{\lambda}_t(\tth))\right] \frac{\partial \alpha_t(\tth,\bb v)}{\partial \tth}.
\end{align*}
Then by \eqref{eq:abc} 
\begin{align*}
\ell_{1,t}(u,\bb v,\tth) -& \ell_{1,t}^0(u,\bb v,\tth)  = g_z(u,\widetilde{\lambda}_t(\tth))  \widetilde{\alpha}_t(\tth,\bb v) \left[ \frac{\partial \widetilde{\lambda}_t(\tth)}{\partial \tth} -\frac{\partial {\lambda}_t(\tth)}{\partial \tth}\right]\\
&+ \frac{\partial {\lambda}_t(\tth)}{\partial \tth} g_z(u,\widetilde{\lambda}_t(\tth)) [  {\alpha}_t(\tth,\bb v) - \widetilde{\alpha}_t(\tth,\bb v)] \\
&+ \frac{\partial {\lambda}_t(\tth)}{\partial \tth} {\alpha}_t(\tth,\bb v)  [ g_z(u,{\lambda}_t(\tth)) - g_z(u,\widetilde{\lambda}_t(\tth))].
\end{align*}
Now, in view of Assumption~\ref{as:g}, Lemma~\ref{lem:1} and \eqref{eq:alpha.dif} we have, 
\begin{align*}
\bigl|\ell_{1,t}(u,\bb v,\tth) -& \ell_{1,t}^0(u,\bb v,\tth) \bigr| \leq  K_1 W_t \rho_2^t + K_2 \left\|\frac{\partial {\lambda}_t(\tth)}{\partial \tth}\right\|  |v_2| V \rho_1^t+K_3 \left\| \frac{\partial {\lambda}_t(\tth)}{\partial \tth}\right\| V \rho_1^t
\end{align*}
for some finite constants $K_i>0$, $i=1,2,3$. Therefore, it follows from Lemma~\ref{lem:limit} that
\[
\frac{1}{T} \sum_{t=1}^T \ell_{1,t}(u,\bb v,\tth) - \frac{1}{T} \sum_{t=1}^T \ell_{1,t}^0(u,\bb v,\tth) \asto 0,
\]
as $T\to \infty$. 
Assumption \ref{as:stat} then implies that 
\[
\frac{1}{T} \sum_{t=1}^T \ell_{1,t}(u,\bb v,\tth)   \asto \E \ell_{1,t}^0(u,\bb v,\tth) =\bb \beta(u,v),
\]
where $\bb \beta(\cdot,\cdot)$ is defined before the statement of Theorem~\ref{th1}. 
Furthermore, observe that
\[
\frac{\partial \widetilde{\alpha}_t (\tth,\bb v)}{\partial \tth} = v_2\frac{\partial \widetilde{\lambda}_{t-1}(\tth)}{\partial \tth}\left[ \cos(\bb v^\top\widetilde{\tZ}_t(\tth)) -\sin(\bb v^\top\widetilde{\tZ}_t(\tth)) \right],
\]
and thus,  
\begin{align*}
\left\| \frac{\partial \widetilde{\alpha}_t (\tth,\bb v)}{\partial \tth} \right\|& \leq 2 |v_2| \left\| \frac{\partial \widetilde{\lambda}_{t-1} (\tth,\bb v)}{\partial \tth} \right\|,\\
 \left\| \frac{\partial \widetilde{\alpha}_t (\tth,\bb v)}{\partial \tth}  -  \frac{\partial {\alpha}_t (\tth,\bb v)}{\partial \tth} \right\| &\leq K_4 |v_2| \left\| \frac{\partial \widetilde{\lambda}_{t-1} (\tth,\bb v)}{\partial \tth}  -  \frac{\partial {\lambda}_{t-1} (\tth,\bb v)}{\partial \tth} \right\|\leq K_4 |v_2| W_t \rho_2^{t-1},
\end{align*}
for some $K_4<\infty$. Hence, the sequence $\{\ell_{2,t}\}$ can be treated by the same arguments and it follows that 
\[
\frac{1}{T} \sum_{t=1}^T \ell_{1,t}(u,\bb v,\tth) \asto \E  \ell_{2,t}^0(u,\bb v,\tth) =\bb 0.
\]
Therefore, the term $A_2$ behaves asymptotically as $ \sqrt{T}(\widehat{\tth}-\tth_0)^\top \bb\beta(u,\bb v)$, so in view of Assumption \ref{as:est} we have,
\[
A_2= \frac{1}{\sqrt{T}}\sum_{t=2}^T \bb s_t(\tth_0,Y_{t},\mathcal{I}_{t-1})^\top \bb \beta(u,\bb v) +o_{\mathsf P}(1). 
\]

\medskip
\noindent{\it Treating $A_3$:} 
The second order derivatives are
\[
\frac{\partial \widetilde{B}_T(u,\bb v, \tth^*)}{\partial \theta_i \theta_j} = 
\frac{1}{\sqrt{T}} \sum_{t=2}^T \xi_{t,ij}(u,\bb v, \tth^*),
\]
where
\begin{align*}
\xi_{t,ij}(u,\bb v, \tth)&= -g_{zz}(u,\widetilde{\lambda}_t(\tth)) \frac{\partial \widetilde{\lambda}_t(\tth)}{\partial \theta_j}\frac{\partial \widetilde{\lambda}_t(\tth)}{\partial \theta_i} \widetilde{\alpha}_t(\tth,\bb v)
- g_z(u,\widetilde{\lambda}_t(\tth)) \frac{\partial^2 \widetilde{\lambda}_t(\tth)}{\partial \theta_j\partial \theta_i}\widetilde{\alpha}_t(\tth,\bb v) \\
&- g_z(u,\widetilde{\lambda}_t(\tth))\frac{\partial \widetilde{\lambda}_t(\tth)}{\partial \theta_i}
\frac{\partial \widetilde{\alpha}_t(\tth,\bb v)}{\partial \theta_j} +\Bigl[u^{Y_t} - g(u,\widetilde{\lambda}_t(\tth)) \Bigr] \frac{\partial^2 \widetilde{\alpha}_t(\tth,\bb v)}{\partial \theta_j\partial \theta_i} \\
&- g_z(u,\widetilde{\lambda}_t(\tth)) \frac{\partial \widetilde{\lambda}_t(\tth)}{\partial \theta_j}\frac{\partial \widetilde{\alpha}_t(\tth)}{\partial \theta_i},
\end{align*}
 with $g_{zz}(u,z)=\partial^2 g(u,z)/\partial z^2$.
To prove that $T^{-1}\mathbb{B}(\tth^*)$ converges in probability to a zero matrix,  it suffices to show that the probability limit of 
$T^{-1}  \sum_{t=2}^T \xi_{t,ij}(u,\bb v, \tth^*)$ as $T\to\infty$ is finite, for all $i,j=1,\dots,K$. 
Since
\begin{align*}
\frac{\partial^2 \widetilde{\alpha}_t(\tth,\bb v)}{\partial \theta_j\partial \theta_i}  &= v_2 \frac{\partial^2 \widetilde{\lambda}_{t-1}(\tth)}{\partial \theta_j\partial \theta_i}\left[ \cos(\bb v^\top\widetilde{\tZ}_t(\tth)) -\sin(\bb v^\top\widetilde{\tZ}_t(\tth)) \right]\\
&- v_2^2 \frac{\partial \widetilde{\lambda}_{t-1}(\tth)}{\partial \theta_i}\frac{\partial \widetilde{\lambda}_{t-1}(\tth)}{\partial \theta_j} \widetilde{\alpha}_t(\tth,\bb v),
\end{align*}
we can rewrite 
\begin{align}
\xi_{t,ij}(u,\bb v, \tth)&= T_1(u,\bb v, \tth)\frac{\partial \widetilde{\lambda}_t(\tth)}{\partial \theta_j}\frac{\partial \widetilde{\lambda}_t(\tth)}{\partial \theta_i}+T_2(u,\bb v, \tth)\frac{\partial^2 \widetilde{\lambda}_t(\tth)}{\partial \theta_j\partial \theta_i}\label{eq:rt}\\
&+ T_3(u,\bb v, \tth)\frac{\partial \widetilde{\lambda}_t(\tth)}{\partial \theta_j}\frac{\partial \widetilde{\lambda}_{t-1}(\tth)}{\partial \theta_i} +T_4(u,\bb v, \tth)\frac{\partial \widetilde{\lambda}_t(\tth)}{\partial \theta_i}\frac{\partial \widetilde{\lambda}_{t-1}(\tth)}{\partial \theta_j}\notag\\
&+ T_5(u,\bb v, \tth)\frac{\partial^2 \widetilde{\lambda}_{t-1}(\tth)}{\partial \theta_j\partial \theta_i}+T_6(u,\bb v, \tth)\frac{\partial \widetilde{\lambda}_{t-1}(\tth)}{\partial \theta_j}\frac{\partial \widetilde{\lambda}_{t-1}(\tth)}{\partial \theta_i},\notag
\end{align}
where   $|T_i(u,\bb v,\tth)|\leq 2$ for $i=1,2$, $|T_i(u,\bb v,\tth)|\leq 2 |v_2|$, $i=3,4$, $|T_5(u,\bb v,\tth)|\leq 4|v_2|$ and $|T_6(u,\bb v,\tth)|\leq 2 |v_2|^2$
for all $\tth \in \bb \Theta$, $u\in[0,1]$, $\bb v\in\R^d$. 
Treating the ``coefficient" of $T_1(\cdot,\cdot,\cdot)$, we see that
\begin{align*}
\frac{\partial \widetilde{\lambda}_t(\tth^*)}{\partial \theta_j}\frac{\partial \widetilde{\lambda}_t(\tth^*)}{\partial \theta_i }&=
 \frac{\partial {\lambda}_t(\tth^*)}{\partial \theta_j}\frac{\partial {\lambda}_t(\tth^*)}{\partial \theta_i } + \frac{\partial \widetilde{\lambda}_t(\tth^*)}{\partial \theta_j}\left[ \frac{\partial \widetilde{\lambda}_t(\tth^*)}{\partial \theta_i } -\frac{\partial {\lambda}_t(\tth^*)}{\partial \theta_i } \right] \\
& + \frac{\partial {\lambda}_t(\tth^*)}{\partial \theta_i }  \left[ \frac{\partial \widetilde{\lambda}_t(\tth^*)}{\partial \theta_j } -\frac{\partial {\lambda}_t(\tth^*)}{\partial \theta_j} \right],
\end{align*}
so for $T$ large enough
\begin{align*}
\left|\frac{1}{T} \sum_{t=2}^T T_1(u,\bb v, \tth^*)\frac{\partial \widetilde{\lambda}_t(\tth^*)}{\partial \theta_j}\frac{\partial \widetilde{\lambda}_t(\tth^*)}{\partial \theta_i}\right|&\leq 
\frac{1}{T}  \sum_{t=2}^T  \left|\frac{\partial {\lambda}_t(\tth^*)}{\partial \theta_j}\frac{\partial {\lambda}_t(\tth^*)}{\partial \theta_i}\right| + \frac{1}{T} \sum_{t=2}^T \sup_{\tth\in \mathcal{V}(\tth_0)}  \frac{\partial \widetilde{\lambda}_t(\tth)}{\partial \theta_j} W_t \rho_2^t\\
& +
  \frac{1}{T}\sum_{t=2}^T  \sup_{\tth\in \mathcal{V}(\tth_0)}  \frac{\partial {\lambda}_t(\tth)}{\partial \theta_i} W_t \rho_2^t, 
\end{align*}
due to Lemma~\ref{lem:1}, hence the two last terms on the right--hand side converge to $0$ almost surely due to Lemma~\ref{lem:limit} and Lemmas~\ref{lem:1} and \ref{lem:3}. Furthermore,
\begin{align*}
\frac{1}{T}  \sum_{t=2}^T  \left|\frac{\partial {\lambda}_t(\tth^*)}{\partial \theta_j}\frac{\partial {\lambda}_t(\tth^*)}{\partial \theta_i}\right| &\leq 
 \frac{1}{T}  \sum_{t=2}^T  \left|\frac{\partial {\lambda}_t(\tth_0)}{\partial \theta_j}\frac{\partial {\lambda}_t(\tth_0)}{\partial \theta_i}\right| \\
 &+\frac{1}{T}  \sum_{t=2}^T  \left|\frac{\partial {\lambda}_t(\tth^*)}{\partial \theta_j}\frac{\partial {\lambda}_t(\tth^*)}{\partial \theta_i}-\frac{\partial {\lambda}_t(\tth_0)}{\partial \theta_j}\frac{\partial {\lambda}_t(\tth_0)}{\partial \theta_i}\right|,
\end{align*}
where as $T\to\infty$, the first term converges  to a finite number due to \eqref{as:Elam1} in Assumption \ref{as:lambda}. Now let $\mathcal{V}_\eta(\tth_0)$ be a neighborhood of $\tth_0$ of radius $1/\eta$. Then for $\eta$ and $T$ large enough $\mathcal{V}_\eta(\tth_0)\subset \mathcal{V}(\tth_0) $
 and the second term can be bounded by
\[
\frac{1}{T}  \sum_{t=2}^T \sup_{\tth \in \mathcal{V}_\eta(\tth_0)} \left|\frac{\partial {\lambda}_t(\tth)}{\partial \theta_j}\frac{\partial {\lambda}_t(\tth)}{\partial \theta_i}-\frac{\partial {\lambda}_t(\tth_0)}{\partial \theta_j}\frac{\partial {\lambda}_t(\tth_0)}{\partial \theta_i}\right|,
\]
which is a sum of stationary and ergodic variables, so as $T \to \infty$ this sum converges to the expectation
\[
\E \sup_{\tth \in \mathcal{V}_\eta(\tth_0)} \left|\frac{\partial {\lambda}_t(\tth)}{\partial \theta_j}\frac{\partial {\lambda}_t(\tth)}{\partial \theta_i}-\frac{\partial {\lambda}_t(\tth_0)}{\partial \theta_j}\frac{\partial {\lambda}_t(\tth_0)}{\partial \theta_i}\right|
\]
which converges to $0$ as $\eta\to\infty$. The ``coefficients" of $T_i(\cdot,\cdot,\cdot)$ , $i=3,4,6$ in \eqref{eq:rt} can be treated in a completely analogous manner, while for the ``coefficients" of $T_i(\cdot,\cdot,\cdot)$ $i=2,5$, in \eqref{eq:rt}, we use Lemma~\ref{lem:4}, otherwise the approximations are the same.

To sum up, we have that
\begin{align*}
\widetilde{B}_T(u,\bb v,\widehat{\tth}) &= B_T(u,\bb v,\tth_0)+\frac{1}{\sqrt{T}} \sum_{t=2}^T \bb s_t(\tth_0,Y_t,\mathcal{I}_{t-1})^\top \bb\beta(u,\bb v) +R_T(u,\bb v),\\
& = Q_T(u,\bb v) + R_T(u,\bb v),
\end{align*}
where
\[
Q_T(u,\bb v) = \frac{1}{\sqrt{T}} \sum_{t=2}^T  \left\{ \left[u^{Y_t} - g(u,{\lambda}_t(\tth_0))\right]\alpha_t(\tth_0,\bb v) +  \bb s_t(\tth_0,Y_t, \mathcal{I}_{t-1})^\top \bb\beta(u,\bb v)\right\}
\]
and 
$
|R_t(u,\bb v)|\leq \frac{C |v_2|^2}{\sqrt{T}}\Pto 0  
$
as $T\to \infty$, for some constant $C<\infty$. Then
\begin{align*}
\left|\widetilde{B}^2_T(u,\bb v,\widehat{\tth}) - Q_T^2(u,\bb v) \right| &\leq
 |R_T(u,\bb v)|^2  + 2|Q_T(u,\bb v)| \cdot |R_T(u,\bb v)| \\
 &\leq \frac{C^2}{T} |v_2|^4 + \frac{2C}{\sqrt{T}} |v_2|^2  |Q_T(u,\bb v)|,
 \end{align*}
 so it follows from Assumption \ref{as:W} that,
 \[
 \int_{0}^1\int_{\R^d} |\widetilde{B}_T(u,\bb v,\widehat{\tth})|^2 w(u) \omega(\bb v) \mathrm{d}u\mathrm{d}\bb v = \int_{0}^1\int_{\R^d} |Q_T(u,\bb v) |^2 w(u) \omega(\bb v) \mathrm{d}u\mathrm{d}\bb v+o_{\mathsf P}(1)
 \] 
 provided that $\int_{0}^1\int_{\R^d} |Q_T(u,\bb v) |^2 w(u) \omega(\bb v) \mathrm{d}u\mathrm{d}\bb v$ converges to a finite distribution.

In order to show this, let $\mathcal{Z}$ be the Gaussian process from the statement of the Theorem~\ref{th1}. It is clear that all the marginal distributions of $\mathcal{Q}=\{Q(u,\bb v), u\in[0,1],\bb v \in\R^d\}$ converge to the marginal distributions of $\mathcal{Z}$. 
We  will show that 
 \begin{equation}\label{eq:convF}
 \int_{0}^1 \int_{F}\bigl[Q_T(u,\bb v)\bigr]^2 w(u)\omega(\bb v) \Dto  \int_{0}^1 \int_{F}  \bigl[\mathcal{Z}(u,\bb v)\bigr]^2 w(u)\omega(\bb v)
 \end{equation}
  for any compact $F\subset \R^d$. To this end, and in view of Theorem~22 from \cite{Ibragimov}  it suffices to verify that 
  \begin{equation}\label{eq:cond1}
  \sup_T \E \int_{0}^1 \int_F Q_T^2(u,\bb v) W(u,\bb v)<\infty
  \end{equation}
  and that there exist constants $C\in(0,\infty)$ and $\kappa>0$ such that
  \begin{equation}\label{eq:cond2}
  \sup_T\E\left|Q_T^2(u_1,\bb v_1) - Q_T^2(u_2,\bb v_2) \right|\leq C \| (u_1,\bb v_1^\top)^\top - (u_2,\bb v_2^\top)^\top \|^{\kappa}.
  \end{equation}

\noindent{\it Condition \eqref{eq:cond1}:} Denote
\[
 \delta_t(u,\bb v)=\left[u^{Y_t} - g(u,{\lambda}_t(\tth_0))\right]\alpha_t(\tth_0,\bb v) +  \bb s_t(\tth_0,Y_t,\mathcal{I}_{t-1})^\top \bb\beta(u,\bb v).
 \]
Then $\{ \delta_t(u,\bb v)\}_{t\in\mathbb{Z}}$ is a sequence of stationary martingale differences with respect to the filtration $\{\mathcal{I}_t\}$, 
$Q_T(u,\bb v) = T^{-1/2} \sum_{t=2}^T  \delta_t(u,\bb v)$,  and
\[
\E Q_T^2(u,\bb v) = \mathsf{Var} Q_T(u,\bb v)= \frac{1}{T} \sum_{t=2}^T  \E[\delta_t(u,\bb v)]^2 =  \E[\delta_1(u,\bb v)]^2
\]
due to the property of martingale differences.
For any $u\in[0,1]$ and $\bb v\in\R^d$, it holds that $|u^{Y_t}|\leq 1$, $|g(u,\lambda(\tth_0))|\leq 1$, $|\alpha_t(\tth_0,\bb v)|\leq 2$, and it follows from \ref{as:g} and \ref{as:lambda} that $\| \bb \beta(u,\bb v)\|\leq H_1$ for some constant $H_1$. So together
\begin{equation*} 
|\delta_t(u,\bb v)|\leq 4+\|\bb s_t(\tth_0,Y_t,\mathcal{I}_{t-1})\| H_1,
\end{equation*}
and it follows from the properties of  $s_t(\tth_0,Y_t,\mathcal{I}_{t-1})$ stated in Assumption~\ref{as:est} that
\[
\E[\delta_t(u,\bb v)]^2 \leq  2\E \bigl(16+H_1^2 \|\bb s_t(\tth_0,Y_t,\mathcal{I}_{t-1})\|^2\bigr)<H_2 
\]
for some finite constant $H_2$.
The condition \eqref{eq:cond1} then follows from the properties of the weight function $W$. 

\noindent{\it Condition \eqref{eq:cond2}:} The stationarity of $\{ \delta_t(u,\bb v)\}$, and the the property of martingale differences imply that
\[
\E \bigl[Q_T(u_1,\bb v_1) -Q_T(u_2,\bb v_2)\bigr]^2 = \E\left[\delta_1(u_1,\bb v_1)-\delta_1(u_2,\bb v_2)\right]^2.  
 \]
 In the following let $\bb s_t:=\bb s_t(\tth_0,Y_t,\mathcal{I}_{t-1})$.
 It follows from the definition of $\bb \beta(u,\bb v)$ that  
 \begin{align*}
 \frac{\partial \delta_t(u,\bb v)}{\partial u} &= Y_t u^{Y_t-1}\alpha_t(\tth_0,\bb v) - \bb s_t^\top \E\left\{ \frac{\partial^2 g(u,z)}{\partial u \partial z}\big\vert_{z=\lambda_1(\tth_0)} \frac{\partial \lambda_1(\tth_0)}{\partial \tth} \alpha(\bb Z_1,\bb v_1)\right\},\\
 \frac{\partial \delta_t(u,\bb v)}{\partial \bb v}  &= \left[u^{Y_t} - g(u,{\lambda}_t(\tth_0))\right]\frac{\partial \alpha_t(\tth_0,\bb v)}{\partial \bb v} \\
 &- \bb s_t^\top \E \left\{\frac{\partial g(u,z)}{\partial z}\vert_{z=\lambda_1(\tth_0)} \frac{\partial \lambda_1(\tth_0)}{\partial \tth} \frac{\partial \alpha(\bb Z_1,\bb v_1)}{\partial \bb v}\right\},
 \end{align*}
 so for all $u\in[0,1]$ and $\bb v\in\R^d$.
 Observe  that
\[
\left|\bb s_t^\top \E\left\{ \frac{\partial^2 g(u,z)}{\partial u \partial z}\big\vert_{z=\lambda_1(\tth_0)} \frac{\partial \lambda_1(\tth_0)}{\partial \tth} \alpha(\bb Z_1,\bb v_1)\right\}\right|\leq 2\|\bb s_t\| \E \left\{ \bigl|H(\lambda_1(\tth_0))\bigr| \left\|\frac{\partial \lambda_1(\tth_0)}{\partial \tth} \right\|\right\}
\]
and
\[
\E \left\{ \bigl|H(\lambda_1(\tth_0))\bigr| \left\|\frac{\partial \lambda_1(\tth_0)}{\partial \tth} \right\|\right\} \leq \sqrt{\E H^2(\lambda_1(\tth_0)) } \sqrt{\E \left\|\frac{\partial \lambda_1(\tth_0)}{\partial \tth} \right\|^2} \leq H_3
\]
for some finite constant $H_3$ due to the Cauchy--Schwartz inequality and Assumption \ref{as:lambda}. Thus,
\[
\left| \frac{\partial \delta_t(u,\bb v)}{\partial u} \right|\leq 2 |Y_t| + 2 H_3 \|\bb s_t\|. 
\]
Similarly,
 \[
  \left\|\ \frac{\partial \delta_t(u,\bb v)}{\partial \bb v} \right\|\leq 4\|\bb Z_t\| + 2\|\bb s_t\| M_1 \E\left\{ \left\|\frac{\partial \lambda_1(\tth_0)}{\partial \tth} \right\| \|\bb Z_1\|\right\},
 \]
  where $M_1$ is from \ref{as:g},  and 
  \[
  \E\left\{ \left\|\frac{\partial \lambda_1(\tth_0)}{\partial \tth} \right\| \|\bb Z_1\|\right\}<H_4
  \]
  for some finite constant $H_4$ due to the Cauchy--Schwartz inequality and assumptions \ref{as:lambda} and \ref{as:stat}. Hence,  the mean value theorem implies that
 \[
\E \bigl[Q_T(u_1,\bb v_1) -Q_T(u_2,\bb v_2)\bigr]^2\leq C_1 \|(u_1,\bb v_1^\top)^\top -(u_2,\bb v_2^\top)^\top  \|^2.
\]
To prove \eqref{eq:cond2} we use again the Cauchy--Schwartz inequality to see that
\begin{align*}
\E |Q_T^2(u_1,\bb v_1) -Q_T^2(u_2,\bb v_2)| &\leq \sqrt{\E |Q_T(u_1,\bb v_1) -Q_T(u_2,\bb v_2)|^2} \sqrt{\E |Q_T(u_1,\bb v_1) +Q_T(u_2,\bb v_2)|^2}\\
&\leq C_2  \|(u_1,\bb v_1^\top)^\top -(u_2,\bb v_2^\top)^\top  \|.
\end{align*}
 for some finite constant $C_2$.

As the conditions \eqref{eq:cond1}--\eqref{eq:cond2} are satisfied, the convergence in \eqref{eq:convF} holds for any compact $F\subset \R^d$. 
We showed 
that $\E Q_T^2(u,\bb v) < H_2$, for a finite constant $H_2$  for all $u\in[0,1]$ and $\bb v \in\R^d$. Also recall that the function $\omega(\bb v)$ is integrable on $\R^{d}$, and therefore for any $\eps>0$ there exists a compact set $F_{\eps}$ such that 
\[
\E \int_0^1 \int_{\R^d\setminus F_{\eps}} |Q_T(u,\bb v)|^2 w(u) \omega(\bb v) \mathrm{d} u \mathrm{d} \bb v
\leq H_2 \int_0^1 w(u)  \mathrm{d} u 
 \int_{\R^d\setminus F_{\eps}}\omega(\bb v)\mathrm{d} \bb v
<\eps.
\]
Analogous arguments apply also to the process 
$\{\mathcal{Z}(u,\bb v), u\ in[0,1], \bb v \in \R^d\}$, so  also $\E \int_0^1 \int_{\R^d\setminus F_{\eps}} |\mathcal{Z}(u,\bb v)|^2 w(u) \omega(\bb v) \mathrm{d} u \mathrm{d} \bb v <\eps$. This finishes the proof. \hfill \qed


\medskip

\noindent{\it Proof of Theorem~\ref{th1-AR}}

We  sketch the proof solely for $p=1$ while  the situation for $p>1$ can be treated in a completely analogous way.  In the current setup $\lambda_t(\tth)=r(Y_{t-1};\tth)+\pi(\bb X_{t-1};\tth)$, so there is no need to define $\widetilde{\lambda}_t(\tth)$. Moreover,  $\bb Z_t=(Y_{t-1},\bb X_{t-1}^\top)^\top$ does not depend on $\tth$, so using the notation from the proof of Theorem~\ref{th1} we have
\[
\Delta_{T,W} = \int_{0}^1 \int_{\R^d} |B(u,\bb v, \widehat{\tth})|^2 w(u)\omega(\bb v )\mathrm{d}\bb v \mathrm{d}u +o_{\mathsf P}(1).
\] 
Hence, the Taylor expansion is used directly on $B(u,\bb v, \tth)$ and a careful inspection of  the proof of Theorem~\ref{th1} reveals under the assumptions of  Theorem~\ref{th1-AR} it holds that
\[
B(u,\bb v, \widehat{\tth}) = Q(u,\bb v)+R_T(u,\bb v), 
\]
where $R_T(u,\bb v) = o_P(1)$ uniformly in $u\in[0,1]$ and $\bb v \in \R^d$. The rest of the proof then proceeds along the same lines. 
\hfill \qed

\medskip

\noindent{\it Proof of Theorem~\ref{th3}.}
Recall that we use $g(u,\lambda)$ for the PGF $g_{\lambda}(u)$ of $\mathcal{F}_{\lambda}$ and similarly, let $g^A(u,\lambda) = g^A_{\lambda}(u)$ stand for the PGF of $\mathcal{F}^A_{\lambda}$.
It is shown in the proof of Theorem~\ref{th1} that 
\begin{equation}\label{eq:DeltaALT1}
\frac{1}{T}\Delta_{T,W} = \int_{0}^1 \int_{\R^d} |\widetilde{C}(u,\bb v, \widehat{\tth})|^2 w(u)\omega(\bb v )\mathrm{d}\bb v \mathrm{d}u +o_{\mathsf P}(1)
\end{equation}
where $\widetilde{C}(u,\bb v,{\tth}) = T^{-1/2} \widetilde{B}(u,\bb v, \tth)$ for $\widetilde{B}$ given in \eqref{delta-p2}. The Taylor expansion of $\widetilde{C}(u,\bb v, \widehat{\tth})$ in $\tth_0$ gives
\[
\widetilde{C}(u,\bb v, \widehat{\tth}) = \widetilde{C}(u,\bb v, {\tth}_0)  + \frac{\partial \widetilde{C}(u,\bb v,{\tth})}{\partial \tth}\vert_{\tth=\tth^*} (\widehat{\tth}-\tth_0),
\]
where $\tth^*$ is a point between $\tth_0$ and $\widehat{\tth}$. A careful inspection of the proof of Theorem~\ref{th1}  reveals that under the required assumptions 
\[
\frac{1}{T}\Delta_{T,W} = \int_0^1 \int_{\R^d} \Bigl|C_T(u,\bb v,\tth_0) \Bigr|^2 w(u)\omega(\bb v) \mathrm{d} \bb v \mathrm{d} u + o_{\mathsf P}(1) 
\]
for  ${C}_T(u,\bb v,\tth_0)=\frac{1}{T} \sum_{t=2}^T [u^{Y_t}-g(u,\lambda_t(\tth_0))] \alpha(\bb v,\bb Z_t(\tth_0))$.
Hence due to stationarity and ergodicity we have 
\begin{align*}
C_T(u,\bb v, \tth_0)\Pto &\E [u^{Y_1} \alpha(\bb v,\bb Z_1(\tth_0)) ]-\E[g(u,\lambda_1(\tth_0)) \alpha(\bb v,\bb Z_1(\tth_0))]\\
 =&
  \E [\E u^{Y_1} \alpha(\bb v,\bb Z_1(\tth_0)) | \mathcal{I}_{0}] - \E[g(u,\lambda_1(\tth_0)) \alpha(\bb v,\bb Z_1(\tth_0))] \\
  =& \E\left\{g^A(u,\lambda_1(\tth_0)) - g(u,\lambda_1(\tth_0)) ]\alpha(\bb v,\bb Z_1(\tth_0))\right\}.
\end{align*}
\hfill \qed

\noindent{\it Proof of Theorem \ref{th4}.} 
Observe that in the considered situation $\widehat{\bb Z}_{t,0}=\bb Z_{t,0}$ and  $\{{\bb Z}_{t} \}_{t=p_0+1}^{\infty}$ is a stationary and ergodic sequence. Define 
\[
\lambda_{t,0}(\tth)= r_0(\bb Y_{t-1:t-p_0}; \tth) + \pi_0(\bb X_{t-1};\tth), \quad t\geq p_0+1.
\]
Then $\{\lambda_{t,0}(\tth)\}_{t=p_0+1}^{\infty}$ is also stationary for any $\tth$ and ergodic for $\tth\in\mathcal{V}(\tth_0)$. 
Since $\lambda_{t,0}$ does not depend on its lagged values,  \eqref{eq:DeltaALT1} holds with $\widetilde{C}_T(u,\bb v,\widehat{\tth})$ replaced by ${C}_T(u,\bb v,\widehat{\tth})$ for
\begin{equation*} 
{C}_T(u,\bb v,\tth)=\frac{1}{T}\sum_{t=2}^T \left[u^{Y_t} - g(u,\lambda_{t,0}(\tth))\right]\alpha\bigl({\bb Z}_{t,0},\bb v\bigr).
\end{equation*}
The Taylor expansion 
gives
\[
C_T(u,\bb v, \widehat{\tth}) = C_T(u,\bb v, {\tth}_0)  + \frac{\partial C_T(u,\bb v,{\tth})}{\partial \tth}\vert_{\tth=\tth^*} (\widehat{\tth}-\tth_0),
\]
where $\tth^*$ is a point between $\tth_0$ and $\widehat{\tth}$.
Here $C_T(u,\bb v, {\tth}_0)$ is a sum of stationary and ergodic variables such that
\begin{align*}
C_T(u,\bb v, {\tth}_0)& \Pto \E\left[u^{Y_1} - g(u,\lambda_{1,0}(\tth_0))\right]\alpha\bigl({\bb Z}_{1,0},\bb v\bigr) \\
&=  \E\left\{\left[ g(u,\lambda_{1,A})-g(u,\lambda_{1,0}(\tth_0))\right]\alpha({\bb Z}_{1,0},\bb v)\right\}
\end{align*}
as $T\to\infty$ and the convergence holds uniformly in $u$ and $\bb v$. 
Furthermore, 
\begin{align*}
\frac{\partial C_T(u,\bb v,{\tth})}{\partial \tth} = -\frac{1}{T}\sum_{t=2}^T \frac{\partial g(u,z)}{\partial z} \vert_{z=\lambda_{t,0}(\tth)} \frac{\partial \lambda_{t,0}(\tth)}{\partial \tth} \alpha(\bb Z_{t,,0}\bb v),
\end{align*}
where
\[
\frac{\partial \lambda_{t,0}(\tth)}{\partial \tth} = \frac{\partial  h_0(\bb Y_{t-1:t-p_0},\bb X_{t-1}; \tth)}{\partial \tth}. 
\]
It  follows from \eqref{eq:as_g_H1}  and from the properties of $h_0$ that
 $\bigl|{\partial C_T(u,\bb v,{\tth})}/{\partial \tth}\vert_{\tth=\tth^*}\bigr| =O_P(1)$ uniformly in $u\in[0,1]$ and $\bb v \in\R^d$.  The assertion then follows from \eqref{eq:conv.th.alt}.
 \hfill \qed

\bigskip

  For the proof of Theorem~\ref{th5} let $h_0$ be the linear function from \eqref{eq:h0_lin} and $\bb \Theta$ the corresponding parametric space. For $\tth \in\bb\Theta$  define recursively
for $t\geq 2$
\begin{equation}\label{eq:lam.tilde}
\wtlambda_{t,0}(\tth)= \omega+\alpha Y_{t-1}+\beta \wtlambda_{t-1,0}(\tth) +\tgamma^\top\tX_{t-1}
\end{equation}
for some initial $\wtlambda_{1,0} =l_1$.  Let
\[
\widetilde{\bb Z}_{t,0} (\tth)= (Y_{t-1}, \wtlambda_{t,0}(\tth),\bb X_{t-1}^\top)^\top.
\]

\begin{lemma}
Let $\{Y_t\}$ and $\{\tX_t\}$ be stationary and ergodic sequences, $l_1\geq 0$, and $\wtlambda_{t,0}(\tth)$ be defined by \eqref{eq:lam.tilde} for $\tth\in\bb\Theta_0$, where   $\bb \Theta_0$ is a compact subset of $\bb \Theta$.
Then
\[
\wtlambda_{t,0}(\tth) = \omega \frac{1-\beta^{t-1}}{1-\beta}+\sum_{j=0}^{t-2}\beta^j [\alpha Y_{t-1-j}+\tgamma^\top\tX_{t-1-j}] + \beta^{t-1} l_1. 
\]
Furthermore, the derivatives of $\widetilde{\lambda}_{t,0}(\tth)$ are
\begin{align*}
\frac{\partial \widetilde{\lambda}_{t,0}(\tth)}{\partial \omega} &=  \frac{1-\beta^{t-1}}{1-\beta},\\
\frac{\partial \widetilde{\lambda}_{t,0}(\tth)}{\partial \alpha} &=  Y_{t-1} +\beta \frac{\partial \widetilde{\lambda}_{t-1,0}(\tth)}{\partial \alpha} = \sum_{j=0}^{t-2} \beta^j Y_{t-1-j},\\
\frac{\partial \widetilde{\lambda}_{t,0}(\tth)}{\partial \tgamma} &=  \tX_{t-1} +\beta \frac{\partial \widetilde{\lambda}_{t-1,0}(\tth)}{\partial \tgamma} = \sum_{j=0}^{t-2} \beta^j \tX_{t-1-j},\\
\frac{\partial \widetilde{\lambda}_{t,0}(\tth)}{\partial \beta} &= \widetilde{\lambda}_{t-1,0}(\tth)+\beta \frac{\partial \widetilde{\lambda}_{t-1,0}(\tth)}{\partial \beta} 
= \sum_{j=0}^{t-2} \beta^j  \widetilde{\lambda}_{t-1-j,0}(\tth),
\end{align*}
and it holds that
\begin{equation}\label{eq:lam.T.1}
\sup_{\tth\in\bb\Theta_0} \frac{1}{T} \sum_{t=2}^T   \widetilde{\lambda}_{t,0}(\tth)= O_{\mathsf P}(1). 
\end{equation}
and
\begin{equation}\label{eq:lam.T.2}
\sup_{\tth\in\bb\Theta_0}\frac{1}{T} \sum_{t=2}^T \frac{\partial \widetilde{\lambda}_{t,0}(\tth)}{\partial \tth}= O_{\mathsf P}(1). 
 \end{equation}
\end{lemma}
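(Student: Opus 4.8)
The plan is to (i) derive the closed form for $\wtlambda_{t,0}(\tth)$ by iterating the linear recursion \eqref{eq:lam.tilde}; (ii) obtain the four derivative formulas by unrolling the first-order recursions that result from differentiating \eqref{eq:lam.tilde} componentwise; and (iii) establish \eqref{eq:lam.T.1}--\eqref{eq:lam.T.2} by dominating the relevant truncated geometric sums by stationary integrable sequences and applying Birkhoff's ergodic theorem. Throughout I would use that $\bb\Theta_0$ is compact with $\bb\Theta_0\subset\bb\Theta$, so there are finite constants $\bar\omega,\bar\alpha,\bar\gamma$ and a constant $\bar\beta\in(0,1)$ with $0<\omega\le\bar\omega$, $\alpha\le\bar\alpha$, $\|\tgamma\|\le\bar\gamma$ and $\beta\le\bar\beta<1$ for every $\tth=(\omega,\alpha,\beta,\tgamma^\top)^\top\in\bb\Theta_0$; the bound $\bar\beta<1$ is the crucial one and it is available because the coordinate map $\tth\mapsto\beta$ is continuous, takes values $<1$ on $\bb\Theta$, and $\bb\Theta_0$ is compact. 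I would also use $\E Y_1<\infty$ and $\E\|\tX_1\|<\infty$, which hold under Assumption~\ref{as:stat} since $\tZ_t$ contains $Y_{t-1}$ and $\tX_{t-1}$ and $\E\|\tZ_t\|^2<\infty$.

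For (i)--(ii), I would iterate \eqref{eq:lam.tilde}: collecting the geometric series $\omega\sum_{k=0}^{t-2}\beta^{k}=\omega(1-\beta^{t-1})/(1-\beta)$ produces the asserted formula, whose correctness is confirmed against \eqref{eq:lam.tilde} by a one-line induction on $t$ (base case $t=2$ being immediate). For the derivatives I would differentiate \eqref{eq:lam.tilde} componentwise, using $\partial\wtlambda_{1,0}(\tth)/\partial\tth=\bb0$ since $l_1$ does not depend on $\tth$, which gives the first-order recursions $\partial\wtlambda_{t,0}/\partial\omega=1+\beta\,\partial\wtlambda_{t-1,0}/\partial\omega$, $\partial\wtlambda_{t,0}/\partial\alpha=Y_{t-1}+\beta\,\partial\wtlambda_{t-1,0}/\partial\alpha$, $\partial\wtlambda_{t,0}/\partial\tgamma=\tX_{t-1}+\beta\,\partial\wtlambda_{t-1,0}/\partial\tgamma$ and $\partial\wtlambda_{t,0}/\partial\beta=\wtlambda_{t-1,0}+\beta\,\partial\wtlambda_{t-1,0}/\partial\beta$; unrolling these yields the four stated expressions.

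For (iii), the workhorse is the elementary interchange of summations
\[
\sum_{t=2}^{T}\sum_{j=0}^{t-2}\bar\beta^{j}a_{t-1-j}=\sum_{s=1}^{T-1}a_s\sum_{k=0}^{T-1-s}\bar\beta^{k}\le\frac{1}{1-\bar\beta}\sum_{s=1}^{T-1}a_s ,\qquad a_s\ge0 .
\]
From the closed form and the compactness bounds one gets $\sup_{\tth\in\bb\Theta_0}|\wtlambda_{t,0}(\tth)|\le\frac{\bar\omega}{1-\bar\beta}+l_1+\sum_{j=0}^{t-2}\bar\beta^{j}W_{t-1-j}$ with $W_s:=\bar\alpha Y_s+\bar\gamma\|\tX_s\|$, so by the interchange
\[
\sup_{\tth\in\bb\Theta_0}\frac1T\Bigl|\sum_{t=2}^{T}\wtlambda_{t,0}(\tth)\Bigr|\le\frac{\bar\omega}{1-\bar\beta}+l_1+\frac{1}{1-\bar\beta}\cdot\frac1T\sum_{s=1}^{T-1}W_s ,
\]
and $\frac1T\sum_{s=1}^{T-1}W_s\asto\bar\alpha\E Y_1+\bar\gamma\E\|\tX_1\|<\infty$ by the ergodic theorem, giving \eqref{eq:lam.T.1}. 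For \eqref{eq:lam.T.2} I would bound the four blocks of $\partial\wtlambda_{t,0}(\tth)/\partial\tth$ separately: the $\omega$-block by the deterministic constant $1/(1-\bar\beta)$; the $\alpha$- and $\tgamma$-blocks exactly as above with $W_s$ replaced by $Y_s$ and $\|\tX_s\|$; and the $\beta$-block via $\sup_{\tth}|\partial\wtlambda_{t,0}(\tth)/\partial\beta|\le\sum_{j=0}^{t-2}\bar\beta^{j}\sup_{\tth}|\wtlambda_{t-1-j,0}(\tth)|$, so that, by the interchange again, $\sup_{\tth}\frac1T\sum_{t=2}^{T}|\partial\wtlambda_{t,0}(\tth)/\partial\beta|\le(1-\bar\beta)^{-1}\cdot\frac1T\sum_{s=1}^{T-1}\sup_{\tth}|\wtlambda_{s,0}(\tth)|$, which is $O_{\mathsf P}(1)$ by the bound just established. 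Since $\|\partial\wtlambda_{t,0}(\tth)/\partial\tth\|$ is controlled by the sum of the moduli of its components, combining the blocks gives \eqref{eq:lam.T.2}.

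Steps (i) and (ii) are routine bookkeeping. The only point requiring a little care in (iii) — and the nearest thing to an obstacle in an otherwise elementary lemma — is the passage from the finite, length-$t$ geometric sums to quantities to which the ergodic theorem applies, which the interchange above handles, together with keeping every bound uniform over $\bb\Theta_0$; the latter rests entirely on the compactness-induced inequality $\bar\beta<1$, without which the geometric decay, and hence the domination by integrable stationary sequences, would be lost.
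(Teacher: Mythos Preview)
Your proposal is correct and follows essentially the same route as the paper: iterate the recursion for the closed form, differentiate it componentwise for the derivatives, then swap the order of summation to bound the truncated geometric sums by ergodic averages, handling the $\beta$-derivative by feeding the already-established bound on $\wtlambda_{t,0}$ back into the same interchange. Your explicit use of compactness to produce the uniform constants $\bar\omega,\bar\alpha,\bar\gamma$ and especially $\bar\beta<1$, and your noting that $\E Y_1,\E\|\tX_1\|<\infty$ is what makes the ergodic limit finite, are exactly the ingredients the paper relies on (implicitly) as well.
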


\begin{proof}
The expressions for $\wtlambda_{t,0}(\tth)$ and its derivatives follow from direct computations. 
To prove \eqref{eq:lam.T.1} observe that  
\[
 \frac{1}{T} \sum_{t=2}^T   \widetilde{\lambda}_{t,0}(\tth) = \frac{\omega}{1-\beta} -  \frac{\omega}{1-\beta}  \frac{1}{T}\sum_{t=2}^T \beta^{t-1} + \frac{1}{T}\sum_{t=2}^T \sum_{j=0}^{t-2} \beta^j [\alpha Y_{t-1-j}+\tgamma^\top\tX_{t-1-j}] 
 + l_1 \frac{1}{T}\sum_{t=2}^T \beta^{t-1}. 
\]
Clearly then,
\[
 \frac{\omega}{1-\beta}  \frac{1}{T}\sum_{t=2}^T \beta^{t-1}  \to 0 \quad \text{ and } \quad l_1 \frac{1}{T}\sum_{t=2}^T \beta^{t-1} \to 0
\]
as $T\to\infty$. Furthermore, notice that
\[
 \frac{1}{T}\sum_{t=2}^T \sum_{j=0}^{t-2} \beta^j \alpha Y_{t-1-j} = \alpha \frac{1}{T}\sum_{k=1}^{T-1} Y_k \sum_{j=0}^{T-k-1} \beta^j = \alpha \frac{1}{T}\sum_{k=1}^{T-1} Y_k  \frac{1-\beta^{T-k}}{1-\beta} \leq \frac{\alpha}{1-\beta}\frac{1}{T}\sum_{k=1}^T Y_k. 
\]
The term with $\tgamma^\top\tX_{t-1-j}$ can be treated analogously. 
Since $\bb \Theta_0$ is compact and 
$\{Y_t\}$ and $\{\tX_t\}$ are ergodic, we get \eqref{eq:lam.T.1}. The previous also implies that  \eqref{eq:lam.T.2}  holds for the derivatives with respect to $\omega$, $\alpha$ and $\tgamma$ respectively, so 
  it suffices to show the claim for the derivative with respect to $\beta$. To this end observe that, 
  \[
  \frac{1}{T}\sum_{t=2}^T \frac{\partial \widetilde{\lambda}_{t,0}(\tth)}{\partial \beta}  =   \frac{1}{T}\sum_{t=2}^T \sum_{j=0}^{t-2} \beta^j \widetilde{\lambda}_{t-1-j,0}(\tth)  \leq \frac{1}{1-\beta} \frac{1}{T} \sum_{k=1}^{T-1} \widetilde{\lambda}_{k,0}(\tth),
  \]
where we used the same arguments as before. Thus,  \eqref{eq:lam.T.2}  follows from  \eqref{eq:lam.T.1}.  

\end{proof}

\begin{lemma}\label{lem:lamK} Let $\{Y_t\}$ and $\{\bb X_t\}$ satisfy \ref{as:stat}.
Let $\bb \Theta_0$ be a compact subset of $\bb \Theta$. For  $\tth \in\bb \Theta_0$ define 
\[
{\lambda}_{t,0}(\tth)= 
\frac{\omega}{1-\beta} +\sum_{j=0}^{\infty} \beta^j [\alpha Y_{t-1-j}+\tgamma^\top\tX_{t-1-j}]. 
\]
Then ${\lambda}_{t,0}(\tth)$ is finite with probability 1 and the sequence $\{{\lambda}_{t,0}(\tth) \} $ is stationary and ergodic. Moreover,
there exists $\rho\in(0,1)$ such that
\[
\sup_{\tth \in \bb \Theta_0}|\widetilde{\lambda}_{t,0}(\tth) - \lambda_{t,0}(\tth)|\leq W_t \rho^t, 
\] 
where $W_t$ is a positive random variable such that $\E W_t^{\kappa}<\infty$ for some $\kappa>0$. 
\end{lemma}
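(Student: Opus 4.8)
The plan is to push everything through the uniform-over-$\bb\Theta_0$ bounds coming from compactness. Since $\bb\Theta_0$ is compact and $\tth\mapsto(\omega,\alpha,\beta,\|\tgamma\|)$ is continuous, I set $\bar\omega=\sup_{\tth\in\bb\Theta_0}\omega<\infty$, $\bar\alpha=\sup_{\tth\in\bb\Theta_0}\alpha<\infty$, $\bar\gamma=\sup_{\tth\in\bb\Theta_0}\|\tgamma\|<\infty$ and, crucially, $\bar\beta=\sup_{\tth\in\bb\Theta_0}\beta<1$ because $\beta<1$ throughout the compact set $\bb\Theta_0\subset\bb\Theta$. Assumption~\ref{as:stat} provides $\E\|\tZ_t\|^2<\infty$, hence $\E|Y_1|<\infty$ and $\E\|\tX_1\|<\infty$, and also that $\{(Y_t,\tX_t^\top)^\top\}$ is strictly stationary and ergodic (Assumption~\ref{as:stat} and the remark after it). For finiteness I would dominate
\[
\sup_{\tth\in\bb\Theta_0}\sum_{j=0}^\infty\beta^{\,j}\big|\alpha Y_{t-1-j}+\tgamma^\top\tX_{t-1-j}\big|\le \sum_{j=0}^\infty\bar\beta^{\,j}\big(\bar\alpha|Y_{t-1-j}|+\bar\gamma\|\tX_{t-1-j}\|\big)=:S_t,
\]
and note that by stationarity $\E S_t=\frac{1}{1-\bar\beta}\big(\bar\alpha\,\E|Y_1|+\bar\gamma\,\E\|\tX_1\|\big)<\infty$, so $S_t<\infty$ a.s.; consequently the series defining $\lambda_{t,0}(\tth)$ converges absolutely a.s., uniformly in $\tth\in\bb\Theta_0$, and $\lambda_{t,0}(\tth)$ is finite a.s.

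For stationarity and ergodicity I would observe that $\lambda_{t,0}(\tth)=\Phi\big((Y_{t-1-j},\tX_{t-1-j}^\top)^\top_{j\ge0};\tth\big)$ for the measurable map $\Phi$ defined by the (a.s.\ convergent) series. Since $\Phi$ commutes with the time shift, $\{\lambda_{t,0}(\tth)\}_t$ is a measurable factor of the strictly stationary and ergodic process $\{(Y_t,\tX_t^\top)^\top\}$, and is therefore itself strictly stationary and ergodic; this is the same reasoning already invoked for $\{\bb Z_t\}$ after Assumption~\ref{as:stat}.

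For the exponential bound I would subtract the closed form for $\widetilde{\lambda}_{t,0}(\tth)$ established in the lemma preceding this one from the series for $\lambda_{t,0}(\tth)$, which gives
\[
\widetilde{\lambda}_{t,0}(\tth)-\lambda_{t,0}(\tth)=-\frac{\omega\,\beta^{t-1}}{1-\beta}+\beta^{t-1}l_1-\sum_{j=t-1}^\infty\beta^{\,j}\big[\alpha Y_{t-1-j}+\tgamma^\top\tX_{t-1-j}\big].
\]
Taking absolute values, using $\omega,\alpha,\beta\ge0$ and $l_1\ge0$, the uniform bounds above, and the substitution $j=t-1+i$ (so that $Y_{t-1-j}=Y_{-i}$ and $\tX_{t-1-j}=\tX_{-i}$) in the tail sum, I obtain
\[
\sup_{\tth\in\bb\Theta_0}\big|\widetilde{\lambda}_{t,0}(\tth)-\lambda_{t,0}(\tth)\big|\le \bar\beta^{\,t-1}\Big(\frac{\bar\omega}{1-\bar\beta}+l_1+\sum_{i=0}^\infty\bar\beta^{\,i}\big(\bar\alpha|Y_{-i}|+\bar\gamma\|\tX_{-i}\|\big)\Big)=:\bar\beta^{\,t-1}W,
\]
where $W$ is the bracketed random variable, which does not depend on $t$ and satisfies $0<\E W<\infty$ by stationarity. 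The claim then holds with $\rho=\bar\beta\in(0,1)$, $W_t=W/\bar\beta$ (constant in $t$) and $\kappa=1$.

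The only points demanding care, rather than any real obstacle, are (i) keeping every domination uniform in $\tth\in\bb\Theta_0$, which is precisely what compactness together with $\bar\beta<1$ buys, and (ii) the reindexing of the truncated tail: one must verify that the terms ``lost'' by $\widetilde{\lambda}_{t,0}(\tth)$ relative to $\lambda_{t,0}(\tth)$ correspond exactly to the lags $Y_{-i},\tX_{-i}$ with $i\ge0$, so that the dominating variable $W$ can be taken as a single, $t$-independent, integrable random variable. Beyond first moments of $Y_1$ and $\tX_1$, already guaranteed by $\E\|\tZ_t\|^2<\infty$, and the standard stability of stationary ergodic sequences under measurable maps, no further probabilistic input is needed; in particular no martingale argument is required here.
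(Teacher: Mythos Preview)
Your proposal is correct and follows essentially the same approach as the paper. The paper is terser: for the exponential bound it observes that both $\lambda_{t,0}(\tth)$ and $\widetilde{\lambda}_{t,0}(\tth)$ satisfy the same linear recursion in $t$, so subtracting and iterating gives directly $|\lambda_{t,0}(\tth)-\widetilde{\lambda}_{t,0}(\tth)|=\beta^{t-1}|\lambda_{1,0}(\tth)-l_1|$, after which compactness of $\bb\Theta_0$ yields the uniform bound; your explicit three-term decomposition from the closed forms is just the unpacked version of the same identity, and your dominating variable $W$ is precisely $l_1+\sup_{\tth\in\bb\Theta_0}\lambda_{1,0}(\tth)$ bounded componentwise.
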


\begin{proof}
The finiteness follows together with the ergodicity and stationarity by classical arguments, see \cite[Proposition 3.1.1]{brockwell}. 
The inequality follows from the fact that
\[
\lambda_{t,0}(\tth) = \omega+\alpha Y_{t-1}+\tgamma^\top\tX_{t-1} + \beta \lambda_{t-1,0}(\tth)
\]
and by comparing this with \eqref{eq:lam.tilde} which yields
\[
|{\lambda}_{t,0} (\tth)-\widetilde{\lambda}_{t,0}(\tth) |\leq \beta^{t-1}| \lambda_{1,0}-l_1|,
\]
where $l_1$ is the initial value introduced after \eqref{eq:lam.tilde},
and by the fact that $\bb \Theta_0$ is a compact set. 
\end{proof}

\noindent{\it Proof of Theorem \ref{th5}.}
The test statistic $\Delta_{T,W}$ is of the form
\[
\frac{1}{T}\Delta_{T,W} = \int_{0}^1 \int_{\R^d} |\widetilde{C}(u,\bb v, \widehat{\tth})|^2 w(u)\omega(\bb v )\mathrm{d}\bb v \mathrm{d}u +o_{\mathsf P}(1)
\] 
for
\[
\widetilde{C}_T(u,\bb v,\tth)=\frac{1}{T}\sum_{t=2}^T \left[u^{Y_t} - g(u,\widetilde{\lambda}_{t,0}(\tth))\right]\alpha\bigl( \widetilde{\bb Z}_{t,0}(\tth),\bb v\bigr).
\]
The Taylor expansion of $\widetilde{C}(u,\bb v, \widehat{\tth})$ in $\tth_0$ gives
\[
\widetilde{C}(u,\bb v, \widehat{\tth}) = \widetilde{C}(u,\bb v, {\tth}_0)  + \frac{\partial \widetilde{C}(u,\bb v,{\tth})}{\partial \tth}\vert_{\tth=\tth^*} (\widehat{\tth}-\tth_0),
\]
where $\tth^*$ is a point between $\tth_0$ and $\widehat{\tth}$. Then
\begin{align*}
\left|\frac{1}{T}  \frac{\partial \widetilde{C}(u,\bb v,{\tth})}{\partial \tth}\vert_{\tth=\tth^*}  \right|\leq 2M_1 \sup_{\tth \in \bb \Theta_0} \frac{1}{T} \sum_{t=2}^T \frac{\partial \widetilde{\lambda}_{t,0}(\tth)}{\partial \tth}+4|v_2|  \sup_{\tth \in \bb \Theta_0} \frac{1}{T} \sum_{t=2}^T \frac{\partial \widetilde{\lambda}_{t-1,0}(\tth)}{\partial \tth} = O_{\mathsf P}(1)
\end{align*}
and thus,   
\[
\widetilde{C}(u,\bb v, \widehat{\tth}) = \widetilde{C}(u,\bb v, {\tth}_0)  +R_{1,T}(u,\bb v),
\]
  where $|R_{1,T}(u,v)| \leq \|\bb v\| o_P(1)$. 
Furthermore,
$
 \widetilde{C}(u,\bb v, {\tth}_0) = A_1(u,\bb v) + A_2(u,\bb v),
$
where
\begin{align*}
A_1(u,\bb v)&= \frac{1}{T}\sum_{t=2}^T \left[u^{Y_t} - g(u,\lambda_{t,A})\right]\alpha\bigl( \widetilde{\bb Z}_{t,0}(\tth_0),\bb v\bigr),\\
A_2(u,\bb v)&= \frac{1}{T} \sum_{t=2}^T \left[g(u,\lambda_{t,A}) - g(u,\lambda_{t,0}(\tth_0))\right]\alpha\bigl( \widetilde{\bb Z}_{t,0}(\tth_0),\bb v\bigr).
\end{align*}
Then $T\cdot A_1(u,\bb v)$ is a sum of martingale differences, which are uniformly bounded for $u\in[0,1]$ and $\bb v\in\R^d$, so they have a finite uniformly bounded variance. Therefore, it follows from the property of martingale differences that  $\E A_1(u,\bb v)=0$ and $\mathsf{Var} A_1(u,\bb v) \to 0$, which shows that $A_1(u,\bb v)\Pto 0$ uniformly in $(u,\bb v^\top)^\top \in[0,1]\times \R^d$. 
On the other hand write,
\[
A_2(u,\bb v) =Q_{T}(u,\bb v) + R_{2,T}(u,\bb v),
\]
where
\[
Q_{T}(u,\bb v) =  \frac{1}{T} \sum_{t=2}^T \left[g(u,\lambda_{t,A}) - g(u,\lambda_{t,0}(\tth_0))\right]\alpha\bigl({\bb Z}_{t,0}(\tth_0),\bb v\bigr) 
\]
and
\begin{align*}
R_{2,T}(u,\bb v)&= \frac{1}{T} \sum_{t=2}^T \left[u^{Y_t} - g(u,\lambda_{t,0}(\tth_0))\right]\left[\alpha\bigl( \widetilde{\bb Z}_{t,0}(\tth_0),\bb v\bigr) - \alpha\bigl({\bb Z}_{t,0}(\tth_0),\bb v\bigr)\right]\\
&+\frac{1}{T} \sum_{t=2}^T \left[ g(u,\lambda_{t,0}(\tth_0)) - g(u,\widetilde{\lambda}_{t,0})(\tth_0))\right] \alpha\bigl( \widetilde{\bb Z}_{t,0}(\tth_0),\bb v\bigr), 
\end{align*}
where we used the identity 
\[
\sum_{i}(a_i-\widetilde{b}_i)\widetilde{c}_i = \sum_i(a_i-b_i)c_i + \sum_i (a_i-b_i)(\widetilde{c}_i-c_i) + \sum_i (b_i-\widetilde{b}_i)\widetilde{c}_i.
\]
Now it follows from \eqref{eq:as_g_H1} that
$g(\cdot,z)$ is $M_1$-Lipschitz in $z$, and the function $\alpha(\cdot,\cdot)$ is a sum of cosine and sine terms of a scalar product, and that
$
\bb v^\top \widetilde{\bb Z}_{t,0} (\tth)-\bb v^\top {\bb Z}_{t,0} (\tth) = v_2 \bigl[\widetilde{\lambda}_{t,0} (\tth)-{\lambda}_{t,0} (\tth) \bigr].
$
Hence,
\[
|R_{2,T}(u,\bb v)|\leq (4\|\bb v\| +2M_1 )\frac{1}{T} \sum_{t=2}^T  \big|\widetilde{\lambda}_{t,0} (\tth)-{\lambda}_{t,0}(\tth) \bigr| 
\]
and Lemma~\ref{lem:lamK} together with Lemma \ref{lem:limit} imply that $|R_{2,T}(u,\bb v)|\leq \|\bb v\| o_P(1)$. 
Furthermore,  $\{\lambda_{t,0}(\tth_0)\}$ and $\{{\bb Z}_{t,0} (\tth_0)\}$ are  sequences of stationary and ergodic random variables and hence
\[
Q_{T}(u,\bb v) \Pto \E \left[g(u,\lambda_{1,A}) - g(u,\lambda_{1,0}(\tth_0))\right]\alpha\bigl( {\bb Z}_{1,0}(\tth_0),\bb v\bigr) = \zeta(u,\bb v),
\]
which concludes the proof of the assertion. \hfill \qed

\end{document}